\documentclass[11pt]{article}

\usepackage{amsmath,amssymb,amsthm}
\usepackage{graphicx}

\usepackage{hyperref}

\usepackage[letterpaper, margin=1in]{geometry}

\usepackage{cite}

\usepackage[toc,page]{appendix}

\makeatletter
\newtheorem*{rep@theorem}{\rep@title}
\newcommand{\newreptheorem}[2]{%
	\newenvironment{rep#1}[1]{%
		\def\rep@title{#2 \ref{##1}}%
		\begin{rep@theorem}}%
		{\end{rep@theorem}}}
\makeatother

\newenvironment{lemma-repeat}[1]{\begin{trivlist}
		\item[\hspace{\labelsep}{\bf\noindent Lemma \ref{#1} }]\em }%
	{\end{trivlist}}
\newenvironment{theorem-repeat}[1]{\begin{trivlist}
		\item[\hspace{\labelsep}{\bf\noindent Theorem \ref{#1} }]\em }%
	{\end{trivlist}}

\newcommand*\samethanks[1][\value{footnote}]{\footnotemark[#1]}

\newtheorem{theorem}{Theorem}[section]
\newtheorem{lemma}[theorem]{Lemma}
\newtheorem{corollary}[theorem]{Corollary}
\newtheorem{claim}[theorem]{Claim}
\newtheorem{definition}[theorem]{Definition}

\newtheorem{observation}[theorem]{Observation}

\newtheorem{remark}[theorem]{Remark}

\def\eps{\varepsilon}
\def\poly{poly}

\begin{document}

\newcommand{\ThmSpa}
{
	The number of rounds needed for any protocol to compute the diameter of a network on $n$ nodes and $O(n \log n)$ edges of constant diameter in the $CONGEST$ model is $\Omega(\frac{n}{\log^2{n}})$.
}

\newcommand{\ThmDA}
{
	For all constant $0<\eps<1/2$, the number of rounds needed for any protocol to compute a $(3/2-\varepsilon)$-approximation to the diameter of a sparse network is $\Omega( \frac{ n}{\log^3n})$.
}

\newcommand{\ThmRadius}
{
	For all constant $0<\eps<1/2$, the number of rounds needed for any protocol to compute a $(3/2-\varepsilon)$-approximation to the radius of a sparse network is $\Omega(\frac{ n}{\log^3n})$.
}

\newcommand{\ThmEcc}
{
	For all constant $0<\eps<2/3$, the number of rounds needed for any protocol to compute a $(5/3-\varepsilon)$-approximation of all eccentricities of a sparse network is $\Omega(\frac{ n}{\log^3n})$.
}
\newcommand{\ThmSpanners}
{
	Given an \emph{unweighted} graph $G=V,E$ and a subgraph $H\subset E$ of $G$, the number of rounds needed for any protocol to decide whether $H$ is an $(\alpha,\beta)$-spanner of $G$ in the $CONGEST$ model is $\Omega(\frac{n}{(\alpha+\beta)\log^3{n}})$, for any $\alpha<\beta+1$.
}
\newcommand{\ThmDeg}
{
	The number of rounds needed for any protocol to compute the radius of a sparse network of constant degree in the $CONGEST$ model is $\Omega(n/\log^3{n})$.
}

\begin{titlepage}
	\title{Near-Linear Lower Bounds for Distributed Distance Computations, Even in Sparse Networks}
	\author{Amir Abboud\thanks{Stanford University, Department of Computer Science, \texttt{abboud@cs.stanford.edu}. Supported by Virginia Vassilevska Williams's NSF Grants CCF-1417238 and CCF-1514339, and BSF Grant BSF:2012338.}
		\and Keren Censor-Hillel\thanks{Technion, Department of Computer Science, \texttt{\{ckeren,serikhoury\}@cs.technion.ac.il}. Supported by ISF Individual Research Grant 1696/14.} \and Seri Khoury\samethanks}
	\maketitle
	
	\begin{abstract}

We develop a new technique for constructing sparse graphs that allow us to prove near-linear lower bounds on the round complexity of computing distances in the CONGEST model.
Specifically, we show an $\widetilde{\Omega}(n)$ lower bound for computing the diameter in sparse networks, which was previously known only for dense networks [Frishknecht et al., SODA 2012]. In fact, we can even modify our construction to obtain graphs with constant degree, using a simple but powerful degree-reduction technique which we define.

Moreover, our technique allows us to show $\widetilde{\Omega}(n)$ lower bounds for computing $(\frac{3}{2}-\varepsilon)$-approximations of the diameter or the radius, and for computing a $(\frac{5}{3}-\varepsilon)$-approximation of all eccentricities. For radius, we are unaware of any previous lower bounds. For diameter, these greatly improve upon previous lower bounds and are tight up to polylogarithmic factors [Frishknecht et al., SODA 2012], and for eccentricities the improvement is both in the lower bound and in the approximation factor [Holzer and Wattenhofer, PODC 2012].

Interestingly, our technique also allows showing an almost-linear lower bound for the verification of $(\alpha,\beta)$-spanners, for $\alpha < \beta+1$.

	\end{abstract}

\thispagestyle{empty}
\end{titlepage}

\section{Introduction}

The diameter and radius are two basic graph parameters whose values play a vital role in many applications. In distributed computing, these parameters are even more fundamental, since they capture the minimal number of rounds needed in order to send a piece of information to all the nodes in a network. Hence, understanding the complexity of computing these parameters is central to distributed computing, and has been the focus of many studies in the CONGEST model of computation, where in every round each of $n$ nodes may send messages of up to $O(\log{n})$ bits to each of its neighbors. Frischknecht et al.~\cite{FrischknechtHW12} showed that the diameter is surprisingly hard to compute: $\widetilde{\Omega}(n)$ rounds are needed even in networks with constant diameter.\footnote{The notations $\widetilde{\Omega}$ and $\widetilde{O}$ hide factors that are polylogarithmic in $n$. } This lower bound is nearly tight, due to an $O(n)$ upper bound presented by~\cite{PelegRT12} to compute all pairs shortest paths in a network.
Naturally, approximate solutions are a desired relaxation, and were indeed addressed in several cornerstone studies~\cite{HolzerPRW14,PelegRT12,HolzerW12,LenzenP13,FrischknechtHW12}, bringing us even closer to a satisfactory understanding of the time complexity of computing distances in distributed networks. However, several central questions remained elusive.

\paragraph{Sparse Graphs.} The graphs constructed in~\cite{FrischknechtHW12} have $\Theta(n^2)$ edges and constant diameter, and require any distributed protocol for computing their diameter to spend $\widetilde{\Omega}(n)$ rounds.
Such a high lower bound makes one wonder if the diameter can be computed faster in networks that we expect to encounter in realistic applications.
Almost all large networks of practical interest are very sparse~\cite{SNAP14}, e.g. the Internet in 2012 had $\approx 4$ billions nodes and $\approx 128$ billion edges~\cite{MeuselVLB15}.

The only known lower bound for computing the diameter of a sparse network is obtained by a simple modification to the construction of~\cite{FrischknechtHW12} which yields a much weaker bound of $\widetilde{\Omega}(\sqrt{n})$.
This leaves hope that the $\widetilde{\Omega}(n)$ bound can be beaten significantly in sparse networks.
Our first result is to rule out this possibility.

\begin{theorem}
	\label{thm:sparse}
	\ThmSpa
\end{theorem}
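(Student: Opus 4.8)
The plan is to reduce from two-party set disjointness. Let $\mathrm{DISJ}_N$ denote the problem in which Alice holds $S_A\subseteq[N]$, Bob holds $S_B\subseteq[N]$, and they must decide whether $S_A\cap S_B=\emptyset$; its randomized communication complexity is $\Omega(N)$. I would build, for every pair of inputs $(S_A,S_B)$, a graph $G=G(S_A,S_B)$ on $n=\Theta(N)$ vertices and $O(n\log n)$ edges whose vertex set is partitioned into a part $V_A$ that Alice can construct from $S_A$ alone and a part $V_B$ that Bob can construct from $S_B$ alone, with only $O(\log n)$ edges crossing the cut, and with the property that $\mathrm{diam}(G)$ equals a fixed constant $c$ when $S_A\cap S_B=\emptyset$ and is at least $c+1$ otherwise. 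Given such a family, the standard partition-simulation argument applies: a $T$-round $\mathrm{CONGEST}$ protocol that outputs the diameter lets Alice and Bob exchange only the messages sent across the cut in each round, i.e. $O(T\log^2 n)$ bits in total (also for randomized protocols), from which they learn $\mathrm{diam}(G)$ and hence the answer to $\mathrm{DISJ}_N$. Therefore $T\log^2 n=\Omega(N)=\Omega(n)$, giving $T=\Omega(n/\log^2 n)$.

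The heart of the construction — and the new ingredient — is a sparse \emph{bit-gadget} that replaces the $\Theta(n)$ cut edges of the dense construction of~\cite{FrischknechtHW12} by only $O(\log n)$ of them. For each $e\in[N]$ I create two ``index nodes'' $a_e\in V_A$ and $b_e\in V_B$, always present, so that the vertex set does not depend on the inputs. Writing $e_1\cdots e_{\log N}$ for the binary representation of $e$, I add $2\log N$ ``port'' nodes $\phi_k^0,\phi_k^1$ on Alice's side and $\psi_k^0,\psi_k^1$ on Bob's side, insert the complementing cut edges $\phi_k^b\sim\psi_k^{1-b}$, and connect $a_e$ to $\phi_k^{e_k}$ and $b_e$ to $\psi_k^{e_k}$ for every $k$. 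The three-edge path $a_e\to\phi_k^{e_k}\to\psi_k^{1-e_k}\to b_{e'}$ then exists precisely when $e$ and $e'$ differ in coordinate $k$, so $d(a_e,b_{e'})\le 3$ whenever $e\neq e'$, while $a_e,b_e$ have no such short route through the gadget. Membership is handled by a second, \emph{asymmetric} gadget: nodes $t_A^0,t_A^1$ and $t_B^0,t_B^1$ with cut edges $t_A^0\sim t_B^0$, $t_A^0\sim t_B^1$, $t_A^1\sim t_B^0$ but \emph{not} $t_A^1\sim t_B^1$, together with an edge from $a_e$ to $t_A^{[e\in S_A]}$ and from $b_e$ to $t_B^{[e\in S_B]}$ — the only input-dependent edges of $G$. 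Here the path $a_e\to t_A^{[e\in S_A]}\to t_B^{[e'\in S_B]}\to b_{e'}$ exists unless $e\in S_A$ and $e'\in S_B$. Combining the two gadgets, $d(a_e,b_{e'})\le 3$ unless $e=e'$ \emph{and} $e\in S_A\cap S_B$. Finally I add a low-diameter backbone inside each side — say an apex $s_A$ adjacent to all $a_e$ and all ports, symmetrically $s_B$ — and join the apexes by a short path $s_A\to m_1\to\cdots\to s_B$ whose length is chosen large enough relative to $c$ that the two sides cannot short-circuit the critical pairs; this costs $O(n)$ extra edges and enlarges the cut by $O(1)$.

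The main work, and the step I expect to be the principal obstacle, is the distance analysis. When $S_A\cap S_B=\emptyset$ no pair $a_e,b_e$ is critical, so the two gadgets give every cross pair distance $\le 3$, the backbone puts everything else within $O(1)$, and one verifies $\mathrm{diam}(G)\le c$ for the constant $c$ fixed by the backbone. When $e^\ast\in S_A\cap S_B$, I must show $d(a_{e^\ast},b_{e^\ast})\ge c+1$, and this is delicate precisely because the \emph{shared} port nodes (and the backbone) create many new routes between the two sides: I have to rule out every mixed path from $a_{e^\ast}$ to $b_{e^\ast}$ — through $\phi/\psi$, through $t_A/t_B$, through $s_A$–$m_i$–$s_B$, or any combination — being too short. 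This is what pins down the quantitative choices (the complementing pattern of the $\phi$–$\psi$ cut edges, the asymmetry in the membership gadget, and the exact spacing of $s_A$ from $s_B$). Once these are in place, plugging $N=\Theta(n)$ and cut size $O(\log n)$ into the simulation bound yields $T=\Omega(n/\log^2 n)$; the same skeleton, modified with a larger gadget or a coarser reduction, will subsequently give the $\Omega(n/\log^3 n)$ bounds for the approximate variants.
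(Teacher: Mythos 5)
Your overall framework is exactly the paper's: a reduction from two-party Set-Disjointness in which the graph is split between Alice and Bob with only an $O(\log n)$-edge cut, achieved by a bit-gadget that routes each index node to $\log N$ ``port'' nodes encoding its binary representation, with complementing cross edges $\phi_k^b\sim\psi_k^{1-b}$ (these are precisely the paper's sets $F,T,F',T'$ with edges $f_j\sim t'_j$ and $t_j\sim f'_j$), and the simulation bound $T\cdot O(\log n)\cdot O(\log n)=\Omega(n)$ is the right accounting. However, the part of the construction that is supposed to create the diameter separation --- and which you yourself flag as ``the principal obstacle'' --- is not just unfinished; as specified, it fails. Concretely: because you make the apex $s_B$ adjacent to \emph{all} ports and \emph{all} index nodes $b_{e}$, the path $a_{e^\ast}\to\phi_k^{e^\ast_k}\to\psi_k^{1-e^\ast_k}\to s_B\to b_{e^\ast}$ has length $4$ for every input, independently of the backbone length $L$, so the critical pair can never be pushed beyond distance $4$. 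Meanwhile, in the disjoint case your graph still contains pairs at distance $4$ (for instance $a_e$ and $t_B^1$ when $e\in S_A$: the edge $t_A^1\sim t_B^1$ is absent and no length-$3$ detour is provided; likewise interior backbone nodes $m_j$ once $L\ge 2$). Hence the diameter is $4$ whether or not the sets intersect, and the reduction does not decide Disjointness. Lengthening the backbone cannot repair this, since the port bypass above does not use the backbone; the defect is the adjacency of the apexes to the index nodes.

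The paper resolves exactly this tension by \emph{not} attaching the hub nodes to $L$ and $R$: its nodes $a,b$ are adjacent to the bit-gadget ports and to $\ell_k,\ell_{k+1}$ (resp.\ $r_k,r_{k+1}$) but not to the index nodes, so every $\ell_i$ is at distance $2$ from $a$ and $3$ from $b$, every pair outside $L\times R$ is at distance at most $4$, and any path between a critical pair $\ell_i,r_i$ --- through the bit-gadget, through another index node, or through $a,b$ --- is forced to have length $5$, versus at most $4$ when the sets are disjoint (the paper encodes membership by input-dependent edges $\ell_i\sim\ell_{k+1}$ and $r_i\sim r_{k+1}$ rather than by your constant-size asymmetric $t_A/t_B$ gadget). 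Your membership gadget is a legitimate alternative in principle, but to make it work you would need to redo the auxiliary structure so that (i) the critical pair has no input-independent short bypass, and (ii) every non-critical pair, including ports, $t$-nodes, apexes and backbone vertices, is strictly closer than the critical pair in the disjoint case. Since that case analysis is the entire content of the proof beyond the (standard) simulation argument, and the choices you committed to contradict it, the proposal has a genuine gap.
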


We remark that, as in~\cite{FrischknechtHW12}, our lower bound holds even for networks with constant diameter and even against randomized algorithms.
Throughout the paper we say that a graph on $n$ nodes is sparse if it has $O(n \log{n})$ edges.
Due to simple transformations, e.g. by adding dummy nodes, all of our lower bounds will also hold for the more strict definition of sparse graphs as having $O(n)$ edges, up to a loss of a log factor.

As explained next, the sparsity in our new lower bound construction allows us to extend the result in some interesting ways.

\paragraph{Approximation Algorithms.}
Another important question is whether we can bypass this near-linear barrier if we settle for knowing only an approximation to the diameter.
An $\alpha$-approximation algorithm to the diameter returns a value $\hat{D}$ such that $D \leq \hat{D} \leq \alpha \cdot D$, where $D$ is the diameter of the network.
From~\cite{FrischknechtHW12} we know that $\widetilde{\Omega}(\sqrt{n}+D)$ rounds are needed, even for computing a $(\frac{3}{2}-\varepsilon)$-approximation to the diameter, for any constant $\eps>0$.

Following this lower bound, almost-complementary upper bounds were under extensive research.
It is known that a $\frac{3}{2}$-approximation can be computed in a sublinear number of rounds: Holzer and Wattenhofer~\cite{HolzerW12} showed a $O(n^{3/4}+D)$-round algorithm and (independently) Peleg et al.~\cite{PelegRT12} obtained a $O(D \sqrt{n} \log{n})$ bound, later these bounds were improved to $O(\sqrt{n}\log{n}+D)$ by Lenzen and Peleg~\cite{LenzenP13}, and finally Holzer et al.~\cite{HolzerPRW14} reduce the bound to $O(\sqrt{n\log{n}}+D)$. When $D$ is small, these upper bounds are near-optimal in terms of the round complexity -- but do they have the best possible approximation ratio that can be achieved within a sublinear number of rounds? That is, can we also obtain a $(\frac{3}{2}-\varepsilon)$-approximation in $\widetilde{O}(\sqrt{n}+D)$ rounds, to match the lower bound of~\cite{FrischknechtHW12}?

Progress towards answering this question was made by Holzer and Wattenhofer~\cite{HolzerW12} who showed that any algorithm that needs to decide whether the diameter is $2$ or $3$ has to spend $\widetilde{\Omega}(n)$ rounds.
However, as the authors point out, their lower bound is not robust and does not rule out the possibility of a $(\frac{3}{2}-\varepsilon)$-approximation when the diameter is larger than $2$, or an algorithm that is allowed an additive $+1$ error besides a multiplicative $(\frac{3}{2}-\varepsilon)$ error.

Perhaps the main difficulty in extending the lower bound constructions of~\cite{FrischknechtHW12} and~\cite{HolzerW12} to resolve these gaps was that their original graphs are dense.
A natural way to go from a lower bound construction against exact algorithms to a lower bound against approximations is to subdivide each edge into a path.
When applied to dense graphs, this transformation blows up the number of nodes quadratically, resulting in a $\widetilde{\Omega}(\sqrt{n})$ lower bound~\cite{FrischknechtHW12}.
 Our new sparse construction technique allows us to tighten the bounds and negatively resolve the above question. In particular, we show a $\widetilde{\Omega}(n)$ lower bound for computing a ($\frac{3}{2}-\varepsilon$)-approximation to the diameter.

\begin{theorem}
	\label{thm:DA}
	\ThmDA
\end{theorem}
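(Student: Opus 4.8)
\emph{Proof proposal.}\ I would prove Theorem~\ref{thm:DA} by a reduction from two-party Set Disjointness, using the sparse graph produced by the construction behind Theorem~\ref{thm:sparse} essentially as a black box and then amplifying its diameter gap from additive to multiplicative by a uniform path-subdivision. Recall that the construction of Theorem~\ref{thm:sparse} takes Alice's set $S_A$ and Bob's set $S_B$ over a universe of size $m=\widetilde{\Omega}(n)$ and outputs a connected sparse graph $G$ together with a bipartition $V(G)=V_A\cup V_B$ such that: (a) only $C=\widetilde{O}(1)$ edges cross between $V_A$ and $V_B$; (b) the subgraph induced on $V_A$ (resp.\ $V_B$) is determined by $S_A$ (resp.\ $S_B$) alone; and (c) $\mathrm{diam}(G)=2$ if $S_A\cap S_B=\emptyset$, and $\mathrm{diam}(G)=3$ otherwise, the diameter being attained by a fixed ``critical pair'' $(u,v)$. (Equivalently, one may use any sparse Disjointness-to-diameter gadget of this $2$-vs-$3$ form, e.g.\ two endpoints joined through a common neighbour exactly when the corresponding coordinate is $0$ and otherwise only through a length-$3$ detour; the essential point, which is what Theorem~\ref{thm:sparse} supplies, is that it can be realized sparsely with a polylogarithmic cut.)

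\textbf{Amplification.} Let $k:=\lceil c\log n\rceil$ for a suitable constant $c$, and form $G'$ by replacing every edge of $G$ with an internally disjoint path of length $k$. All distances then scale exactly by $k$, so $\mathrm{diam}(G')=2k$ when $S_A\cap S_B=\emptyset$ and $\mathrm{diam}(G')=3k$ otherwise: the gap is now \emph{multiplicative} of ratio $3/2$. Since $G$ is sparse, $G'$ has only $\Theta(k)$ times as many vertices and edges and is therefore still sparse; and a cut placed at the midpoints of the $C$ subdivided crossing-paths still separates an ``$S_A$-only'' side from an ``$S_B$-only'' side with $C=\widetilde{O}(1)$ crossing edges. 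Note that $\mathrm{diam}(G')=\Theta(\log n)\to\infty$, so this construction additionally rules out any algorithm that incurs an additive error of size $o(\log n)$ on top of the multiplicative factor --- exactly the robustness missing from the $2$-vs-$3$ bound of Holzer and Wattenhofer.

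\textbf{From rounds to communication.} Suppose a (possibly randomized) protocol computes a $(3/2-\eps)$-approximation $\hat D$ to $\mathrm{diam}(G')$ in $T$ rounds. In the disjoint case $\hat D\le(3/2-\eps)\cdot 2k=3k-2\eps k<3k$, while in the non-disjoint case $\hat D\ge\mathrm{diam}(G')=3k$; hence the protocol decides Set Disjointness on the universe of size $m$. Simulating the protocol across the cut converts $T$ CONGEST rounds into $O(T\cdot C\cdot\log n)=\widetilde{O}(T)$ bits of communication, so the $\Omega(m)$ randomized communication lower bound for Set Disjointness forces $T=\widetilde{\Omega}(m)$ (randomized hardness of Disjointness also handles randomized protocols, exactly as in Theorem~\ref{thm:sparse}). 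Re-expressing $m$ in terms of the actual node count $n'$ of $G'$, subdividing by $k=\Theta(\log n)$ inflates the vertex count by a $\Theta(\log n)$ factor, so $m=\widetilde{\Theta}(n)=\widetilde{\Theta}(n'/\log n)$; this is precisely one more logarithm than in Theorem~\ref{thm:sparse} and yields $T=\Omega(n'/\log^{3}n')$.

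\textbf{Main obstacle.} The crux, and the point where sparsity of the Theorem~\ref{thm:sparse} construction is indispensable, is the amplification step: we need a base gadget in which the non-disjoint case pays a \emph{constant-fraction} detour (a genuine $2$-vs-$3$ critical pair, not an additive $+1$), and we need the subdivision factor $k=\Theta(\log n)$ to be affordable. For a dense base graph, subdivision blows up the vertex count quadratically, which is exactly why the naive approach is stuck at $\widetilde{\Omega}(\sqrt n)$; because the Theorem~\ref{thm:sparse} graph already has $\widetilde{O}(n)$ edges, the blow-up costs only a single $\log$ factor. The remaining ingredients --- checking that subdivision preserves sparsity, connectivity and the small cut, the standard across-the-cut simulation, and invoking the Disjointness bound --- are routine.
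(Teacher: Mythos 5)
There is a genuine gap, and it sits exactly at the point you flag as ``the crux'' and then wave away. Your argument needs a sparse, polylog-cut base gadget whose diameter is $2$ in the disjoint case and $3$ otherwise (or, more generally, one in which the non-disjoint case pays a constant-\emph{fraction} detour), and you assert that this is ``what Theorem~\ref{thm:sparse} supplies.'' It is not: the construction behind Theorem~\ref{thm:sparse} yields diameter $4$ versus $5$, a multiplicative gap of only $5/4$. Uniformly subdividing every edge by a factor $k$ scales this to $4k$ versus $5k$, so a $(3/2-\eps)$-approximation distinguishes the two cases only when $3/2-\eps<5/4$, i.e.\ $\eps>1/4$; for the full range $0<\eps<1/2$ claimed in the theorem the reduction fails. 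No $2$-versus-$3$ sparse gadget with an $O(\log n)$ cut is available in the paper (the known $2$-versus-$3$ construction of Holzer and Wattenhofer is dense with a large cut), so the black box you invoke does not exist as stated, and constructing a suitable replacement is precisely the content of the paper's proof rather than a routine preliminary.

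Concretely, the paper does not subdivide uniformly. It removes the hub nodes $a,b$, appends new pendant sets $L',R'$ attached to $L,R$ by paths of length $P$, selectively replaces only certain edges (those from $L,R$ to the bit-gadget and to $\ell_k,r_k$) by length-$P$ paths while keeping the $O(\log k)$ bit-gadget crossing edges as single edges, and adds auxiliary paths from the bit-gadget to $\ell_{k+1},r_{k+1}$. The effect is asymmetric: the critical pairs $(\ell'_i,r'_i)$ are at distance $6P+1$ when coordinate $i$ is in both sets, while every other pair is within $4P+2$, and the ratio $\frac{6P+1}{4P+2}$ tends to $3/2$ as $P$ grows. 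Here $P$ is a \emph{constant} depending on $\eps$ (any $P>\frac{1}{2\eps}-\frac12$ works); the extra $\log$ in the denominator comes from the node count $\Theta(k\log k\cdot P)$ of the stretched graph, not from a $\Theta(\log n)$ subdivision factor. Your choice $k=\Theta(\log n)$ is both unnecessary and, if applied to a graph that already has $\Theta(m\log m)$ edges, would cost an additional $\log$ factor beyond what you account for. The high-level frame (Set-Disjointness, small cut, simulation across the cut) is the same as the paper's, but the amplification step as you describe it does not go through.
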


\paragraph{Radius.}
In many scenarios we want one special node to be able to efficiently send information to all other nodes. In this case, we would like this node to be the one that is closest to every other node, i.e. the \emph{center} of the graph.
The \emph{radius} of the graph is the largest distance from the center, and it captures the number of rounds needed for the center node to transfer a message to another node in the network.
While radius and diameter are closely related, the previous lower bounds for diameter do not transfer to radius and it was conceivable that the radius of the graph could be computed much faster.
Obtaining a non-trivial lower bound for radius has been stated as an open problem in~\cite{HolzerW12}. A third advantage of our technique is that it extends to computing the radius, for which we show that the same strong near-linear barriers above hold.

	\begin{theorem}
		\label{thm:Radius}
		\ThmRadius
	\end{theorem}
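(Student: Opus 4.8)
The plan is to follow the reduction template behind Theorems~\ref{thm:sparse} and~\ref{thm:DA}: encode a two-party set-disjointness instance into a sparse graph whose cut between the two players' vertex sets is narrow, and then invoke the standard simulation, under which a $T$-round $CONGEST$ protocol on such a graph yields a communication protocol exchanging $O(T\cdot c\cdot\log n)$ bits, where $c$ is the number of cut edges. The conceptual difference from diameter is that the radius is a \emph{minimum} of eccentricities --- an ``$\exists$ good center'' quantity --- rather than a maximum over pairs, so the graph must be built so that disjointness of the inputs corresponds to the existence of a single vertex close to everything, while any intersection must simultaneously spoil \emph{every} candidate center.

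I would reduce from set disjointness on almost-linear-size inputs: Alice holds $S_A\subseteq[m]$, Bob holds $S_B\subseteq[m]$ with $m=\widetilde{\Theta}(n)$, and they must decide whether $S_A\cap S_B=\emptyset$; this needs $\Omega(m)$ bits of communication, even for randomized protocols. Reusing the sparse bit-gadget underlying Theorem~\ref{thm:DA} --- in which each index $i\in[m]$ is represented by a node $w_i^A$ on Alice's side and $w_i^B$ on Bob's side, wired (through short paths) to the $\Theta(\log m)$ shared coordinate nodes selected by the binary expansion of $i$, in a way that encodes membership of $i$ in the corresponding player's set --- I would add one distinguished vertex $z$, attached to all of the coordinate nodes. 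For a length parameter $h$ (chosen to open a multiplicative gap, as in the approximation construction) the target behaviour is: (a) $z$ lies within distance $2h$ of every vertex of the graph, and within $2h$ of $w_i^A$ and of $w_i^B$ whenever $i\notin S_A\cap S_B$, but at distance $3h$ from $w_i^A$ and $w_i^B$ when $i\in S_A\cap S_B$; and (b) every other vertex --- the coordinate nodes, the index nodes, and all internal path vertices --- has eccentricity $\ge 3h$ regardless of the inputs. Given (a) and (b): if $S_A\cap S_B=\emptyset$ then $z$ has eccentricity $2h$, so the radius is $2h$; if $S_A\cap S_B\neq\emptyset$ then $z$ is pushed to eccentricity $\ge 3h$ and no other vertex is better, so the radius is $3h$. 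Since $\eps>0$ gives $(3/2-\eps)\cdot 2h<3h$, any $(3/2-\eps)$-approximation of the radius decides disjointness; the parameter balancing used for Theorem~\ref{thm:DA} (a narrow cut, polylogarithmic in $n$, against an almost-linear-size instance) then yields $T=\Omega(n/\log^3 n)$. As before, the graph is sparse and of polylogarithmic diameter, so the bound is not subsumed by an $\Omega(D)$ term, and the argument holds against randomized protocols.

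The routine parts are the shortest-path calculations verifying the distances in (a) and the cut-size accounting. The main obstacle is establishing (b) together with the non-disjoint half of (a): I must show that when some index lies in both $S_A$ and $S_B$, \emph{every} vertex --- including vertices hiding inside the subdivided paths or on Bob's side --- has eccentricity $\ge 3h$, so no spurious center survives. This is precisely the ``for all centers'' strength that makes radius harder than diameter: a vertex within $2h$ of all of $\{w_i^A : i\in[m]\}$ must be within $h$ of essentially every coordinate node (consider two indices with complementary bit patterns), which only $z$ achieves, while the narrow cut forces any vertex on one side to be far from the far side. I expect the delicate point to be routing the membership encoding so that $i\in S_A$ \emph{and} $i\in S_B$ together --- not either one alone --- severs the short $z$-to-$w_i$ path, which likely requires threading the gadget through an additional layer; this is also what I expect to explain the extra logarithmic factor over the exact diameter bound of Theorem~\ref{thm:sparse}.
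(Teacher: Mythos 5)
Your high-level template (a set-disjointness reduction over the sparse bit-gadget, a polylogarithmic communication-cut, and path-stretching to open a $3/2$ gap) matches the paper, but the piece that actually makes \emph{radius} work is missing, and the concrete gadget you propose fails. You posit a single candidate center $z$ adjacent to all coordinate nodes, with $d(z,w_i^A)$ jumping from $2h$ to $3h$ exactly when $i\in S_A\cap S_B$. This cannot happen: each index node $w_i^A$ is joined to its $\Theta(\log m)$ coordinate nodes by \emph{fixed}, input-independent paths (that is precisely what keeps the cut small), so once $z$ is adjacent to every coordinate node, $d(z,w_i^A)$ is at most one plus that fixed path length for every $i$, regardless of the inputs, and the eccentricity of $z$ never sees the intersection. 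Making one vertex's eccentricity compute $\bigvee_i\bigl(S_A[i]\wedge S_B[i]\bigr)$ is essentially the diameter difficulty all over again, and the ``additional layer'' you defer to is the whole problem. Your property (b) is also unsupported: the claim that ``the narrow cut forces any vertex on one side to be far from the far side'' confuses the number of cut edges with distance --- in these constructions the cut has few edges but carries constant-length crossings, so ruling out spurious centers such as the hub nodes or internal path vertices requires a genuine argument.

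The paper resolves both issues differently. It uses $k$ candidate centers rather than one: a pendant vertex $\ell'_i$ for each index $i$, which is a good center if and only if $i$ lies in the intersection (the players add their shortcut paths when the bit is $1$, so the correspondence is ``intersecting $\Rightarrow$ radius $4P+1$, disjoint $\Rightarrow$ radius $\ge 6P+1$,'' the reverse orientation from yours, which is harmless). To force the center into this set, it duplicates the entire stretched gadget into two copies $G'_1,G'_2$ joined \emph{only} through the vertices $\ell'_i$; any vertex inside one copy is then automatically far from the deep vertices of the other copy, so only the $\ell'_i$ can have small eccentricity (Observation~\ref{ObL'}), and what remains is a per-index distance computation for $d(\ell'_i,r'_{(i,1)})$. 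Without the many-centers idea and the two-copy separation (or a substitute for them), your outline does not yet constitute a proof.
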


\paragraph{Eccentricity}
The eccentricity of a node is the largest distance from it. Observe that the diameter is the largest eccentricity in the graph while the radius is the smallest.
As pointed in~\cite{HolzerW12}, given a $(\frac{3}{2}-\varepsilon)$-approximation algorithm to all the eccentricities, we can achieve $(\frac{3}{2}-\varepsilon)$-approximation algorithm to the diameter by a simple flooding. This implies an $\widetilde{\Omega}(\sqrt{n}+D)$ lower bound for any $(\frac{3}{2}-\varepsilon)$-approximation algorithm for computing all the eccentricities. Our construction allows us to improve this result by showing that any algorithm for computing even a $(\frac{5}{3}-\varepsilon)$-approximation to all the eccentricities must spend $\Omega(\frac{n}{\log^3(n)})$ rounds.
This improves both in terms of the number of rounds, and in terms of the approximation factor, which we allow to be even larger.
Interestingly, it implies that approximating all eccentricities is even harder than approximating just the largest or the smallest one.

	\begin{theorem}
		\label{thm:ecc}
		\ThmEcc
	\end{theorem}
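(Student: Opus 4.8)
The plan is to reduce from two-party set disjointness, whose randomized communication complexity on a universe of size $N$ is $\Omega(N)$, reusing the sparse-graph methodology that underlies Theorems~\ref{thm:sparse}--\ref{thm:Radius}. From an instance $(S_A, S_B)$ with $S_A, S_B \subseteq [N]$, Alice and Bob jointly build a sparse graph $G = G(S_A, S_B)$ in which Alice controls one part, Bob the other, and only $O(\log N)$ edges cross the cut between them: each coordinate $i \in [N]$ is represented by a vertex on each side, joined to $O(\log N)$ shared ``bit'' vertices according to the binary encoding of $i$, so that two coordinate-vertices on opposite sides are linked by a short path exactly when their indices differ, and the two sides interact only through these $O(\log N)$ bit vertices. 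A $T$-round protocol for the distributed problem is then simulated by exchanging $O(T \cdot \log N \cdot \log n)$ bits across this cut, so the $\Omega(N)$ communication bound forces $T = \Omega(N / \log^2 n)$; since realizing a multiplicative distance gap in a sparse graph costs an extra $\Theta(\log n)$ factor in the size parameters, exactly as in the passage from Theorem~\ref{thm:sparse} to Theorem~\ref{thm:DA}, we have $N = \Theta(n / \log n)$ and obtain the claimed $\Omega(n / \log^3 n)$.

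The technical heart is to construct $G$ so that a designated vertex $v$ has eccentricity $3$ when $S_A \cap S_B = \emptyset$ and eccentricity $5$ when $S_A \cap S_B \neq \emptyset$ --- this $3$-versus-$5$ gap, of ratio $5/3$, is the only property of $G$ that the reduction uses. I would take the bit-gadget above, wire the set memberships so that the only coordinate whose two element-vertices fail to be ``shortcut'' is one lying in $S_A \cap S_B$, and graft onto it a short tail gadget positioned so that a surviving (non-shortcut) coordinate forces a path of length $5$ out of $v$, whereas if every coordinate is shortcut then every vertex lies within distance $3$ of $v$. Moreover, by adding a fixed pair of antipodal vertices at distance $5$ one keeps the diameter and the radius constant across all instances, which is what makes precise the claim that approximating \emph{all} eccentricities is harder than approximating the largest one (the diameter) or the smallest one (the radius).

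The remaining steps are routine. For the disjoint case one verifies that $v$ is within distance $3$ of every vertex --- the central and bit vertices reach most targets immediately, so only the element-vertices and the tail-gadget vertices need checking. For the intersecting case one exhibits a vertex at distance exactly $5$ from $v$ and rules out any path of length at most $4$ between them, a finite case analysis over the few path shapes the bit-gadget admits. Finally, a $(5/3-\varepsilon)$-approximation algorithm labels $v$ with a value in $[3,\,5-3\varepsilon]$ in the disjoint case and in $[5,\,(5/3-\varepsilon)\cdot 5]$ in the intersecting case; since $0 < \varepsilon < 2/3$ keeps the approximation factor strictly above $1$ and these two intervals disjoint, Alice reads off $v$'s label and decides disjointness. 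Plugging the vertex count and the $O(\log N)$ cut size into the simulation gives the round lower bound.

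The step I expect to be hardest is the intersecting-case distance analysis and, dually, the construction that must support it: the bit-gadget and the set-membership wiring introduce many new paths, and the tail gadget must be attached so that it lengthens \emph{exactly} the critical distance out of $v$ and nothing else --- otherwise the separation degrades to the $4$-versus-$5$ gap one would get by naively subdividing an edge of a diameter-$2$-versus-$3$ construction, which only yields ratio $5/4$. Finding a gadget that achieves a clean $3$-versus-$5$ separation while preserving both sparsity and constant diameter and radius, and checking that no unintended shortcut of length $\le 4$ from $v$ ever appears, is the creative core of the proof; the communication reduction and the sparsity bookkeeping are by now standard.
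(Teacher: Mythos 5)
Your overall framework --- reduction from Set-Disjointness, a bit-gadget giving an $O(\log N)$ communication-cut, constant-factor path stretching, and the $\Omega(n/\log^3 n)$ bookkeeping --- matches the paper. The gap is in the core of your construction: a single \emph{designated} vertex $v$ whose eccentricity is $3$ when the sets are disjoint and $5$ when they intersect. This is not realizable by the kind of gadget you describe, and it is not a detail that the "routine verification" step will fix. In any bit-gadget construction, the long distance created by an intersecting coordinate $i$ is between the two $i$-indexed vertices $\ell_i$ and $r_i$ (or pendants hanging off them), because the bit-gadget guarantees the input-independent bound $d(\ell_j,r_i)=3$ (or its stretched analogue) whenever $j\neq i$. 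Consequently, for any fixed $v$ and the $i$-dependent far target $u_i$ on Bob's side, $d(v,u_i)\le d(v,\ell_j)+d(\ell_j,u_i)$ for every $j\neq i$, and the second term is small regardless of the inputs. Blocking this detour would require $v$ to be far from \emph{all} of $L\setminus\{\ell_i\}$ for the (unknown, varying) intersecting index $i$, i.e., far from all but one vertex of $L$ --- which is incompatible with $e(v)=3$ in the disjoint case, where $v$ must reach everything behind every $\ell_j$ within distance $3$. So $e(v)$ cannot jump by $2$ on account of a single intersecting coordinate; no placement of the tail gadget avoids this, since the obstruction is that the \emph{near} endpoint of the far pair must also depend on $i$.

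The paper's proof sidesteps exactly this by not fixing the center. It prunes and restretches the construction of Section~\ref{appDiamCons} and proves that \emph{some} $\ell_i\in L$ has eccentricity $3P+1$ iff the sets intersect (namely the intersecting $i$, whose two shortcut paths through $\ell_{k+1}$ and $r_{k+1}$ give it a $3P+1$ route to the antipodal pendant $r'_i$), while every node of $L$ has eccentricity $5P+1$ when the sets are disjoint. This is precisely where the hypothesis that the algorithm approximates \emph{all} eccentricities is used: Alice scans the reported values of all nodes of $L$ (which she simulates) and declares "intersecting" iff any reported value is below $5P+1$, which a $(\frac{5}{3}-\varepsilon)$-approximation guarantees once $P>\frac{2}{9\varepsilon}-\frac{1}{3}$. (Your exact $3$-versus-$5$ gap would make $P$ unnecessary, but that is moot given the above.) Relatedly, your side claim that the diameter and radius can be kept constant across instances is inconsistent with this mechanism --- the appearance of a low-eccentricity node necessarily lowers the radius --- and is not how the separation from diameter/radius is justified; that comes from the approximation ratio $5/3$ versus the $3/2$ achievable for those problems in sublinear time.
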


\paragraph{Constant-Degree Graphs} For computing exact diameter and radius, we can modify the graph constructions according to a \emph{degree-reduction technique} we define, such that the resulting graphs have a constant degree, and still allow us to obtain near-linear lower bounds. Roughly speaking, given a node $v$, we replace a subset of $y$ edges of $v$ by a binary tree to the respective neighbors (with additional internal nodes). This reduces the degree of $v$ by $y-2$. Repeatedly applying this procedure in a careful manner results in a graph of constant degree, for which we can still show our near-linear lower bounds (notice that distances change, as well as the number of nodes). We exemplify this technique by obtaining the following lower bound on computing the radius.

\begin{theorem}
		\label{ExactRadsmallDeg}
		\ThmDeg
	\end{theorem}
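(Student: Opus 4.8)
The plan is to start from the graph family underlying our lower bound for computing the \emph{exact} radius of a sparse network — call it $G$, on $N$ nodes and $O(N\log N)$ edges, produced by a reduction from a two-party communication problem (set disjointness) on $\Theta(N\log N)$ bits, with a balanced bipartition $(V_A,V_B)$ of $V(G)$ having only a polylogarithmic number of crossing edges, such that the radius of $G$ takes one value on ``disjoint'' instances and a strictly different value on ``intersecting'' ones. This $G$ has nodes of degree up to $\mathrm{poly}(N)$; the whole task is to convert it into a bounded-degree graph $G'$ that (i) has only a polylogarithmic factor more nodes, (ii) inherits the bipartition with no more crossing edges, and (iii) still has its radius distinguish the two input types.

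For (i) and (ii) I would apply the degree-reduction operation to every node $v$ of $G$ of degree exceeding a fixed constant $d_0$: delete $v$ together with all of its incident edges and insert in their place a balanced binary tree $\Gamma_v$ with $\deg_G(v)$ leaves, reconnecting each former edge $\{u,v\}$ to a private leaf of $\Gamma_v$ (and to a private leaf of $\Gamma_u$ as well, if $u$ was also processed). Every vertex of the resulting graph $G'$ — an internal tree node, a leaf, or an untouched low-degree original node — then has degree $O(1)$. The number of added nodes is $O(\sum_v \deg_G(v)) = O(|E(G)|) = O(N\log N) =: n$, so $N = \Theta(n/\log n)$. And by placing each gadget $\Gamma_v$ entirely on the side of the bipartition containing $v$, every crossing edge of $G$ becomes exactly one crossing edge of $G'$: the cut size is preserved, hence still polylogarithmic in $n$.

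The remaining — and, I expect, most delicate — step is to control what the degree reduction does to distances, and in particular to the radius. Contracting each $\Gamma_v$ back to a single node recovers $G$ (ignoring parallel edges), so $d_{G'}(x,y) \ge d_G(\bar x,\bar y)$ for the contraction images $\bar x,\bar y$; conversely, lifting a shortest path of $G$ and paying $O(\log \deg_G(v)) = O(\log n)$ to traverse each gadget gives $d_{G'}(r_x,r_y) \le O(\log n)\cdot(d_G(x,y)+1)$, where $r_x$ is, say, the root of $\Gamma_x$. The difficulty is that this stretch is \emph{non-uniform} — untouched low-degree nodes are not stretched at all — so a node that was merely the ``second-best center'' of $G$ might become the center of $G'$, and the additive $O(\log n)$ slack could in principle wipe out the gap between the two radius values. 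To avoid this I would exploit the specific structure of the original construction: there is a designated center $c^\star$ realizing the radius and a family of witness nodes at distance $\rho$ (respectively $\rho+1$) from $c^\star$, with all other nodes strictly farther; one then picks the gadget shapes (and which few, if any, original edges each high-degree node keeps outside its tree) so that the stretch of every $c^\star$-to-witness shortest path is one and the same input-independent quantity, and verifies that no internal gadget node attains a smaller eccentricity than $c^\star$ in either instance (the internal nodes of $\Gamma_{c^\star}$ are the only candidates near $c^\star$, and they are handled directly). This yields $\mathrm{radius}(G') \in \{\rho',\rho'+1\}$ — two separated values — according to the disjointness of the input.

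Finally I would re-run the standard simulation argument on $G'$: the two players, holding the halves of the disjointness instance, jointly define $G'$ and simulate any $T$-round $CONGEST$ protocol for radius, exchanging in each round only the $O(\log n)$-bit messages that cross the polylogarithmic cut; since the output of such a protocol reveals whether the instance is disjoint, the communication lower bound forces $T \cdot \mathrm{polylog}(n) = \Omega(n)$, and tallying the $\Theta(\log n)$ factors coming from the message length, the cut, and the $\Theta(\log n)$-fold inflation of the node count yields $T = \Omega(n/\log^3 n)$. As in the rest of the paper, the argument goes through a communication lower bound and hence holds against randomized protocols as well; the same recipe applied to our exact-diameter construction gives the analogous constant-degree bound for diameter.
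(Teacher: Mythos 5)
Your overall strategy is the paper's: replace each high-degree node's incidences by a binary tree placed entirely on that node's side of the bipartition, so the node count grows by only an $O(\log n)$ factor while the communication cut is unchanged, then re-run the Set-Disjointness simulation. Your accounting (added nodes $O(\sum_v \deg_G(v))=O(N\log N)$, hence $k=\Omega(n/\log n)$, cut still $O(\log n)$, final bound $\Omega(n/\log^3 n)$) also matches. One internal inconsistency: you describe the base instance as encoding $\Theta(N\log N)$ bits of Disjointness, but your final tally is only consistent with the $\Theta(N)$-bit radius construction, which is in fact the one the paper starts from.

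The genuine gap is the step you yourself flag as most delicate, and it is where essentially all of the paper's work lies. You correctly observe that the tree substitution stretches distances non-uniformly by $\Theta(\log n)$ amounts that dwarf the original radius gap of $1$ (radius $3$ versus $4$), so the gap could vanish and the center could migrate to a gadget-internal node; but you then only assert that one can ``pick the gadget shapes so that the stretch of every center-to-witness path is the same input-independent quantity,'' without exhibiting such a choice or verifying that one exists. A naive instantiation fails: the route $\ell_i \to \text{bit-gadget} \to r_j$ and the route $\ell_i \to \ell_{k+1} \to r_{k+1} \to r_i$ pass through trees of different heights ($\log\log k$ at $\ell_i$ versus $\log k$ at $\ell_{k+1}$ and at each bit-gadget node), so their lengths diverge by $\Theta(\log k)$ unless compensating structure is added. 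The paper's proof consists precisely of this re-engineering: re-encoding the input on new paths of length $\log\log k-1$ via the nodes $q(\ell_i)$, inserting paths of length $\log\log k-1$ between $L$ and $\ell_k$, replacing the $x$-gadget by a path of length $\log k+2\log\log k-1$, and rooting trees of height exactly $\log k$ or $\log\log k$ at specific nodes so that every $\ell_i$-to-$r_j$ route has length exactly $2\log\log k+2\log k-1$ or $2\log\log k+2\log k$ according to the input, followed by a full case analysis over all new node types to confirm that every node outside $L$ still has eccentricity at least $2\log\log k+2\log k$. None of these choices is forced by your description, so the proposal is an accurate plan rather than a proof; the framework and the final arithmetic are right, but the construction that makes the radius gap survive is missing.
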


\paragraph{Verification of Spanners} Finally, our technique allows us to obtain a lower bound for the verification of $(\alpha,\beta)$-spanners. An $(\alpha,\beta)$-spanner of a graph $G$, is a subgraph $H$ in which for any two nodes $u,v$ it holds that $d_H(u,v) \leq \alpha d_G(u,v)+\beta$. When spanners are sparse, i.e., when $H$ does not have too many edges, they play a vital role in many application, such as routing, approximating distances, synchronization, and more. Hence, the construction of sparse spanners has been a central topic of many studies, both in centralized and sequential computing.

Here we address the problem of verifying that a given subgraph $H$ is indeed an $(\alpha,\beta)$-spanner of $G$. At the end of the computation, each node outputs a bit indicating whether $H$ is a spanner, with the requirement that if $H$ is indeed a spanner with the required parameters then all nodes indicate this, and if it is not then at least one node indicates that it is not. We obtain the following.

	\begin{theorem}\label{thm:spanners}
		\ThmSpanners
	\end{theorem}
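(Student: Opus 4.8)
The plan is to reduce from the hard instance behind Theorem~\ref{thm:DA}. The starting observation is a two-sided one: $\mathrm{diam}(H)\le\alpha+\beta$ is a \emph{sufficient} condition for $H$ to be an $(\alpha,\beta)$-spanner of $G$ (since then $d_H(u,v)\le\alpha+\beta\le\alpha\,d_G(u,v)+\beta$ for every pair with $d_G(u,v)\ge1$), while ``every edge of $G$ has its endpoints within distance $\alpha+\beta$ in $H$'' is a \emph{necessary} condition. So it suffices to engineer a pair $(G,H)$ in which, depending on the answer to a Set-Disjointness instance, either the sufficient condition holds or a single edge of $G$ violates the necessary condition.

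Recall that the construction behind Theorem~\ref{thm:DA} turns a $k$-bit Set-Disjointness instance with $k=\widetilde\Theta(n)$ into a sparse graph $\mathcal{A}_{x,y}$ with a cut of $O(\mathrm{polylog}\,n)$ edges separating Alice's part $V_A$ from Bob's part $V_B$, and it contains a designated non-adjacent pair $s\in V_A$, $t\in V_B$ with $d(s,t)=2$ when the sets are disjoint and $d(s,t)=3$ when they intersect, where moreover in the disjoint case the whole graph has diameter~$2$. I would replace every edge of $\mathcal{A}_{x,y}$ by a path of length $c=\Theta(\alpha+\beta)$, together with a short $O(1)$-length padding path, chosen so that $2c\le\alpha+\beta<3c$; the hypothesis $\alpha<\beta+1$ is what guarantees such a calibration exists for all admissible $\alpha,\beta$. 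Let $H$ be the resulting graph (still sparse, still with an $O(\mathrm{polylog}\,n)$ cut), which has $\mathrm{diam}(H)=2c\le\alpha+\beta$ in the disjoint case and $d_H(s,t)=3c>\alpha+\beta$ in the intersecting case, and set $G:=H\cup\{(s,t)\}$, adding a single cross-cut edge between $s$ and $t$.

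Correctness is then immediate: in the disjoint case the sufficient condition holds, so $H$ is an $(\alpha,\beta)$-spanner of $G$; in the intersecting case the edge $(s,t)\in E(G)$ has $d_G(s,t)=1$ while $d_H(s,t)=3c>\alpha+\beta$, so $H$ is not an $(\alpha,\beta)$-spanner of $G$. Consequently a $T$-round protocol deciding whether $H$ is an $(\alpha,\beta)$-spanner of $G$ yields a two-party protocol for $k$-bit Set-Disjointness exchanging only $O(T\cdot\mathrm{polylog}\,n)$ bits across the cut, so $T=\Omega(k/\mathrm{polylog}\,n)=\Omega(n/\mathrm{polylog}\,n)$; tracking the exponents of the logarithm exactly (the cut has $\Theta(\log^2 n)$ edges) gives $\Omega(n/\log^3 n)$ rounds. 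Since $G$ and $H$ have $N=\Theta(c\cdot n)=\Theta((\alpha+\beta)n)$ nodes, re-expressing the bound in terms of $N$ gives $\Omega(N/((\alpha+\beta)\log^3 N))$, as claimed.

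The step I expect to be the main obstacle is the first one: confirming that the construction of Theorem~\ref{thm:DA} really can be put in the form used above --- namely that there is a \emph{single, input-independent} pair $(s,t)$, one node on each side of the $O(\mathrm{polylog}\,n)$-edge cut, that carries the entire $2$-vs-$3$ distance gap, and, equally important, that in the disjoint case \emph{no} pair of nodes is farther apart than the diameter bound $\alpha+\beta$ in $H$ (this is exactly what rules out spurious spanner violations). If the original construction only exposes a \emph{set} of diameter-realizing pairs rather than one, I would add a small fixed gadget to force a unique such pair while keeping the cut $\mathrm{polylog}$-bounded. Everything downstream --- the subdivision and padding, the $O(1)$-precision calibration, splitting $E(G)$ between the players, and the communication accounting --- is routine.
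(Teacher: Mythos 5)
There is a genuine gap, and it sits exactly where you suspected: the ``single, input-independent pair $(s,t)$ carrying the distance gap'' does not exist and cannot be manufactured by any gadget, so the reduction collapses at its first step. Suppose you had a graph $H$ with a $\mathrm{polylog}(n)$-edge cut between Alice's and Bob's parts, diameter $O(\alpha+\beta)$ (which in the regime of interest $\alpha,\beta=O(\mathrm{poly}\log n)$ is polylogarithmic), and a fixed pair $(s,t)$ with $d_H(s,t)\le\alpha+\beta$ when the sets are disjoint and $d_H(s,t)>\alpha+\beta$ when they intersect. Then Alice and Bob could solve Set-Disjointness by simulating a BFS from $s$ in $H$: it terminates after $O(\mathrm{diam}(H))$ rounds, so the simulation costs only $O(\mathrm{diam}(H)\cdot\mathrm{polylog}(n))=\mathrm{polylog}(n)$ bits across the cut, contradicting the $\Omega(k)=\widetilde{\Omega}(n)$ communication bound you invoke. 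Single-pair distance is simply an easy problem in CONGEST when the diameter is small, so no hard instance of the form you describe can exist; the gap in the construction behind Theorem~\ref{thm:DA} is necessarily spread over $k$ input-dependent pairs $(\ell'_i,r'_i)$, and your proposed ``small fixed gadget to force a unique such pair'' must either blow up the cut or destroy the gap.

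The paper's proof avoids this by inverting your set-up in two ways. First, the input-dependent edges are placed in $G$ and \emph{removed} from $H$ (so $H$ is a fixed graph and $G=H\cup Dist$), and each such edge lies entirely inside one player's part, so the cut stays $O(\log n)$ even though there are $\Theta(k)$ of them. Second, the potential violation is not at a $G$-edge but at one of $k$ candidate pairs $(\ell_i,r_i)$ with $d_G(\ell_i,r_i)=3$: when both bits at position $i$ equal $1$, $G$ acquires a shortcut $\ell_i\to\ell_{k+1}\to r_{k+1}\to r_i$ of length $3$ that $H$ can only realize with length $2(\alpha+\beta)+1$, and $3\alpha+\beta<2(\alpha+\beta)+1$ holds exactly when $\alpha<\beta+1$. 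The existential quantifier over the $k$ candidate pairs is what matches the existential quantifier in non-disjointness; your reduction tries to eliminate that quantifier, which is precisely what makes it unsalvageable. (Your two framing observations---that $\mathrm{diam}(H)\le\alpha+\beta$ is sufficient and that $G$-edges must have endpoints within $H$-distance $\alpha+\beta$---are correct but are not what the paper uses; its disjoint-case direction instead requires a case analysis checking that $H$ approximates all the input-dependent shortcuts of $G$.)
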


Notice that for any reasonable value of $\alpha,\beta = O(\poly\log{n})$, the lower bound is near-linear. This is another evidence for a task for which verification is harder than computation in the CONGEST model, as initially brought into light in~\cite{SarmaHKKNPPW12}. This is, for example, because $(+2)$-purely additive spanners with $O(n^{3/2}\log{n})$ edges can be constructed in $O(\sqrt{n}\log{n} + D)$ rounds (this appears in~\cite{LenzenP13}, and can also be deduced from~\cite{HolzerW12}), and additional various additive spanners can be constructed fast in CONGEST~\cite{CKTY16}.

\subsection{Techniques}

\paragraph{Communication Complexity and Distributed Computing.}A well-known technique to
prove lower bounds in the $CONGEST$ model is to use a reduction from
communication complexity to distributed computing. Peleg and Rubinovich\cite{PelegR99} apply a
lower bound from communication complexity to show that the number of rounds needed for any
distributed algorithm to construct a minimum spanning tree (MST) is $\widetilde{\Omega}(\sqrt{n}+D)$. Many recent papers were inspired by this technique. In\cite{Elkin04} Elkin extended the result of~\cite{PelegR99} to
show that any distributed algorithm for constructing an $\alpha$-approximation to the MST must spend
$\widetilde{\Omega}(\sqrt{\frac{n}{\alpha}})$ rounds. Das Sarma et al.\cite{SarmaHKKNPPW12} show that
any distributed verification algorithm for many problems, such as connectivity, $s-t$ cut and
approximating MST requires $\widetilde{\Omega}(\sqrt{n}+D)$ rounds. Nanongkai et al.
\cite{NanongkaiSP11} showed an $\Omega(\sqrt{\ell \cdot D}+D)$ lower bound for
computing a random walk of length $\ell$. Similar reductions from communication
complexity were adapted also in the $CONGEST\ Clique\ Broadcast$ model
\cite{DruckerKO13,HolzerP14}, where in each round each node can broadcast the same
$O(\log{n})$-bit message to all the nodes in the network.

Similar to the technique used in~\cite{FrischknechtHW12,DruckerKO13,SarmaHKKNPPW12,HolzerW12,HolzerP14}, our lower bounds are obtained by reductions from the Set-Disjointness problem in the two-party number-in-hand model of communication complexity
\cite{Yao79}. Here, each of the players Alice and Bob receives a $k$-bit string, $S_a$ and $S_b$
respectively, and needs to decide whether the two strings are disjoint or not, i.e., whether there is
some bit $0\leq i\leq k-1$ such that $S_a[i]=1$ and $S_b[i]=1$. A classical result~\cite{Razborov92,Kushilevitz} is that in
order to solve the Set-Disjointness problem, Alice and Bob must exchange $\Omega(k)$ bits.

The high level idea for applying this lower bound in the $CONGEST$ model, is as follows.
We construct a graph in which the existence of some of the edges depends on the inputs of Alice and Bob, and we partition the graph between the two players, inducing a cut in it, which we will refer to as the ``communication-cut".
The graph will have some property (e.g. diameter at least 4) if and only if the two strings of Alice and Bob are disjoint.
The players can then simulate a distributed algorithm (e.g. for diameter), while exchanging only the bits that are sent by the algorithm on edges that belong to the communication-cut.
If our cut has $t$ edges, then Alice on Bob only exchange $O(r \cdot t \cdot \log{n})$ bits where $r$ is an upper bound on the round complexity of the algorithm.
Therefore, the lower bound on the communication complexity of Set-Disjointness implies a lower bound on the number of \emph{rounds} required for any distributed algorithm (for diameter). Observe that the larger the communication-cut in the reduction, the smaller the lower bound for the distributed problem.

Having \emph{a sparse graph with a small cut}, is what allows us to make this leap in the lower bounds. To achieve this, the key idea is to connect the nodes to a set of nodes that represent their binary value, and the only nodes on the cut are the nodes of the binary representation. We call this graph structure a \emph{bit-gadget}, and it plays a central role in all of our graph constructions. This is inspired by graph constructions for different settings (e.g.~\cite{AbboudGW15}, see additional discussion for sequential algorithms below).

\subsection{Additional Related Work}

There are many known upper~\cite{LenzenPS13,Nanongkai14,HenzingerKN15} and lower~\cite{Elkin06,PelegR99,KorKP13,SarmaHKKNPPW12} bounds for approximate distance computation in \emph{weighted} networks.
For example, the weighted diameter of a network with underlying diameter $D$ can be approximated to within $(2+o(1))$ in $O(n^{1/2+o(1)}+D^{1+o(1)})$ rounds~\cite{HenzingerKN15}.
Moreover, such problems have also been considered in the \emph{congested clique} model~\cite{Nanongkai14,HenzingerKN15,CensorKKLPS15}, where $(1+o(1))$-approximate all pairs shortest paths can be computed in $O(n^{0.158})$ rounds~\cite{CensorKKLPS15}.

\paragraph{Diameter and Radius in Sequential Algorithms.}
Intuitively, the technical difficulty in extending the proof for diameter to work for radius as well is the difference in types between the two problems: the diameter asks for a pair of nodes that are far ($\exists x \exists y$) while radius asks for a node that is close to everyone ($\exists x \forall y$).
Recent developments in the theory of (sequential) algorithms suggest that this type-mismatch could lead to fundamental differences between the two problems. Recall that classical \emph{sequential} algorithms solve APSP in $O(nm)$ time~\cite{CLRS} and therefore both diameter and radius can be solved in quadratic $O(n^2)$ time in sparse graphs.

Due to the lack of techniques for proving \emph{unconditional} super-linear $\omega(n)$ lower bounds on the runtime of sequential algorithms for any natural problem, a recent line of work seeks hardness results conditioned on certain plausible conjectures (a.k.a. ``Hardness in P").
An interesting example of such result concerns the diameter:
Roditty and Vassilevska W.~\cite{RodittyW13} proved that if the diameter of sparse graphs can be computed in truly-subquadratic $O(n^{2-\eps})$ time, for any $\eps>0$, then the Strong Exponential Time Hypothesis (SETH) is false\footnote{SETH is a pessimistic version of the ${\sf P} \neq {\sf NP}$ conjecture, which essentially says that CNF-SAT cannot be solved in $(2-\eps)^n$ time. More formally, SETH is the assumption that there is no $\eps>0$ such that for all $k \geq 1$ we can solve $k$-SAT on $n$ variables and $m$ clauses in $(2-\eps)^n \cdot poly(m)$ time.}, by reducing SAT to diameter.
Since then, many other problems were shown to be ``SETH-hard" (e.g.~\cite{AbboudWW14,AbboudW14,AbboudWY15,BackursI15,AbboudBHWZ16} to name a few) but whether a similar lower bound holds for radius is an open question~\cite{RodittyW13,ChechikLRSTW14,AbboudGW15,BorassiCH14,AbboudWW16,CairoGR16}.
In fact, Carmosino et al.~\cite{CarmosinoGIMPS16} show that there is a formal barrier for reducing SAT to radius\footnote{It would imply a new co-nondeterministic algorithm for SAT and refute the Nondeterministic-SETH, which is a strong version of ${\sf NP} \neq {\sf CONP}$.}, and Abboud, Vassilevska W. and Wang~\cite{AbboudWW16}  introduce a new conjecture to prove an $n^{2-o(1)}$ lower bound for radius\footnote{A truly-subquadratic algorithm for computing the radius of a sparse graph refutes the Hitting Set Conjecture: there is no $\eps>0$ such that given two lists $A,B$ of $n$ subsets of a universe $U$ of size $\poly\log{n}$ we can decide whether there is a set $a \in A$ that intersects all sets $b \in B$ in $O(n^{2-\eps})$ time.} (which has a similar $\exists \forall$ type).
Diameter and radius seem to behave differently also in the regime of dense and weighted graphs where the best known algorithms take roughly cubic $O(n^{3}/2^{\sqrt{\log{n}}})$ time~\cite{Williams14,ChanW16} and it is known that radius can be solved in truly-subcubic $O(n^{3-\eps})$ time if and only if APSP can~\cite{AbboudGW15}, but showing such a subcubic-equivalence between APSP and diameter is a big open question~\cite{AingworthCIM99,WilliamsW10,AbboudGW15}.

The framework and set-up in our \emph{unconditional} lower bound proofs for distributed algorithms are very different from the ones in the works on \emph{conditional} lower bounds for sequential algorithms discussed above.
Still, some of our graph gadgets are inspired by the constructions in those proofs, e.g.~\cite{RodittyW13,ChechikLRSTW14,AbboudGW15,AbboudWW16,CairoGR16}.
Thus, it is quite surprising that our hardness proof for diameter transfers without much difficulty to a hardness proof for radius.

\subsection{Model and basic definitions}

We consider a synchronized network of $n$ nodes represented by an undirected graph $G=(V,E)$.
In each round, each node can send a different message of $b$ bits to each of its neighbors. This model is known as the $CONGEST(b)$ model, and as the $CONGEST$ model when $b=O(\log(n))$~\cite{congest}.

The graph parameters that need to be computed are formally defined as follows.

\begin{definition}(Eccentricity, Diameter and Radius)
Let $d(u,v)$ denote the length of the shortest path between the nodes $u$ and $v$.
The eccentricity $e(u)$ of some node $u$ is $max_{v\in V}d(u,v)$. The Diameter (denoted by $D$) is the maximum distance between any two nodes in the graph: $D = max_{u\in V}e(u)$. The Radius (denoted by $r$) is the maximum distance from some node to the ``center" of the graph: $r$ = $min_{u\in V}e(u)$.
\end{definition}
Finally, we define what we mean when we say that a graph is sparse.
\begin{definition}(sparse network)
	A sparse network $G=(V,E)$ is a network with $n$ nodes and at most $O(n\log(n))$ edges.
\end{definition}

Recall, however, that all our results can be obtained for graphs that have a strictly linear number of $\Theta(n)$ edges, at the cost of at most an additional $O(\log{n})$ factor in the lower bound.

\paragraph{Roadmap.} Section~\ref{diam} contains our lower bound for computing the exact or approximate diameter. In Sections ~\ref{rad},~\ref{ecc} and~\ref{spanners}, we give our lower bounds for computing the exact or approximate radius, for computing eccentricities, and for verifying spanners, respectively. Section~\ref{sec:deg} uses our degree-reduction technique to show a graph construction with constant degree.

\section{Computing the Diameter}\label{diam}

In this section we present lower bounds on the number of rounds needed to compute the diameter \emph{exactly} and \emph{approximately} in sparse networks. First, in Section~\ref{exDiam} we present a higher lower bound on the number of rounds needed for any algorithm to compute the exact diameter of a sparse network, and next, in Section~\ref{appDiam} we show how to modify our sparse construction to achieve a higher lower bound on the number of rounds needed for any algorithm to compute a $(\frac{3}{2}-\varepsilon)$-approximation to the diameter.

\subsection{Exact Diameter}\label{exDiam}

\begin{figure}[h]
	\begin{center}
		\includegraphics[width=14cm,height=8cm]{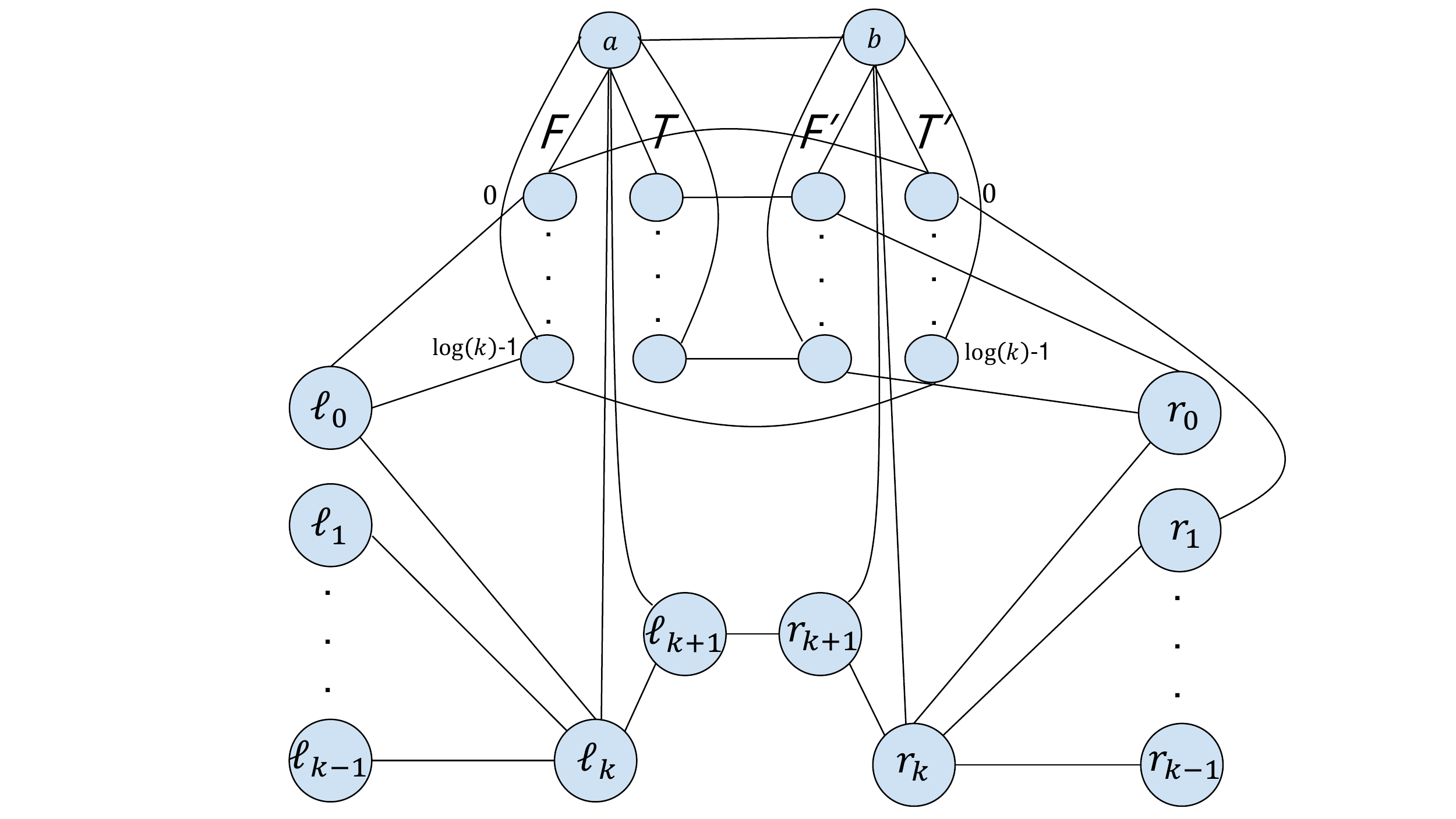}
	\end{center}
	\caption{Graph\ Construction (diameter). Some edges are omitted, for clarity.}
\end{figure}

\begin{theorem-repeat}{thm:sparse}
	\ThmSpa
\end{theorem-repeat}

To prove Theorem~\ref{thm:sparse} we describe a graph construction $G=(V,E)$ and a partition of $G$ into $(G_a,G_b)$, such that one part is simulated by Alice (denoted by $G_a$), and the second is simulated by Bob (denoted by $G_b$). Each player receives an input string defining some additional edges that will affect the diameter of $G$. The proof is organized as follows: in Section~\ref{exDiamCons} we describe the graph construction, and next, in Section~\ref{exDiamRe}, we describe the reduction from the Set-Disjointness problem and deduce Theorem~\ref{thm:sparse}.

\subsubsection{Graph construction}\label{exDiamCons}
Let $i^j$ denote the value of the bit $j$ in the binary representation of $i$. The set of nodes $V$ is defined as follows (see also Figure 1):\footnote{Note that for the sake of simplicity, some of the edges are omitted from Figure 1.} First, it contains two sets of nodes $L= \{\ell_i \mid 0\leq i\leq k-1\}$ and $R= \{r_i \mid 0\leq i\leq k-1\}$, each of size $k$. All the nodes in $L$ are connected to an additional node $\ell_{k}$, which is connected to an additional node $\ell_{k+1}$. Similarly, all the nodes in $R$ are connected to an additional node $r_{k}$, which is connected to an additional node $r_{k+1}$. The nodes $\ell_{k+1}$ and $r_{k+1}$ are also connected by an edge.

Furthermore, we add four sets of nodes, which are our bit-gadget: $F=\{f_j \mid 0\leq j\leq \log(k)-1\},T=\{t_j \mid 0\leq j\leq \log(k)-1\},F'=\{f'_j \mid 0\leq j\leq \log(k)-1\},T'=\{t'_j \mid 0\leq j\leq \log(k)-1\}$, each of size $\log(k)$. We connect the sets $F,T$ with $F',T'$ by adding edges between $f_i$ and $t_i'$, and between $t_i$ and $f_i'$, for each $0\leq i \leq \log(k)-1$.
To define the connections between the sets $L,R$ and the sets $F,T,F',T'$, we add the following edges: For each $\ell_i\in L$, if $i^j=0$, we connect $\ell_i$ to $f_j$, otherwise, we connect $\ell_i$ to $t_j$. Similarly, for each $r_i\in R$, if $i^j=0$ we connect $r_i$ to $f'_j$, otherwise, we connect $r_i$ to $t_j'$.

To complete the construction we add two additional nodes $\{a,b\}$. We connect $a$ to all the nodes in $F\cup T\cup \{\ell_{k},\ell_{k+1}\}$, and similarly, we connect $b$ to all the nodes in $F'\cup T'\cup \{r_{k},r_{k+1}\}$. We also add an edge between the nodes $a$ and $b$.

\begin{claim}\label{OuterNodes}
	
	For every $i,j\in [k-1]$ it holds that $d(\ell_i,r_j)=3$ if $i\neq j$, and $d(\ell_i,r_j)=5$ otherwise.
	
	\begin{proof}
		If $i\neq j$, there must be some bit $h$, such that $i^h\neq j^h$. Assume without loss of generality that $i^h=1$ and $j^h=0$. Then, $\ell_i$ is connected to $t_h$ and $r_j$ is connected to $f'_h$. Since $t_h$ and $f'_h$ are connected by an edge, $d(\ell_i,r_j)=3$.
		For the second part of the claim, note that there are 4 options for any shortest path from $\ell_i$ to $r_i$:
		\begin{enumerate}
			\item Through the nodes $\ell_{k},\ell_{k+1},r_{k+1},r_{k}$.
			\item Through some other node $\ell_j\in L$ such that $j\neq i$ in 2 steps, and then using the shortest path of length 3 between $\ell_j$ and $r_i$.
			\item Through some other node $r_j\in R$ such that $j\neq i$ in 3 steps, and then using the shortest path of length 2 between $r_j$ and $r_i$.
			\item Through at least one of the nodes $a,b$. Note that $d(\ell_i,a)=2$ and $d(\ell_i,b)=3$. Similarly $d(b,r_i)=2$ and $d(a,r_i)=3$.
		\end{enumerate}	
Thus, any possible shortest path between $\ell_i$ and $r_i$ must have length 5.	
	\end{proof}
	
\end{claim}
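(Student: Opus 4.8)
The plan is to treat the cases $i\neq j$ and $i=j$ separately, in each pairing an explicit short path (for the upper bound) with a ``no shortcut'' argument (for the lower bound). \textbf{Case $i\neq j$.} Pick a coordinate $h$ with $i^h\neq j^h$, say $i^h=1$ and $j^h=0$. Then $\ell_i$ is adjacent to $t_h$, $r_j$ is adjacent to $f'_h$, and $t_h f'_h$ is an edge, so the walk $\ell_i, t_h, f'_h, r_j$ gives $d(\ell_i,r_j)\le 3$. For the matching lower bound, observe that every neighbour of $\ell_i$ lies in $\{\ell_k\}\cup F\cup T$, while every neighbour of $r_j$ lies in $\{r_k\}\cup F'\cup T'$; these sets are disjoint, so $\ell_i$ and $r_j$ are non-adjacent and have no common neighbour, whence $d(\ell_i,r_j)\ge 3$ and this case is done.

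\textbf{Case $i=j$, upper bound and cut.} The walk $\ell_i,\ell_k,\ell_{k+1},r_{k+1},r_k,r_i$ gives $d(\ell_i,r_i)\le 5$, so the real content is the lower bound $d(\ell_i,r_i)\ge 5$, which I would establish with a cut argument. Put $S=L\cup\{\ell_k,\ell_{k+1}\}\cup F\cup T\cup\{a\}$ and $\overline{S}=R\cup\{r_k,r_{k+1}\}\cup F'\cup T'\cup\{b\}$. A direct inspection of the edge set shows that $(S,\overline{S})$ is a partition of $V$ whose only crossing edges are the bit-gadget ``bridges'' $f_h t'_h$ and $t_h f'_h$ (over all $h$), together with $\ell_{k+1}r_{k+1}$ and $ab$. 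Since $\ell_i\in S$ and $r_i\in\overline{S}$, any $\ell_i$--$r_i$ path must use a crossing edge; letting $uv$ be the first such edge it uses (with $u\in S$, $v\in\overline{S}$), the part of the path before $uv$ stays inside $S$, so the whole path has length at least $d(\ell_i,u)+1+d(v,r_i)$. Thus it suffices to verify $d(\ell_i,u)+d(v,r_i)\ge 4$ for every crossing edge $uv$.

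\textbf{Case $i=j$, checking the bridges.} For $uv=\ell_{k+1}r_{k+1}$ and $uv=ab$ this is immediate: $\ell_i$ is at distance $2$ from each of $\ell_{k+1}$ and $a$ (both via $\ell_k$), and symmetrically $r_i$ is at distance $2$ from each of $r_{k+1}$ and $b$, so the two distances sum to $2+2$. For a gadget bridge, say $uv=f_h t'_h$, the decisive point is the \emph{coherence} of the bit-gadget: $\ell_i$ is adjacent to $f_h$ exactly when $i^h=0$, whereas $r_i$ is adjacent to $t'_h$ exactly when $i^h=1$, so the two endpoints of the bridge can never both be near (to $\ell_i$ and to $r_i$ respectively). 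If $i^h=0$ then $d(\ell_i,f_h)=1$, and checking that each neighbour of $t'_h$ — namely the $r_m$ with $m^h=1$, the vertex $f_h$, and the vertex $b$ — is non-adjacent to $r_i$ forces $d(t'_h,r_i)\ge 3$; the subcase $i^h=1$ and the bridge $t_h f'_h$ are symmetric. In every case the sum is at least $1+3=4$, so every $\ell_i$--$r_i$ path has length at least $5$, giving $d(\ell_i,r_i)=5$.

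The only delicate part is the finite bookkeeping in the last two steps: confirming that $(S,\overline{S})$ is crossed by exactly the listed edges, and verifying the handful of ``no common neighbour'' statements (such as $d(t'_h,r_i)\ge 3$ when $i^h=0$) that pin the auxiliary distances to the precise values needed. None of this is deep, but it must be done carefully, since the whole reduction hinges on $d(\ell_i,r_i)$ being $5$ rather than $4$. Finally, note that proving the $i\neq j$ case first is convenient: the value $3$ it produces is exactly what one would feed into the alternative version of the $i=j$ argument, which instead enumerates the few possible shapes of a shortest $\ell_i$--$r_i$ path (through $\ell_{k+1},r_{k+1}$; through another $\ell_j$ and then to $r_i$; through another $r_j$; or through $a$ and/or $b$) and checks each has length $5$.
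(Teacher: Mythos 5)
Your proof is correct, and for the harder half of the claim it takes a genuinely different route from the paper. For $i\neq j$ both arguments coincide (exhibit the length-$3$ path through the bit-gadget, and rule out $d\le 2$ by noting $\ell_i$ and $r_j$ have disjoint neighbourhoods). For $i=j$, the paper enumerates four possible ``shapes'' of a shortest $\ell_i$--$r_i$ path and asserts each has length $5$; you instead exhibit the explicit length-$5$ path and prove the lower bound by a cut argument: the bipartition $S=L\cup\{\ell_k,\ell_{k+1}\}\cup F\cup T\cup\{a\}$ versus its complement is crossed only by the gadget bridges $f_ht'_h$, $t_hf'_h$, the edge $\ell_{k+1}r_{k+1}$, and the edge $ab$, so it suffices to check $d(\ell_i,u)+d(v,r_i)\ge 4$ for each crossing edge $uv$. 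Your checks are all correct: the two ``hub'' edges give $2+2$, and for a gadget bridge the coherence of the binary encoding (e.g.\ $\ell_i$ adjacent to $f_h$ iff $i^h=0$, while $r_i$ adjacent to $t'_h$ iff $i^h=1$) forces one endpoint to be at distance $1$ and the other at distance at least $3$. What your approach buys is rigor and locality: the exhaustiveness of the case analysis is automatic (every path must cross the cut), and the verification reduces to a short list of neighbourhood checks, whereas the paper's enumeration of path shapes is not formally justified as exhaustive and leaves the length computations for options 2--4 implicit. The paper's version is shorter and reads more intuitively, and it foreshadows the case analysis reused later in Lemma~\ref{lemmExDiam}, but your cut argument is the more self-contained proof; you even note the enumeration alternative at the end, which is exactly the paper's choice.
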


\begin{claim}
	
	For every $u,v\in V\setminus (L\cup R)$ it holds that $d(u,v) \leq 3$.
	
	\begin{proof}
		
		Let $u,v\in V\setminus (L\cup R)$. By definition, $u$ is connected to one of the nodes in $\{a,b\}$. The same holds for $v$ and since $a$ and $b$ are connected by an edge, $d(u,v) \leq 3$.		
	\end{proof}
	
\end{claim}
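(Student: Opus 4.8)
The plan is to exhibit the two nodes $a$ and $b$ as universal hubs for everything outside $L \cup R$, and to route every path through them. The crucial structural observation is that each node of $V \setminus (L \cup R)$ is either equal to, or adjacent to, one of $a, b$, together with the fact that $a$ and $b$ are joined by an edge.

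First I would enumerate $V \setminus (L \cup R)$ explicitly. Besides $a$ and $b$ themselves, the remaining nodes are $\ell_k, \ell_{k+1}$ and the bit-gadget sets $F \cup T$, and symmetrically $r_k, r_{k+1}$ together with $F' \cup T'$. By the construction, $a$ is made adjacent to every node of $F \cup T \cup \{\ell_k, \ell_{k+1}\}$, so each node on the ``left'' side of this list lies in the neighborhood of $a$; by the symmetric definition of $b$, each node of $F' \cup T' \cup \{r_k, r_{k+1}\}$ lies in the neighborhood of $b$. Hence every node $u \in V \setminus (L \cup R)$ admits a node $x_u \in \{a,b\}$ with $u = x_u$ or $u$ adjacent to $x_u$, so that $d(u, x_u) \le 1$.

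Given this, I would finish with a short triangle-inequality argument. For arbitrary $u, v \in V \setminus (L \cup R)$, pick $x_u, x_v \in \{a,b\}$ as above. If $x_u = x_v$ then $d(u,v) \le d(u, x_u) + d(x_u, v) \le 2$. If $x_u \ne x_v$, then $\{x_u, x_v\} = \{a,b\}$ and the edge $ab$ gives $d(x_u, x_v) = 1$, so $d(u, v) \le d(u, x_u) + d(x_u, x_v) + d(x_v, v) \le 3$. Either way $d(u,v) \le 3$, as claimed.

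The only point requiring care is the handling of $a$ and $b$ themselves: the phrase ``connected to one of $\{a,b\}$'' must be read to allow $u \in \{a,b\}$, in which case $d(u, x_u) = 0$ and the bound only improves. I expect no genuine obstacle here — the whole statement reduces to verifying, directly from the edge list of the construction, that the neighborhoods of $a$ and $b$ jointly cover $V \setminus (L \cup R)$, which is immediate. This is exactly why the nodes of $L \cup R$ are excluded from the claim: an $\ell_i$ or $r_i$ reaches $a$ or $b$ only in two steps (through the gadget), so distances among them can exceed $3$, consistent with the preceding claim that $d(\ell_i, r_i) = 5$.
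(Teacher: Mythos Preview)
Your proposal is correct and follows exactly the paper's approach: every node of $V\setminus(L\cup R)$ is adjacent to (or equal to) one of $a,b$, and since $a$ and $b$ are joined by an edge, any two such nodes are at distance at most $3$. Your write-up is simply a more explicit, case-by-case unpacking of the paper's one-line argument.
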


\begin{corollary}\label{InnerNodes}
	For every $u,v$ such that $u\in (V_a\setminus L)$ or $v\in (V_b\setminus R)$, it holds that $d(u,v) \leq 4$.
\end{corollary}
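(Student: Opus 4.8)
The plan is to derive this corollary directly from the preceding claim (which bounds $d$ by $3$ for pairs avoiding $L\cup R$) together with the observation that $L$ and $R$ each have a single dominating ``hub'' node, namely $\ell_k$ and $r_k$.

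First I would reinterpret the hypothesis. In the partition of $G$ into $(G_a,G_b)$, all of $L$ lies in $V_a$ and all of $R$ lies in $V_b$, and since $V_a\cap V_b=\emptyset$ this forces $V_a\cap R=\emptyset$ and $V_b\cap L=\emptyset$, hence $V_a\setminus L=V_a\setminus(L\cup R)$ and $V_b\setminus R=V_b\setminus(L\cup R)$, both of which are subsets of $V\setminus(L\cup R)$. So the condition ``$u\in V_a\setminus L$ or $v\in V_b\setminus R$'' just says that at least one of $u,v$ belongs to $V\setminus(L\cup R)$. Using $d(u,v)=d(v,u)$, it therefore suffices to prove: if $x\in V\setminus(L\cup R)$, then $d(x,y)\le 4$ for every $y\in V$.

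To prove this I would split on $y$. If $y\in V\setminus(L\cup R)$, the preceding claim already gives $d(x,y)\le 3\le 4$. If $y=\ell_i\in L$, use that the construction joins every node of $L$ to $\ell_k$, and that $\ell_k\notin L\cup R$ (its index is $k$, while $L=\{\ell_i\mid 0\le i\le k-1\}$); applying the preceding claim to the pair $x,\ell_k$ gives $d(x,\ell_k)\le 3$, so $d(x,y)\le d(x,\ell_k)+1\le 4$. The case $y=r_i\in R$ is identical, using $r_k$ (adjacent to all of $R$ and not in $L\cup R$) in place of $\ell_k$. This exhausts all possibilities for $y$, and the corollary follows.

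I do not expect a real obstacle here: the statement is essentially a one-step consequence of the previous claim and the triangle inequality. The only point needing a moment's care is the bookkeeping that on Alice's side the only $L$-nodes are those of $L$ itself, and symmetrically for Bob and $R$ — which is immediate from how $(G_a,G_b)$ is defined — since this is exactly what lets us pass from ``$u\in V_a\setminus L$'' to ``$u\in V\setminus(L\cup R)$''.
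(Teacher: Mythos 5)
Your proof is correct and matches the intended derivation: the paper states this corollary without proof, and it is indeed meant to follow from Claim~\ref{OuterNodes}'s companion claim (the $d(u,v)\le 3$ bound on $V\setminus(L\cup R)$) plus one application of the triangle inequality, stepping from an $L$- or $R$-node to its neighbor $\ell_k$ or $r_k$ inside $V\setminus(L\cup R)$. One small imprecision: the corollary's hypothesis \emph{implies}, but is not equivalent to, ``at least one of $u,v$ lies in $V\setminus(L\cup R)$'' (e.g.\ $u=b$, $v\in L$ satisfies the latter but not the former); since you only use that implication and prove the more general statement, nothing is affected.
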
	

\subsubsection{Reduction from Set-Disjointness}\label{exDiamRe}
To prove Theorem~\ref{thm:sparse}, we show a reduction from the Set-Disjointness problem.
Following the construction defined in the previous section, we define a partition
$(G_a=(V_a,E_a),G_b=(V_b,E_b))$:
\begin{flalign}
\nonumber
&V_a = L\cup F\cup T\cup \{\ell_{k},\ell_{k+1},a\},E_a = \{(u,v)|u,v\in V_a \wedge (u,v)\in E \}\\\nonumber
\nonumber
&V_b = R\cup F'\cup T'\cup \{r_{k},r_{k+1},b\},E_b = \{(u,v)|u,v\in V_b \wedge (u,v)\in E \}\nonumber
\end{flalign}
The graph $G_a$ is simulated by Alice and the graph $G_b$ is simulated by Bob, i.e., in each round, all the messages that nodes in $G_a$ send to nodes in $G_b$ are sent by Alice to Bob. Bob forwards these messages to the corresponding nodes in $G_b$. All the messages from nodes in $G_b$ to nodes in $G_a$ are sent in the same manner. Each player receives an input string $(S_a$ and $S_b)$ of $k$ bits. If the bit $S_a[i] = 0$, Alice adds an edge between the nodes $\ell_i$ and $\ell_{k+1}$. Similarly, if $S_b[i] = 0$, Bob adds an edge between the nodes $r_i$ and $r_{k+1}$.

\begin{observation}\label{nonCrossingDistances}
	For every $u,v\in V_a$, it holds that $d(u,v)\leq 4$. Similarly, $d(u,v)\leq 4$ for every $u,v\in V_b$.
\end{observation}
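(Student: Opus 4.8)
The plan is to reduce the statement to Corollary~\ref{InnerNodes}, which already bounds $d(u,v)\le 4$ whenever at least one endpoint avoids the ``leaf'' set $L$ (respectively $R$), and then dispose of the single case it does not cover. All distances here are in the constructed graph $G$ with the input-dependent edges $(\ell_i,\ell_{k+1})$ and $(r_i,r_{k+1})$ present; since those edges only add connections, they can only shrink distances, so it suffices to exhibit short paths and I will not track them explicitly.

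First I would fix $u,v\in V_a$ and split into two cases. If at least one of $u,v$ lies in $V_a\setminus L$, then Corollary~\ref{InnerNodes} gives $d(u,v)\le 4$: if it is $u$ the corollary applies as stated, and if it is $v$ we first rewrite $d(u,v)=d(v,u)$ using symmetry of the distance function and then apply the corollary. The only remaining case is $u,v\in L$, i.e. $u=\ell_i$ and $v=\ell_j$. Here I would recall that by construction every node of $L$ is adjacent to $\ell_k$, so $\ell_i-\ell_k-\ell_j$ is a walk of length $2$ and hence $d(u,v)\le 2\le 4$. The claim for $V_b$ is entirely symmetric: replace $L$ by $R$ and $\ell_k$ by $r_k$, and invoke Corollary~\ref{InnerNodes} for endpoints in $V_b\setminus R$.

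There is no real obstacle here: the observation is a short case analysis whose only content beyond Corollary~\ref{InnerNodes} is that each of $L$ and $R$ has a universal neighbor within its own side ($\ell_k$ and $r_k$), so the same-side pairs not already handled by the corollary are in fact at distance at most $2$. The one subtlety worth flagging in the write-up is that Corollary~\ref{InnerNodes} is phrased asymmetrically (it mentions $V_a\setminus L$ for the first endpoint but $V_b\setminus R$ for the second), so when the ``good'' endpoint happens to be $v$ rather than $u$ one must explicitly appeal to $d(u,v)=d(v,u)$ before applying it; otherwise the argument is immediate.
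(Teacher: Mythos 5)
Your proof is correct, but it takes a different route from the paper's. The paper disposes of the observation in one line: every node of $V_a$ is within distance $2$ of the single hub $\ell_{k+1}$ (nodes of $L$ via $\ell_k$, nodes of $F\cup T$ via $a$, and $\ell_k$, $a$ directly), and symmetrically every node of $V_b$ is within distance $2$ of $r_{k+1}$; the bound $d(u,v)\le 4$ then follows uniformly by concatenating two such paths, with no case split and no appeal to Corollary~\ref{InnerNodes}. You instead reduce to Corollary~\ref{InnerNodes} --- whose underlying paths route through the $a$--$b$ edge --- and handle the one uncovered case ($u,v\in L$, resp.\ $u,v\in R$) via the hub $\ell_k$ (resp.\ $r_k$). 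Both arguments are valid: your application of the corollary to same-side pairs is legitimate since its hypothesis is a disjunction and $u\in V_a\setminus L$ (or, after swapping, $v\in V_b\setminus R$) suffices, and your flagging of the need to invoke symmetry of $d$ because of the corollary's asymmetric phrasing is a fair point. What the paper's version buys is uniformity and independence from the $a,b$ machinery; what yours buys is reuse of an already-established statement at the cost of a three-way case analysis. Your remark that the input-dependent edges only shrink distances is also consistent with the paper, whose proof likewise ignores them.
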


This is because $d(u_a,\ell_{k+1})\leq 2$ for any $u_a\in V_a$, and $d(u_b,r_{k+1})\leq 2$ for any node $u_b\in V_b$.

\begin{lemma}\label{lemmExDiam}
	
	The diameter of $G$ is at least 5 iff the sets of Alice and Bob are not disjoint.
	
	\begin{proof}
		
		Consider the case in which the sets are disjoint i.e., for every $0\leq i\leq k-1$ either $S_a[i]=0$ or $S_b[i]=0$. We show that for any $u,v \in V$, it holds that $d(u,v)\leq 4$. There are 3 cases:
		\begin{enumerate}
			\item $u\in V_a$ and $v\in V_a$: By Observation~\ref{nonCrossingDistances}, $d(u,v)\leq 4$.
			\item $u\in V_b$ and $v\in V_b$: By Observation~\ref{nonCrossingDistances}, $d(u,v)\leq 4$.
			\item $u\in V_a$ and $v\in V_b$: Consider the case in which $u\in (V_a\setminus L)$ or $v\in (V_b\setminus R)$. By Corollary~\ref{InnerNodes} $d(u,v)\leq 4$. Otherwise, $u\in L$ and $v\in R$, and by Claim~\ref{OuterNodes} $d(\ell_i,r_j)=3$ for every $i\neq j$. Note that in case $i=j$ it holds that $d(\ell_i,r_i) \leq 4$, since either $\ell_i$ is connected to $\ell_{k+1}$ or $r_i$ is connected to $r_{k+1}$, and the shortest path between $\ell_i$ and $r_i$ is the one through the nodes $\ell_{k+1},r_{k+1}$.
		\end{enumerate}	
		In case the two sets are not disjoint, there is some $0\leq i\leq k-1$ such that $S_a[i] = 1$ and $S_b[i] = 1$. Note that in this case $\ell_i$ is not connected directly by an edge to $\ell_{k+1}$ and $r_i$ is not connected directly by an edge to $r_{k+1}$. Therefore, there are only 4 options for any shortest path from $\ell_i$ to $r_i$, which are stated in Claim~\ref{OuterNodes}, from which we get that in this case the diameter of $G$ is at least 5.	
	\end{proof}
\end{lemma}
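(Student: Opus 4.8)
The plan is to prove the two directions of the equivalence separately, in each case reducing everything to the single ``diagonal'' pair $(\ell_i, r_i)$, since Claim~\ref{OuterNodes}, Corollary~\ref{InnerNodes}, and Observation~\ref{nonCrossingDistances} already pin down all other distances at $\le 4$ regardless of the input strings. The central observation driving the whole argument is that the only input-dependent edges are $\ell_i\!-\!\ell_{k+1}$ (present exactly when $S_a[i]=0$) and $r_i\!-\!r_{k+1}$ (present exactly when $S_b[i]=0$), both of which are \emph{internal} to $V_a$ and $V_b$ respectively and therefore create no new crossing options between the two sides. Thus the distance between a fixed pair $\ell_i, r_i$ is governed entirely by whether these two shortcut edges are available.

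For the forward direction I would assume the sets are disjoint and show $d(u,v)\le 4$ for every pair, splitting on the location of $u,v$. The cases $u,v\in V_a$ and $u,v\in V_b$ are handled immediately by Observation~\ref{nonCrossingDistances}; the mixed case $u\in V_a, v\in V_b$ with $u\in V_a\setminus L$ or $v\in V_b\setminus R$ is handled by Corollary~\ref{InnerNodes}. This leaves only $u=\ell_i$, $v=r_j$ with both in $L$ and $R$. When $i\ne j$, Claim~\ref{OuterNodes} already gives distance $3$. When $i=j$, disjointness forces $S_a[i]=0$ or $S_b[i]=0$, so at least one of the shortcut edges is present; whichever it is, I can route $\ell_i\to\ell_{k+1}\to r_{k+1}\to r_k\to r_i$ (or the symmetric path using the $r_i\!-\!r_{k+1}$ edge) in exactly $4$ steps via the fixed crossing edge $\ell_{k+1}\!-\!r_{k+1}$. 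Hence the diameter is at most $4$.

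For the backward direction I would assume the sets intersect, fix an index $i$ with $S_a[i]=S_b[i]=1$, and argue that $\ell_i, r_i$ is a witness of distance $5$. Since both bits are $1$, neither shortcut edge $\ell_i\!-\!\ell_{k+1}$ nor $r_i\!-\!r_{k+1}$ is added, so the configuration around this pair is exactly that analyzed in the $i=j$ case of Claim~\ref{OuterNodes}, which enumerates the four possible shortest-path shapes and certifies length $5$. I then need to confirm that the other input edges (those for indices $j\ne i$) cannot shorten this particular distance: the key points are that any $\ell_i\!\rightsquigarrow\!r_i$ path must use one of the fixed crossing edges ($\ell_{k+1}\!-\!r_{k+1}$, $a\!-\!b$, or a bit-gadget edge), and that reaching $\ell_{k+1}$ from $\ell_i$ still costs $2$ steps (via $\ell_k$ or via $a$) because the only edge that could make it $1$ is precisely the absent $\ell_i\!-\!\ell_{k+1}$; routing through some $\ell_j$ with $S_a[j]=0$ only lengthens the path. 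Symmetrically $d(r_i,r_{k+1})=2$, so every one of the four options remains length $5$.

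The main obstacle is exactly this last verification in the backward direction: one must be sure that the enumeration of path shapes from Claim~\ref{OuterNodes} stays exhaustive once the input edges are present, and that none of the added internal shortcuts for off-diagonal indices conspires to produce an $\ell_i\!\rightsquigarrow\!r_i$ path of length $4$. I expect this to come down to the clean structural fact that all shortcuts are internal to a single side and that the diagonal index $i$ is, by construction, the one for which no shortcut was added; making that airtight (rather than merely invoking Claim~\ref{OuterNodes}, which was stated for the input-free graph) is the only genuinely delicate step.
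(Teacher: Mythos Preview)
Your proposal is correct and follows essentially the same approach as the paper: the same case split in the forward direction (Observation~\ref{nonCrossingDistances} for same-side pairs, Corollary~\ref{InnerNodes} for the mixed non-$L$/$R$ case, Claim~\ref{OuterNodes} plus the shortcut edge for the diagonal $L$--$R$ pair), and the same appeal to the four-option enumeration of Claim~\ref{OuterNodes} for the backward direction. If anything you are slightly more careful than the paper in flagging that the enumeration in Claim~\ref{OuterNodes} must remain exhaustive once the input edges for indices $j\neq i$ are added; the paper simply asserts this, and your observation that those shortcuts are internal to one side and never incident to $\ell_i$ or $r_i$ is exactly what makes it go through.
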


\paragraph{Proof of Theorem~\ref{thm:sparse}} From Lemma~\ref{lemmExDiam}, we get that any algorithm for computing the exact diameter of the graph $G$ can be used to solve the Set-Disjointness problem. Note that since there are $O(\log(k))$ edges in the cut ($G_a,G_b$), in each round Alice and Bob exchange $O(\log(k)\cdot \log(n))$ bits. Since $k = \Omega(n)$ we deduce that any algorithm for computing the diameter of a network must spend ${\Omega}(\frac{n}{\log^2(n)})$ rounds, and since $|E|=O(n\log(n))$ this bound holds even for sparse networks.

\subsection{$(\frac{3}{2}-\varepsilon)$-approximation to the Diameter}\label{appDiam}
\begin{figure}[h]
	\begin{center}
		\includegraphics[width=14cm,height=8cm]{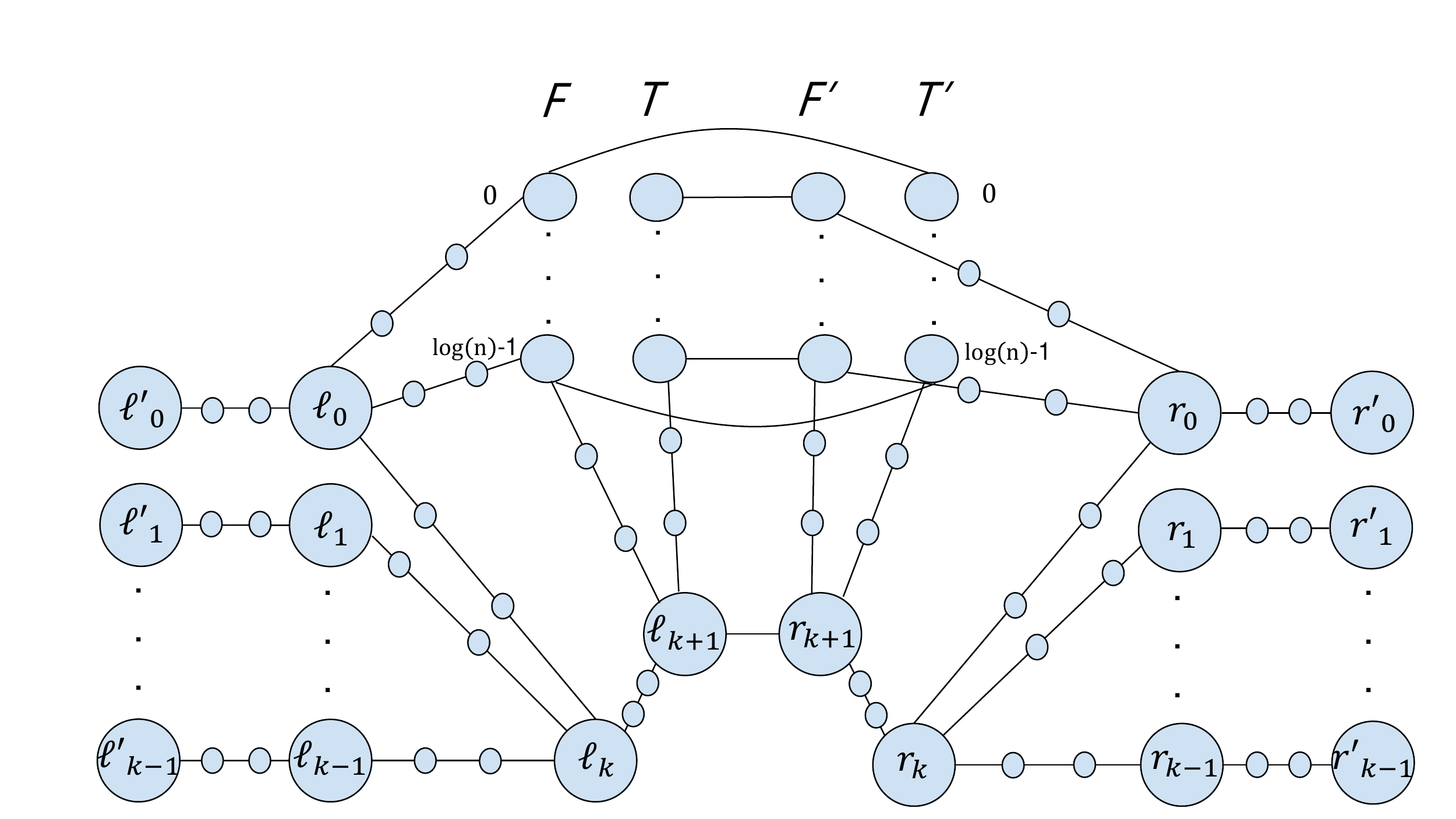}
	\end{center}
	\caption{Graph construction, $P=3$ (diameter approximation). Some edges are omitted.}
\end{figure}

 In this Section we show how to modify our sparse construction to achieve a stronger lower bound on the number of rounds needed for any ($\frac{3}{2}-\varepsilon$)-approximation algorithm.

\begin{theorem-repeat}{thm:DA}
	\ThmDA
\end{theorem-repeat}

\subsubsection{Graph Construction}\label{appDiamCons}
The main idea to achieve this lower bound is to stretch our sparse construction by replacing some edges by paths of length $P$, an integer which will be chosen later. Actually, we only apply the following changes to the construction described in Section~\ref{exDiamCons} (see also Figure 2 where $P=3$):
\begin{enumerate}
	\item Remove the nodes $a,b$ and their incident edges.
	\item Replace all the edges incident to the nodes $\ell_{k},r_{k}$ by paths of length $P$.
	\item Replace all the edges $(u,v)$ such that $u\in L$ and $v\in (F\cup T)$ by paths of length $P$. Similarly, Replace all the edges $(u,v)$ such that $u\in R$ and $v\in (F'\cup T')$ by paths of length $P$.
	\item Add two additional sets $L' = \{\ell'_i \mid 0\leq i\leq k-1\}$, $R'=\{r'_i \mid 0\leq i\leq k-1\}$ each of size $k$. Connect each $\ell_i'$ to $\ell_i$, and each $r_i'$ to $r_i$, by a path of length $P$.
\end{enumerate}
Furthermore, to simplify our proof, we connect each $u\in (F\cup T)$ to $\ell_{k+1}$ by a path of length $P$. Similarly, connect each $u\in (F'\cup T')$ to $r_{k+1}$ by a path of length $P$.

\begin{definition}(Y(u,v))
	For each $u,v\in V$ such that $u$ and $v$ are connected by a path of length $P$, denote by $Y(u,v)$ the set of all nodes on the $P$ path between $u$ and $v$ (without $u$ and $v$).
\end{definition}

\begin{claim}\label{InnerNodesApp}
	For every $u,v\in V\setminus (L'\cup R'\bigcup_{i\in [k-1]} Y(\ell'_i,\ell_i)\bigcup_{i\in [k-1]} Y(r'_i,r_i))$ it holds that $d(u,v)$ is at most $4P+1$.
	\begin{proof}
		By the construction, the distance from $u$ to one of the nodes in $\{\ell_{k+1},r_{k+1}\}$ is at most $2P$. The same holds for $v$. Thus, $d(u,v)\leq 4P+1$.
	\end{proof}
\end{claim}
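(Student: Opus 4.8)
The plan is to prove Claim~\ref{InnerNodesApp} directly by bounding the distance from any such node $u$ to the ``hub'' node set $\{\ell_{k+1}, r_{k+1}\}$, and then routing through the edge (or short connection) between these hubs. The key observation is that the set of nodes we exclude, namely $L' \cup R' \cup \bigcup_i Y(\ell'_i,\ell_i) \cup \bigcup_i Y(r'_i,r_i)$, consists precisely of the ``pendant'' structures attached to $L$ and $R$ by the length-$P$ paths added in step~4 of the construction. Every remaining node is either a core node of the original construction (a node in $L$, $R$, $F$, $T$, $F'$, $T'$, $\{\ell_k,\ell_{k+1},r_k,r_{k+1}\}$) or an internal node lying on one of the length-$P$ paths that replaced a core edge in steps~2 and~3, or on the auxiliary paths connecting $F \cup T$ to $\ell_{k+1}$ (and $F' \cup T'$ to $r_{k+1}$).

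First I would establish the claimed bound $d(u, \{\ell_{k+1}, r_{k+1}\}) \le 2P$ for every node $u$ in the allowed set. For a node in $L$, it reaches some $f_j$ or $t_j$ in $F \cup T$ via a path of length $P$ (step~3), and each such node is connected to $\ell_{k+1}$ by a path of length $P$ (the auxiliary paths), giving distance $2P$. For a node in $F \cup T$ itself, the distance to $\ell_{k+1}$ is exactly $P$. For $\ell_k$, step~2 replaced its incident edges by length-$P$ paths, so it reaches $\ell_{k+1}$ in $P$ steps; and $\ell_{k+1}$ has distance $0$ to itself. By the symmetric argument, every node on the $R$-side reaches $r_{k+1}$ within $2P$. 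The only slightly less immediate cases are the internal path nodes: a node in $Y(u,v)$ for one of the step~2 or step~3 paths lies at distance at most $P$ from each of its two endpoints $u,v$, and since at least one endpoint is a core node already shown to be within $2P$ of a hub, such an internal node is within $\le 2P$ of a hub as well. I would verify that the specific internal nodes we have \emph{not} excluded all fall into this pattern.

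The final step is to combine these two bounds through the hub connection. The nodes $\ell_{k+1}$ and $r_{k+1}$ remain connected by a single edge (this edge from the original construction is untouched by steps~1--4, since only the $a,b$ edges and the enumerated edge-sets were modified). Hence for allowed nodes $u,v$, if both lie on the same side we get $d(u,v) \le 2P + 2P = 4P$; if they lie on opposite sides we route $u \to \ell_{k+1} \to r_{k+1} \to v$, which gives $d(u,v) \le 2P + 1 + 2P = 4P+1$. Taking the worst case yields $d(u,v) \le 4P+1$, as claimed. Writing $d(u,v) \le d(u,\{\ell_{k+1},r_{k+1}\}) + 1 + d(\{\ell_{k+1},r_{k+1}\},v)$ and noting the triangle-inequality routing covers all cases gives the uniform bound.

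The main obstacle, and the step requiring the most care, is the case analysis for the $2P$ hub-distance bound: one must confirm that \emph{every} node outside the excluded pendant set genuinely reaches a hub within $2P$ steps, including the internal nodes of the various length-$P$ paths introduced in steps~2, 3, and the auxiliary connections. The danger is an off-by-$P$ error or an overlooked node type (for instance, an internal node on a step~3 path whose nearer endpoint is in $L$ rather than in $F\cup T$), which could push the true distance above $2P$. I would therefore enumerate the node types explicitly and check each against the $2P$ bound before concluding, since the clean $4P+1$ answer depends entirely on this hub-distance estimate being tight.
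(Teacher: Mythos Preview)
Your proposal is correct and follows exactly the same approach as the paper's proof: show that every allowed node is within distance $2P$ of one of the hubs $\ell_{k+1},r_{k+1}$, then use the edge between the two hubs to conclude $d(u,v)\le 4P+1$. The paper compresses this into two sentences, whereas you have (correctly) expanded the case analysis for the $2P$ hub-distance bound; there is no substantive difference in method.
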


\begin{claim}\label{OuterNodesApp}
	For every $i,j\in [k-1]$ it holds that $d(\ell'_i,r'_j) = 4P+1$ if $i\neq j$, and $d(\ell'_i,r'_j) = 6P+1$ otherwise.
\begin{proof}
	If $i\neq j$, there must be some bit $h$, such that $i^h\neq j^h$. Assume without loss of generality that $i^h=1$ and $j^h=0$. This implies that $\ell_i$ is connected to $t_h$ by a path of length $P$ and $r_j$ is connected to $f'_h$ by a path of length $P$ as well. Since $t_h$ and $f'_h$ are connected by an edge, $d(\ell_i,r_j)=2P+1$ and $d(\ell'_i,r'_j)=4P+1$. For the second part of the claim note that there are 3 options for any shortest path from $\ell'_i$ to $r'_i$:
	
	\begin{enumerate}
		\item Through the node $\ell_{k+1}$ in $3P$ steps, and then using the shortest path of length $3P+1$ between $\ell_{k+1}$ and $r'_i$.
		\item Through some other node $\ell_j\in L$, such that $j\neq i$ in $3P$ steps, and then using the shortest path of length $3P+1$ between $\ell_j$ and $r'_i$.
		\item Through some other node $r_j\in R$, such that $j\neq i$ in $3P+1$ steps, and then using the shortest path of length $3P$ between $r_j$ and $r'_i$.
	\end{enumerate}
	
	Thus, for the second part of the claim, any possible shortest path between $(\ell'_i,r'_i)$ must have length $6P+1$.
\end{proof}
\end{claim}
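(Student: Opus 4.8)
\emph{Reduction to the inner distance $d(\ell_i,r_j)$.} The plan is to first dispose of the two new sets $L',R'$, which are attached as pendants: each $\ell'_i$ reaches the rest of the graph only through the length-$P$ path $Y(\ell'_i,\ell_i)$ to $\ell_i$, whose interior nodes have degree $2$, and likewise each $r'_j$ only through $Y(r'_j,r_j)$. Hence any $\ell'_i$--$r'_j$ path must traverse both of these paths in full, so $d(\ell'_i,r'_j)=P+d(\ell_i,r_j)+P=2P+d(\ell_i,r_j)$. It therefore suffices to prove $d(\ell_i,r_j)=2P+1$ when $i\neq j$ and $d(\ell_i,r_i)=4P+1$. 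This parallels Claim~\ref{OuterNodes} (which gives $3$ and $5$ in the unsubdivided graph); the discrepancy comes precisely from the fact that the bit-gadget edges $f_h t'_h,\ t_h f'_h$ and the edge $\ell_{k+1}r_{k+1}$ are \emph{not} subdivided and keep length $1$, so only the two ``long'' ends contribute the extra $2P$.

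\emph{Upper bounds.} For $i\neq j$ fix a bit $h$ with $i^h\neq j^h$; say $i^h=1,\ j^h=0$. Then $\ell_i$ is joined to $t_h$ by a length-$P$ path, $t_h f'_h$ is an edge, and $f'_h$ is joined to $r_j$ by a length-$P$ path, giving $d(\ell_i,r_j)\le P+1+P=2P+1$ (the subcase $i^h=0,\ j^h=1$ uses $f_h t'_h$ symmetrically). For $i=j$ no such bit exists, so I route through the centre instead: $\ell_i\leadsto\ell_k\leadsto\ell_{k+1}$ (two length-$P$ paths), the edge $\ell_{k+1}r_{k+1}$, then $r_{k+1}\leadsto r_k\leadsto r_i$, yielding $d(\ell_i,r_i)\le 2P+1+2P=4P+1$.

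\emph{Lower bounds via the crossing cut.} Split $V$ into an $L$-side (the nodes $L\cup L'\cup F\cup T\cup\{\ell_k,\ell_{k+1}\}$ together with the interiors of their incident paths) and the symmetric $R$-side; the only edges across this cut are $f_h t'_h$, $t_h f'_h$ (for $0\le h\le\log k-1$) and $\ell_{k+1}r_{k+1}$, each of length $1$. Every $\ell_i$--$r_j$ path uses a first crossing edge $(x,y)$ with $x$ on the $L$-side; its prefix has length $\ge d(\ell_i,x)\ge P$ and its suffix length $\ge d(y,r_j)\ge P$, since every skeleton node $\neq\ell_i$ (resp.\ $\neq r_j$) is at distance $\ge P$ because all edges leaving $\ell_i$ and $r_j$ begin a length-$P$ path. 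This gives $d(\ell_i,r_j)\ge 2P+1$ for \emph{all} $i,j$, and together with the upper bound settles the case $i\neq j$.

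\emph{The case $i=j$ and the main obstacle.} Here the $2P+1$ bound is not enough and the correlation between the two endpoints of a crossing edge must be exploited. For a path crossing the cut exactly once through $(x,y)$, its length is at least $d_L(\ell_i,x)+1+d_R(y,r_i)$ (distances inside the two sides); through $\ell_{k+1}r_{k+1}$ this is $2P+1+2P=4P+1$, and through a bit edge $t_h f'_h$ the \emph{misalignment} is key: $d_L(\ell_i,t_h)=P$ iff $i^h=1$ (else $3P$) while $d_R(f'_h,r_i)=P$ iff $i^h=0$ (else $3P$), so the two cannot both be small and the sum is $\ge P+1+3P=4P+1$ (the edge $f_h t'_h$ is symmetric). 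The genuine difficulty is ruling out paths that cross three or more times. Rather than bounding excursions one by one, I propose to settle this cleanly with a \emph{potential argument}: exhibit the candidate distance labelling $p(v)$ from the source $\ell_i$ (with $p(\ell_i)=0$, $p(\ell_k)=p(\ell_i\text{'s own bit nodes})=P$, $p(\ell_{k+1})=2P$, $p(r_{k+1})=2P+1$, each R-side bit node equal to $P+1$ or $3P+1$ according to its alignment with $i$, $p(r_k)=3P+1$, and $p(r_i)=4P+1$), and verify that $|p(u)-p(v)|\le 1$ across every edge; since this holds for each symmetry class of edges (in particular the three unsubdivided crossing edges each change $p$ by exactly $1$), $p$ is a valid lower-bounding potential and $d(\ell_i,r_i)\ge p(r_i)=4P+1$ for every path shape at once. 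The only real work left is this finite, class-by-class Lipschitz check, which I expect to be the main bookkeeping step. Combining the two cases and adding back the $2P$ from the pendant paths gives $d(\ell'_i,r'_j)=4P+1$ for $i\neq j$ and $6P+1$ for $i=j$.
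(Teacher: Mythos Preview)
Your argument is correct, and in fact more careful than the paper's. The paper proves the upper bound for $i\neq j$ exactly as you do, but never argues the matching lower bound $d(\ell_i,r_j)\ge 2P+1$; your cut observation (every $L$-side--$R$-side crossing happens at a skeleton node at distance $\ge P$ from each of $\ell_i,r_j$) supplies this missing half. For $i=j$, the paper proceeds by informal case enumeration: it asserts that any shortest $\ell'_i$--$r'_i$ path must pass through $\ell_{k+1}$, through some $\ell_j$ with $j\neq i$, or through some $r_j$ with $j\neq i$, and that each option costs $6P+1$, without justifying exhaustiveness or the intermediate distances. Your $1$-Lipschitz potential is a genuinely different and cleaner mechanism: it certifies the lower bound for all path shapes (including multi-crossing paths) simultaneously, and your ``misalignment'' observation is exactly what makes it work across the bit edges. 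To complete the Lipschitz check you will need to fill in a few values you left implicit: set $p=3P$ on the non-own $L$-side bit nodes (those in $\{f_h,t_h\}$ that $\ell_i$ does \emph{not} connect to), $p=2P$ on every $\ell_j$ with $j\neq i$, and $p=2P+1$ on every $r_j$ with $j\neq i$; then every length-$P$ path joins skeleton values differing by exactly $P$, and each crossing edge changes $p$ by exactly $1$, so the class-by-class verification you anticipate goes through without surprises. The paper's enumeration is shorter to write but leans on the reader's intuition about the graph; your dual certificate is longer to specify but leaves nothing to trust.
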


\subsubsection{Reduction from Set-Disjointness}\label{appDiamRe}
Following the above construction, we define a partition
$(G_a=(V_a,E_a),G_b=(V_b,E_b))$:
\begin{flalign}
\nonumber
V_a =&
\bigcup_{\substack{i\in [k-1] \\ j\in [\log(k)-1] \\ i^j=0}}Y(\ell_i,f_j)\bigcup_{\substack{i\in [k-1] \\ j\in [\log(k)-1] \\ i^j=1}}Y(\ell_i,t_j)\bigcup_{i\in [k-1]} Y(\ell'_i,\ell_i)\\\nonumber
&\bigcup_{i\in [k-1]} Y(\ell_i,\ell_{k})\cup Y(\ell_{k},\ell_{k+1})\cup L'\cup L\cup F\cup T\cup \{\ell_{k},\ell_{k+1}\}\\\nonumber
E_a =& \{(u,v)|u,v\in V_a \wedge (u,v)\in E \}\\\nonumber
V_b =& \bigcup_{\substack{i\in [k-1] \\ j\in [\log(k)-1] \\ i^j=0}}Y(r_i,f'_j)\bigcup_{\substack{i\in [k] \\ j\in [\log(k)-1] \\ i^j=1}}Y(r_i,t'_j)\bigcup_{i\in [k-1]} Y(r'_i,r_i)\\\nonumber
&\bigcup_{i\in [k-1]} Y(r_i,r_{k})\cup Y(r_{k},r_{k+1})\cup R'\cup R\cup F'\cup T'\cup \{r_{k},r_{k+1}\}\\\nonumber
E_b = &\{(u,v)|u,v\in V_b \wedge (u,v)\in E \}\nonumber
\end{flalign}
Each player receives an input string $(S_a$ and $S_b)$ of $k$ bits. If $S_a[i] = 0$, Alice adds an edge between the nodes $\ell_i$ and $\ell_{k+1}$. Similarly, if $S_b[i] = 0$, Bob adds an edge between the nodes $r_i$ and $r_{k+1}$.

\begin{claim}\label{disjoint}
	Let $0\leq i\leq k-1$ be such that $S_a[i]=0$ or $S_b[i]=0$. Then the distance from the node $\ell_i \in L$ to any node $u\in (R\cup \{r_{k+1}\})$ is at most $2P+2$.
\begin{proof}	
	There are 3 cases:
	\begin{enumerate}
		\item $u=r_{k+1}$: Note that $d(\ell_i,\ell_{k+1})\leq 2P$ and $d(\ell_i,r_{k+1})\leq 2P+1$.
		\item $u=r_j\in R$ and $j\neq i$: By Claim~\ref{OuterNodesApp} $d(\ell_i,r_j)=2P+1$.
		\item $u=r_j\in R$ and $j=i$: Note that either $\ell_i$ is connected to $\ell_{k+1}$ directly by an edge or $r_i$ is connected to $r_{k+1}$ directly by an edge. Thus, one of the distances $d(\ell_i,r_{k+1}),d(\ell_{k+1},r_i)$ must be equal to $2$, and both $d(\ell_i,\ell_{k+1}),d(r_{k+1},r_i)$ are at most equal to $2P$, thus,  $d(\ell_i,r_i)\leq 2P+2$.
	\end{enumerate}
\end{proof}
\end{claim}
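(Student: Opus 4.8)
The plan is to prove the bound constructively: in each case I would simply exhibit one explicit path from $\ell_i$ to $u$ of length at most $2P+2$, so no optimality argument is ever needed. First I would record the relevant ``backbone'' structure of the construction from Section~\ref{appDiamCons}. On the left side there is a fixed path $\ell_i \to \ell_{k} \to \ell_{k+1}$ of length $2P$ (two length-$P$ paths glued at $\ell_{k}$), and symmetrically a fixed path $r_{k+1} \to r_{k} \to r_j$ of length $2P$ for every $r_j\in R$. The central edge $(\ell_{k+1}, r_{k+1})$ survives the transformation, since it is not incident to $\ell_{k}$ or $r_{k}$, so $d(\ell_{k+1}, r_{k+1})=1$. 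Finally, the two input-dependent edges $(\ell_i,\ell_{k+1})$ (present iff $S_a[i]=0$) and $(r_i,r_{k+1})$ (present iff $S_b[i]=0$) are single edges that ``shortcut'' the corresponding backbone from length $2P$ down to length $1$.

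Then I would split on the target $u$. Case $u=r_{k+1}$: already without the hypothesis, $\ell_i \to \ell_{k} \to \ell_{k+1} \to r_{k+1}$ has length $2P+1\le 2P+2$. Case $u=r_j$ with $j\neq i$: then $i$ and $j$ differ in some bit $h$, and assuming w.l.o.g.\ $i^h=1$, $j^h=0$, the node $\ell_i$ is joined to $t_h$ by a length-$P$ path and $r_j$ to $f'_h$ by a length-$P$ path, and since $t_h f'_h\in E$ this yields a path of length $2P+1$ (this is precisely the intermediate computation already carried out inside the proof of Claim~\ref{OuterNodesApp}, which one could also just invoke). Case $u=r_i$: this is the only case that actually uses the hypothesis, since $i$ agrees with itself on every bit and the bit-gadget offers no shortcut between $\ell_i$ and $r_i$. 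Here I would use the disjunction $S_a[i]=0\ \vee\ S_b[i]=0$: if $S_a[i]=0$, take $\ell_i \to \ell_{k+1} \to r_{k+1} \to r_{k} \to r_i$ of length $1+1+P+P=2P+2$; if instead $S_b[i]=0$, take the mirror path $\ell_i \to \ell_{k} \to \ell_{k+1} \to r_{k+1} \to r_i$ of length $P+P+1+1=2P+2$.

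The only point requiring care — the ``main obstacle'', such as it is — is this last case: one must check that a single added edge on either side already suffices, i.e.\ that we never need both. This works because shortcutting just one of the two symmetric length-$2P$ backbones to a single edge still leaves a residual path of cost $2P+1$ (one full backbone of length $2P$ plus the central edge $(\ell_{k+1},r_{k+1})$), and the edge used for the shortcut contributes the final $+1$, totalling $2P+2$. Everything else is routine, since for an upper bound I only need to produce one short path and need not rule out detours through $L'$, $R'$, the bit-gadget, or the attachment paths from $F\cup T$ (resp.\ $F'\cup T'$) to $\ell_{k+1}$ (resp.\ $r_{k+1}$). I expect this claim to then be combined with Claims~\ref{InnerNodesApp} and~\ref{OuterNodesApp} to show that the diameter is $O(P)$ (roughly $4P$) when the sets are disjoint but at least $6P+1$ when they are not, so that choosing $P$ a large enough constant forces the $(3/2-\varepsilon)$ gap needed for Theorem~\ref{thm:DA}.
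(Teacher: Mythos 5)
Your proof is correct and follows essentially the same route as the paper's: the same three-way case split on $u$, with case 3 resolved by observing that whichever of the two input-dependent edges is present shortcuts one side to length $2$ while the other side still reaches $r_{k+1}$ (resp.\ $\ell_{k+1}$) within $2P$, giving $2P+2$. Your explicit paths are just a concrete spelling-out of the paper's distance bounds, so there is nothing further to add.
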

Note that any node in $V_b$ is connected by a path of length at most $P$ to some node in $R\cup \{r_{k+1}\}$, and any node in $L'$ is connected by a path of length $P$ to some node in $L$. Combining this with Claim~\ref{disjoint} gives the following.

\begin{corollary}\label{disjointCor}
		Let $0\leq i\leq k-1$ be such that $S_a[i]=0$ or $S_b[i]=0$. Then $d(u,v_b)\leq 4P+2$ for any $u\in \{\ell'_i\}\cup Y(\ell'_i,\ell_i)$ and any $v_b\in V_b$. Symmetrically, $d(u,v_a)\leq 4P+2$ for any $u\in \{r'_i\}\cup Y(r'_i,r_i)$ and any $v_a\in V_a$.
\end{corollary}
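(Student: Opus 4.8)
The plan is to derive this corollary from Claim~\ref{disjoint} by a single application of the triangle inequality, routing every path through the node $\ell_i$ and through a suitable node $w \in R \cup \{r_{k+1}\}$ chosen to be close to the target $v_b$. The corollary is genuinely a corollary: the only new content beyond Claim~\ref{disjoint} is the verification of the two ``anchoring'' observations stated just before it, after which the bound $4P+2$ falls out as a sum of three segments.

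First I would pin down the left endpoint. Every $u \in \{\ell'_i\} \cup Y(\ell'_i,\ell_i)$ lies on the length-$P$ path that change~4 of the construction adds between $\ell'_i$ and $\ell_i$, and $\ell_i$ is an endpoint of that path; hence $d(u,\ell_i) \le P$ for all such $u$, the extreme case being $u=\ell'_i$ at distance exactly $P$. Next I would pin down the right endpoint by verifying the observation that every $v_b \in V_b$ lies within distance $P$ of some $w \in R \cup \{r_{k+1}\}$. This is a short case check over the families comprising $V_b$: the subdivision nodes in $Y(r_i,f'_j)$, $Y(r_i,t'_j)$, $Y(r'_i,r_i)$ and $Y(r_i,r_k)$ are each within $P$ of $r_i \in R$; each node of $R'$ sits at distance $P$ from its $r_i \in R$; the nodes of $F'$, $T'$, together with $r_k$ and the interior of $Y(r_k,r_{k+1})$, all reach $r_{k+1}$ within $P$ via the length-$P$ paths added in the construction; and $r_{k+1}$ itself lies in $R \cup \{r_{k+1}\}$ trivially.

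With these two facts in hand, and with Claim~\ref{disjoint} supplying $d(\ell_i,w) \le 2P+2$ for every $w \in R \cup \{r_{k+1}\}$ under the hypothesis $S_a[i]=0$ or $S_b[i]=0$, the triangle inequality closes the argument:
\[
d(u,v_b) \;\le\; d(u,\ell_i) + d(\ell_i,w) + d(w,v_b) \;\le\; P + (2P+2) + P \;=\; 4P+2 .
\]
The symmetric claim for $u \in \{r'_i\} \cup Y(r'_i,r_i)$ and $v_a \in V_a$ follows verbatim after exchanging the roles of the $L$-side and the $R$-side (and of $\ell_{k+1}$ with $r_{k+1}$, and of $S_a$ with $S_b$), since the construction and the statement of Claim~\ref{disjoint} are symmetric under this swap; in particular the symmetric form of Claim~\ref{disjoint} gives $d(r_i,w) \le 2P+2$ for every $w \in L \cup \{\ell_{k+1}\}$.

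The step I expect to require the most care is the right-endpoint case analysis, since $V_b$ is an inhomogeneous union of original nodes, path-subdivision nodes, and the auxiliary set $R'$. I would make sure each family is anchored to the correct witness---$r_i$ for the $R$-adjacent families and $r_{k+1}$ for the $F',T',r_k$ families---so that the distance-$P$ bound holds uniformly; getting the witness right is what keeps the middle segment at $2P+2$ rather than a larger quantity. Everything else is a routine concatenation of paths, and no new gadget or inequality is needed.
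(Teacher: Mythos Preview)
Your argument is correct and matches the paper's own reasoning exactly: the paper derives the corollary from Claim~\ref{disjoint} together with the two anchoring observations that every $v_b\in V_b$ is within distance $P$ of some node in $R\cup\{r_{k+1}\}$ and that every node of $\{\ell'_i\}\cup Y(\ell'_i,\ell_i)$ is within distance $P$ of $\ell_i$. You have simply spelled out the case analysis for the right-anchor observation in more detail than the paper does.
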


\begin{lemma}
	
	The Diameter of G is 6P+1 if the two sets of Alice and Bob are not disjoint, and 4P+2 otherwise.
\end{lemma}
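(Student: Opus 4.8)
The plan is to prove the two directions of the stated equality separately (the not‑disjoint case and the disjoint case), pinning down $\mathrm{diam}(G)$ in each via a matching upper and lower bound; both lower bounds will come from a carefully chosen pair $(\ell'_i,r'_j)$, and both upper bounds from classifying every node by how far it is from $\ell_{k+1}$ and from $r_{k+1}$, refined in the disjoint case using Corollary~\ref{disjointCor}.

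Suppose first the sets are \emph{not} disjoint and fix an index $i$ with $S_a[i]=S_b[i]=1$. By the reduction neither $\ell_i\ell_{k+1}$ nor $r_ir_{k+1}$ is an edge, so the three routes listed in the proof of Claim~\ref{OuterNodesApp} are exactly the options for an $\ell'_i$--$r'_i$ path and $d(\ell'_i,r'_i)=6P+1$, i.e.\ $\mathrm{diam}(G)\ge 6P+1$. For the upper bound I would prove the sub-claim that \emph{every node is within distance $3P$ of $\ell_{k+1}$ or within distance $3P$ of $r_{k+1}$}: the proof of Claim~\ref{InnerNodesApp} already gives this with $2P$ for every node outside $L'\cup R'\cup\bigcup_i Y(\ell'_i,\ell_i)\cup\bigcup_i Y(r'_i,r_i)$, while a node of $\{\ell'_i\}\cup Y(\ell'_i,\ell_i)$ is within $P$ of $\ell_i$, which is within $2P$ of $\ell_{k+1}$ (via $\ell_k$), and symmetrically on the right. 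Using the edge $\ell_{k+1}r_{k+1}$ this yields $d(u,v)\le 3P+1+3P=6P+1$ for all $u,v$, hence $\mathrm{diam}(G)=6P+1$.

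Now suppose the sets \emph{are} disjoint, so every index $i$ has $S_a[i]=0$ or $S_b[i]=0$ and Claim~\ref{disjoint} and Corollary~\ref{disjointCor} are available for all $i$. I would bound $d(u,v)$ by a case analysis on the types of $u$ and $v$ --- an \emph{inner} node (not in $L'\cup R'\cup\bigcup_i Y(\ell'_i,\ell_i)\cup\bigcup_i Y(r'_i,r_i)$), a node of $\{\ell'_i\}\cup Y(\ell'_i,\ell_i)$, or a node of $\{r'_j\}\cup Y(r'_j,r_j)$. Inner--inner is $\le 4P+1$ by Claim~\ref{InnerNodesApp}; a left node together with \emph{any} node of $V_b$ (which includes all right nodes) is $\le 4P+2$ by Corollary~\ref{disjointCor}, and symmetrically a right node with any node of $V_a$; a left node with a left node is $\le P+2P+P=4P$ (and symmetrically on the right). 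The one remaining case is a left node $u\in\{\ell'_i\}\cup Y(\ell'_i,\ell_i)$ with an inner node $v\in V_a$, where $d(u,v)\le d(u,\ell_i)+d(\ell_i,v)\le P+3P=4P$ once one checks $d(\ell_i,v)\le 3P$ for every inner $v\in V_a$ --- the extreme case being the bit-gadget node encoding the complement of a bit of $i$, reached in $3P$ steps through $\ell_{k+1}$. Altogether $d(u,v)\le 4P+2$ for all $u,v$. For the matching lower bound, Claim~\ref{OuterNodesApp} already gives a pair $(\ell'_i,r'_j)$ with $i\ne j$ at distance $4P+1$, and if some index $i$ has exactly one of $S_a[i],S_b[i]$ equal to $1$ then the bit gadget provides no $\ell_i$--$r_i$ shortcut, so $d(\ell_i,r_i)=2P+2$ and $d(\ell'_i,r'_i)=4P+2$; the only disjoint input for which no such index exists is $S_a=S_b=0^k$, a trivial special case (and in any event the $(3/2-\varepsilon)$-approximation argument uses only $\mathrm{diam}(G)\le 4P+2$ here versus $=6P+1$ above).

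The main obstacle is the disjoint upper bound, and within it the single case of a stretched outer node against a same-side inner node: Corollary~\ref{disjointCor} is useless there, so one must argue directly about routing in the bit gadget to see that no inner $V_a$ node is farther than $3P$ from $\ell_i$. I would also double-check that the case split is exhaustive --- in particular that all internal nodes of the length-$P$ replacement paths are accounted for (they only shorten distances, but each must be classified into one of the three types). The non-disjoint direction is comparatively routine once the ``within $3P$ of $\ell_{k+1}$ or $r_{k+1}$'' sub-claim is in hand.
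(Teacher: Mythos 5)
Your proof is correct and follows essentially the same route as the paper's: the lower bound in the non-disjoint case comes from the pair $(\ell'_i,r'_i)$ via the three options in Claim~\ref{OuterNodesApp}, and the disjoint upper bound comes from the same case analysis using Claim~\ref{InnerNodesApp} and Corollary~\ref{disjointCor}. The only differences are that the paper dispatches all of $V_a\times V_a$ (and $V_b\times V_b$) in one line by noting every node of $V_a$ is within $2P$ of $\ell_k$ --- which subsumes your more laborious ``left node versus same-side inner node'' sub-case and the $d(\ell_i,v)\leq 3P$ claim --- and that the paper only proves the two inequalities actually needed for the reduction ($\geq 6P+1$ when not disjoint, $\leq 4P+2$ when disjoint), whereas you additionally establish the matching bounds and correctly flag the $S_a=S_b=0^k$ edge case.
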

\begin{proof}
		Consider the case in which the two sets are not disjoint i.e., there is some $0\leq i\leq k-1$ such that $S_a[i] = 1$ and $S_b[i] = 1$. Note that in this case $\ell_i$ is not connected directly by an edge to $\ell_{k+1}$ and $r_i$ is not connected directly by an edge to $r_{k+1}$. Thus, there are only 3 options for any shortest path from $\ell'_i$ to $r'_i$, both are stated in Claim~\ref{OuterNodesApp}, which implies that in this case $d(\ell'_i,r'_i)=6P+1$. Therefore, the diameter of $G$ is at least $6P+1$.
		Consider the case in which the sets are disjoint i.e., for each $0\leq i\leq k-1$, either $S_a[i]=0$ or $S_b[i]=0$, we need to prove that for every $u,v \in V$, it holds that $d(u,v)\leq 4P+2$. There are 3 cases:
		\begin{enumerate}
			\item $u\in V_a$ and $v\in V_a$: Note that $d(u,\ell_k)\leq 2P$, the same holds for $v$. Thus $d(u,v)\leq 4P$.
			\item $u\in V_b$ and $v\in V_b$: Note that $d(u,r_k)\leq 2P$, the same holds for $v$. Thus $d(u,v)\leq 4P$.
			\item $u\in V_a$ and $v\in V_b$: Consider the case in which $u\in (V_a\setminus (L'\bigcup_{i\in [k-1]} Y(\ell'_i,\ell_i))$ and $v\in (V_b\setminus (R'\bigcup_{i\in [k-1]} Y(r'_i,r_i)))$. By Claim~\ref{InnerNodesApp} $d(u,v)\leq 4P+1$. Otherwise, by Corollary~\ref{disjointCor} $d(u,v)\leq 4P+2$.
		\end{enumerate}	
\end{proof}
\paragraph{Proof of Theorem~\ref{thm:DA}} To complete the proof we need to choose $P$ such that $(\frac{3}{2}-\varepsilon)\cdot (4P+2) < (6P+1)$, this holds for any $P>\frac{1}{2\varepsilon}-\frac{1}{2}$. Note that $k=\Omega(\frac{n}{\log(n)})$ for a constant $\varepsilon$. Thus, we deduce that any algorithm for computing $(\frac{3}{2}-\varepsilon)$-approximation to the diameter requires at least $\Omega(\frac{n}{\log^3(n)})$ rounds. Furthermore, the number of nodes and the number of edges are both equal to $\Theta(k\log(k)\cdot P)$. Thus, this lower bound holds even for graphs with linear number of edges.

	\section{Computing the Radius}\label{rad}
	
	In this section we present lower bounds on the number of rounds needed to compute the radius \emph{exactly} and \emph{approximately} in sparse networks.
	
	\subsection{Exact Radius}

	\begin{theorem}\label{ExactRad}
		The number of rounds needed for any protocol to compute the radius of a sparse network of constant diameter in the $CONGEST$ model is $\Omega(n/\log^2{n})$.
	\end{theorem}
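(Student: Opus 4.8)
The plan is to follow the template of the exact‑diameter proof in Section~\ref{exDiam}: build a sparse graph $G$ of constant diameter containing a bit‑gadget, partition it into $G_a,G_b$ with $G_a\cup G_b=G$ and with only $O(\log k)$ cut edges (the bit‑gadget cross‑edges), let the $k$‑bit strings $S_a,S_b$ of Alice and Bob control a handful of ``toggle'' edges, and prove that the radius of $G$ attains one of two consecutive values according to whether $S_a\cap S_b=\emptyset$. Given such a construction, any $r$‑round radius protocol yields a Set‑Disjointness protocol exchanging $O(r\log k\log n)$ bits, so $r=\Omega\!\big(k/(\log k\log n)\big)=\Omega(n/\log^2 n)$ since $k=\Omega(n)$; the number of edges is $O(n\log n)$, so the bound holds for sparse networks (and, as in Section~\ref{exDiam}, also against randomized protocols).

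The genuinely new point, compared with the diameter, is that the radius is a \emph{min} of eccentricities: to force the radius to be large it is not enough to exhibit one far pair, we must ensure that \emph{every} vertex has large eccentricity. In particular the graph of Section~\ref{exDiamCons} is unusable as is, because the hub vertices $a,b,\ell_{k+1},r_{k+1}$ are within distance $3$ of all of $V$, so its radius is a fixed constant independent of $S_a,S_b$. Hence the construction must be reengineered to destroy all ``universal near‑centers''. Concretely I would (i) delete $a$ and $b$ and rewire the $\ell_{k+1}$--$r_{k+1}$ link so that the routes between the two sides are long and must pass through the bit‑gadget; (ii) attach a pendant $\ell'_i$ to each $\ell_i$ and $r'_i$ to each $r_i$, possibly via a short path; and (iii) replace a few edges by constant‑length paths so as to stretch the relevant distances. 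The bit‑gadget stays intact, so $d(\ell'_i,r'_j)$ remains small whenever $i\ne j$, and the toggle edges are set up so that $d(\ell'_i,r'_i)$ is small when $S_a[i]=S_b[i]=0$ and increases with each of $S_a[i],S_b[i]$ that equals $1$, the constants being tuned so that $i\in S_a\cap S_b$ pushes $d(\ell'_i,r'_i)$ strictly above $2\rho$ for the intended radius threshold $\rho$ while disjointness keeps all of these distances at most $2\rho$. The two directions then follow: if $i\in S_a\cap S_b$ then $d(\ell'_i,r'_i)\ge 2\rho+1$, so by the triangle inequality $\max\{d(v,\ell'_i),d(v,r'_i)\}\ge\rho+1$ for every vertex $v$, whence the radius is $\ge\rho+1$; and if $S_a\cap S_b=\emptyset$ then every $d(\ell'_i,r'_i)$ is at most $2\rho$ and one designated vertex is shown to lie within $\rho$ of all of $V$, so the radius is $\le\rho$. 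The partition and the simulation are exactly as in Section~\ref{exDiamRe}: Alice owns the $\ell$‑side with its pendants and one half of the bit‑gadget, Bob the symmetric part, the cut is the $O(\log k)$ bit‑gadget edges, and only Alice (resp.\ Bob) needs to know whether the toggle edge at $\ell_i$ (resp.\ $r_i$) is present.

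The step I expect to be the main obstacle is reconciling the two halves of the final lemma. A vertex within $\rho$ of every $r'_i$ must be close to the whole right side, and a vertex within $\rho$ of every $\ell'_i$ must be close to the whole left side, yet any vertex simultaneously close to both sides would short‑circuit the distances $d(\ell_i,r_i)$ and kill the reduction --- which is precisely why the hubs $a,b,\ell_{k+1},r_{k+1}$ had to go. Arranging that a designated center still exists in the disjoint case, while the bit‑gadget --- which offers a robustly short two‑step detour through any index $j\ne i$ --- does not already collapse all distances, is what forces the careful choice of the path lengths and of $\rho$ in step (iii); and checking that this designated vertex really dominates, within distance $\rho$, every remaining vertex type (the pendants, the $\ell_i,r_i$, the bit‑gadget vertices, and the path‑interior vertices) is the part carrying the heaviest bookkeeping. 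Everything else --- the case analysis pinning the value of the radius, the vertex/edge count confirming sparsity and constant diameter, and the communication‑to‑rounds reduction --- is routine and parallels Section~\ref{exDiam}.
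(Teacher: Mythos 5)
Your overall framework matches the paper's (reduction from Set-Disjointness through the $O(\log k)$-edge bit-gadget cut, $k=\Omega(n)$, and the correct observation that the hubs $a,b,\ell_{k+1},r_{k+1}$ give the diameter construction a constant radius independent of the input), and your triangle-inequality remark --- one pair at distance $\ge 2\rho+1$ forces every eccentricity to be $\ge \rho+1$ --- is valid. The gap is that the \emph{polarity} of your encoding is the wrong way around, and the construction you describe cannot exist. You want disjointness to yield a designated center $c$ with $e(c)\le\rho$, and an intersecting index to yield a pair $\ell'_i,r'_i$ at distance $\ge 2\rho+1$. But the center must satisfy $d(c,\ell'_i)\le\rho$ and $d(c,r'_i)\le\rho$ on \emph{every} disjoint input, including inputs with $S_a[i]=1,S_b[i]=0$ (Alice's toggle at $i$ absent) and inputs with $S_a[i]=0,S_b[i]=1$. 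In the first case the short $c\to\ell'_i$ path cannot use Alice's toggle at $i$, and routing it through Bob's toggle at $i$ would force it through $r_i$ and then across the bit-gadget to $\ell_i$ --- exactly the distance the gadget is built to make large --- so that path is toggle-free at index $i$; symmetrically the second case yields a toggle-free $c\to r'_i$ path of length $\le\rho$. Since the construction is symmetric over indices, both toggle-free paths are present in every input, in particular the intersecting one, giving $d(\ell'_i,r'_i)\le 2\rho$ and destroying the far pair. You half-identify this tension yourself, but it is not resolvable by tuning path lengths: it is the $\exists\forall$ versus $\forall\exists$ type mismatch the paper discusses --- a single vertex is being asked to certify the universally quantified statement ``for all $i$, not both bits are $1$,'' which it can only do via input-independent short paths.

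The paper's proof inverts the encoding: toggle edges are added when a bit equals $1$, and the radius is \emph{small} (exactly $3$) iff the sets are \emph{not} disjoint --- the witness $i\in S_a\cap S_b$ directly supplies the center $\ell_i$, whose length-$3$ path to $r_i$ uses both toggle edges at $i$ (present precisely because both bits are $1$), while the bit-gadget handles all $r_j$ with $j\ne i$. The disjoint direction is not argued via a single far pair (the gap $4$ versus $3$ is far below a factor of $2$); instead the paper shows $d(\ell_i,r_i)=4$ for every $i$ (after adding the edges $f_j$--$t_j$ and $f'_j$--$t'_j$ inside the bit-gadget so that this distance drops from $5$ to $4$), and eliminates all candidate centers outside $L$ with a single pendant path $x_1,x_2,x_3$ attached to all of $L$, so that every vertex has eccentricity at least $4$. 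No stretching of edges into length-$P$ paths is needed, which is what keeps the graph at constant diameter and the bound at $\Omega(n/\log^2 n)$ rather than $\Omega(n/\log^3 n)$.
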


	As in the previous sections, first we describe a graph construction, and then apply a reduction from the Set-Disjointness problem.
	
	\subsubsection{Graph construction}\label{exRadCons}
	
	\begin{figure}[h]
		\begin{center}
			\includegraphics[scale=0.6]{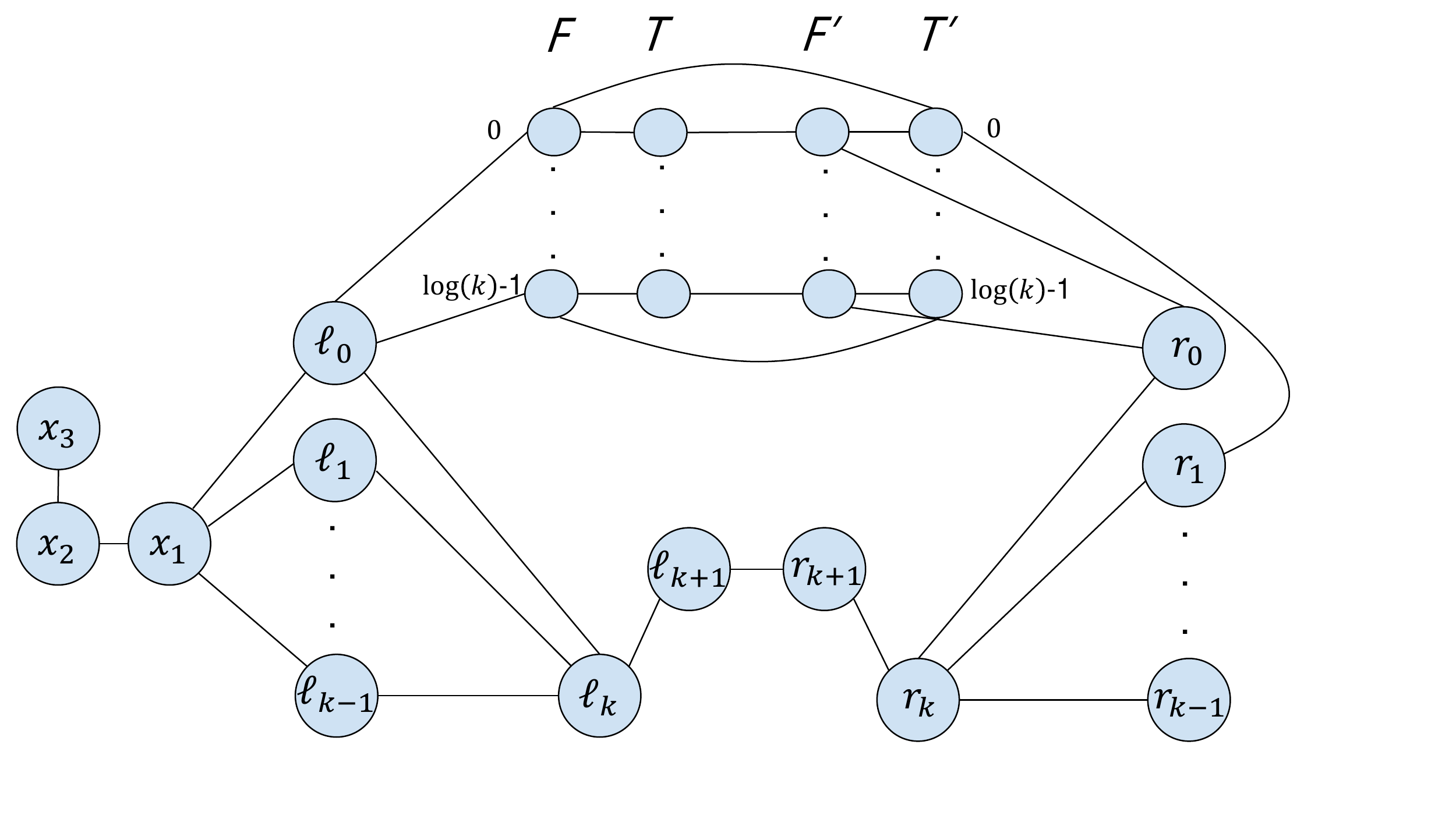}
		\end{center}
		\caption{Graph\ Construction (radius).}
	\end{figure}
	
	The graph construction for the radius is very similar to the one described in Section~\ref{exDiamCons}. We only apply the following changes to the construction described in Section~\ref{exDiamCons} (see also Figure 3):\footnote{Note that for the sake of simplicity, some of the edges are omitted from Figure 3.}
	
	\begin{enumerate}
		\item Remove the nodes $a,b$ and their incident edges.
		\item For each $0\leq i \leq \log(k)-1$ we add an edge between the nodes $f_i\in F$ and $t_i\in T$. Similarly, we add an edge between $f_i'\in F'$ and $t_i'\in T'$.
		\item Add a small gadget which connects each $\ell_i\in L$ to a new node $x_3$ by a path of length 3 $(x_1,x_2,x_3)$.
	\end{enumerate}

	\begin{claim}\label{ex-Ra-O-nodes}
		
		For every $i,j\in [k-1]$ it holds that $d(\ell_i,r_j)=3$ if $i\neq j$, and $d(\ell_i,r_j)=4$ otherwise.
		
		\begin{proof}
			
			If $i\neq j$, there must be some bit $h$, such that $i^h\neq j^h$. Assume without loss of generality that $i^h=1$ and $j^h=0$. Then, $\ell_i$ is connected to $t_h$ and $r_j$ is connected to $f'_h$. Since $t_h$ and $f'_h$ are connected by an edge, $d(\ell_i,r_j)=3$.
			For the second part of the claim, note that all the bits in the binary representation of $i$ are the same as the bits in the binary representation of $j$. Thus, for any edge connecting $\ell_i$ to some node $f_h\in F$, the corresponding edge from $r_j$ is connected to the node $f'_h\in F'$ and not to $t'_h\in T'$. Similarly, for any edge connecting $\ell_i$ to some node $t_h\in T$, the corresponding edge from $r_j$ is connected to the node $t'_h\in T'$ and not to $f'_h\in F'$. Note that for every $0\leq h\leq \log(k)-1$ the nodes $f_h$ and $f'_h$ are not connected directly by an edge and $d(f_h,f'_h)=2$ (the same holds for $t_h$ and $t'_h$). Therefore, $d(\ell_i,r_j)=4$ if $i=j$.
		\end{proof}
		
	\end{claim}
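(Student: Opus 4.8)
The plan is to reduce the claim to a careful analysis of how a path can cross from the ``left'' cluster $L$ to the ``right'' cluster $R$. After the removal of $a,b$, the only edges joining a neighbor of some $\ell_i$ to a neighbor of some $r_j$ are the \emph{crossing edges} of the bit-gadget, namely $f_h$--$t'_h$ and $t_h$--$f'_h$; the only other connection between the two sides runs through the spine $\ell_i$--$\ell_k$--$\ell_{k+1}$--$r_{k+1}$--$r_k$--$r_j$, which already has length $5$. First I would record that the neighborhoods $N(\ell_i)$ and $N(r_j)$ are disjoint (the left bit-nodes in $F\cup T$, together with $\ell_k$ and the first vertex of the $x$-gadget, on one side; the right bit-nodes in $F'\cup T'$ together with $r_k$ on the other), so $d(\ell_i,r_j)\ge 3$ in all cases, and any path of length exactly $3$ must have the form $\ell_i$--$u$--$w$--$r_j$ whose middle edge $u$--$w$ is a crossing edge.

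The key structural observation is then that such a length-$3$ path exists \emph{iff} $i$ and $j$ differ in some bit. Indeed, a crossing edge incident to $\ell_i$'s bit-$h$ neighbor reaches $r_j$'s bit-$h$ neighbor only in the two patterns $(f_h,t'_h)$ and $(t_h,f'_h)$, and each of these forces $i^h\neq j^h$. Hence for $i\neq j$ I would pick a differing bit $h$ and exhibit the path $\ell_i$--$t_h$--$f'_h$--$r_j$ (when $i^h=1,\,j^h=0$), or symmetrically $\ell_i$--$f_h$--$t'_h$--$r_j$, giving $d(\ell_i,r_j)=3$, exactly in parallel with Claim~\ref{OuterNodes}.

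For $i=j$ the same observation shows that no crossing edge joins a neighbor of $\ell_i$ to a neighbor of $r_i$: for every bit $h$ the two relevant vertices are of the \emph{same} type ($f_h,f'_h$ or $t_h,t'_h$), whereas crossing edges join opposite types. So there is no length-$3$ path through the gadget, the spine detour costs $5$, and the neighborhood-disjointness kills lengths $1$ and $2$; together these give $d(\ell_i,r_i)\ge 4$. For the matching upper bound I would invoke the newly added same-side edges $f_h$--$t_h$ and $f'_h$--$t'_h$: when $i^h=0$ the path $\ell_i$--$f_h$--$t'_h$--$f'_h$--$r_i$ has length $4$, and symmetrically $\ell_i$--$t_h$--$f'_h$--$t'_h$--$r_i$ when $i^h=1$, so $d(\ell_i,r_i)=4$.

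The main obstacle is the lower bound in the case $i=j$, where one must rule out \emph{every} path of length at most $3$, not merely the ``bit-matched'' ones. This is dispatched by the disjointness of $N(\ell_i)$ and $N(r_i)$ together with the enumeration of bridging edges above: any length-$3$ candidate must use a crossing edge as its middle edge, and each crossing edge requires a differing bit, while the $x$-gadget leads only back into $L$ and the spine incurs the length-$5$ detour. I would also note that the new same-side edges cannot serve as the middle edge of any length-$3$ $\ell_i$-to-$r_j$ path, since both of their endpoints lie on the same side and hence cannot simultaneously be a neighbor of $\ell_i$ and of $r_j$.
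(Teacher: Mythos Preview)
Your proof is correct and follows essentially the same approach as the paper: exhibit the length-$3$ path via a differing bit when $i\neq j$, and for $i=j$ observe that every crossing edge forces a bit mismatch while the detour through $\ell_k,\ell_{k+1},r_{k+1},r_k$ is too long, then use the new same-side edges $f'_h$--$t'_h$ (equivalently, the paper's observation that $d(f_h,f'_h)=2$) for the length-$4$ upper bound. Your argument for the lower bound when $i=j$ is in fact more carefully stated than the paper's, which focuses on paths through the bit-gadget without explicitly dismissing the $\ell_k$ and $x_1$ alternatives.
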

	
	\subsubsection{Reduction from Set-Disjointness}
	
	The reduction is similar to the one used to prove Theorem~\ref{thm:sparse}. First we define a partition $(G_a=(V_a,E_a),G_b=(V_b,E_b))$:
	
	\begin{flalign}
	\nonumber
	&V_a = L\cup F\cup T\cup \{\ell_{k},\ell_{k+1},x_1,x_2,x_3\}\\\nonumber
	&E_a = \{(u,v)|u,v\in V_a \wedge (u,v)\in E \}\\\nonumber
	\nonumber
	&V_b = R\cup F'\cup T'\cup \{r_{k},r_{k+1}\}\\\nonumber
	&E_b = \{(u,v)|u,v\in V_b \wedge (u,v)\in E \}\\\nonumber
	\end{flalign}
	
	Each player receives an input string ($S_a$ and $S_b$) of $k$ bits. If $S_a[i]=1$, Alice adds an edge between the nodes $\ell_i$ and $\ell_{k+1}$. Similarly, if $S_b[i]=1$, Bob adds an edge between the nodes $r_i$ and $r_{k+1}$. Note that unlike the reduction from Section~\ref{exDiamRe}, here, each player adds some edge if the corresponding bit in its input string is 1, rather than 0.
	
	\begin{observation}\label{ex-Ra-min-Ecc-in-L} 
		
		For every node $u\in V\setminus L$ it holds that $e(u)\geq 4$.
		\begin{proof}
			There are two cases:
			\begin{enumerate}
				\item $u\in V\setminus (L\cup \{x_1,x_2,x_3\})$: Note that $d(u,x_3)\geq 4$, since the only nodes that are connected to $x_1$ are the nodes in $L$.
				\item $u\in \{x_1,x_2,x_3\})$: Note that $d(u,r_k)\geq 4$
			\end{enumerate}
		\end{proof}
	\end{observation}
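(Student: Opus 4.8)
The plan is to prove the bound by exhibiting, for each node $u\in V\setminus L$, an explicit far-away witness $w$ with $d(u,w)\geq 4$; the point of the observation is that it will later force any node of eccentricity $3$ to lie inside $L$, which is what the radius reduction will build on. The structural fact to exploit is that the little path gadget $x_1,x_2,x_3$ hangs off the rest of the graph only through $L$: the neighbours of $x_1$ are exactly $L\cup\{x_2\}$, the neighbours of $x_2$ are $x_1$ and $x_3$, and the only neighbour of $x_3$ is $x_2$. Hence $x_1$ is the unique gateway into $\{x_1,x_2,x_3\}$, and to reach it from the outside one must first reach some $\ell_i\in L$.

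I would then split into two cases. If $u\in V\setminus(L\cup\{x_1,x_2,x_3\})$, take $w=x_3$: by the gateway structure every path from $u$ to $x_3$ must end with the subpath $\ell_i\to x_1\to x_2\to x_3$ for some $\ell_i\in L$, which is three edges, and $d(u,\ell_i)\geq 1$ since $u\notin L$, so $d(u,x_3)\geq 4$; note this case is input-independent. If $u\in\{x_1,x_2,x_3\}$, take $w=r_k$: any path from $x_1$ to $r_k$ first enters $L$, so $d(x_1,r_k)=1+\min_{\ell_i\in L}d(\ell_i,r_k)$, and $d(\ell_i,r_k)\geq 3$ because the neighbours of $\ell_i$ --- namely $x_1$, $\ell_k$, one node of $F\cup T$ per bit, and possibly $\ell_{k+1}$ when $S_a[i]=1$ --- are all non-adjacent to $r_k$, whose neighbours all lie in $R\cup\{r_{k+1}\}$; therefore $d(x_1,r_k)\geq 4$, and $d(x_2,r_k)$ and $d(x_3,r_k)$ are one and two larger, respectively.

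I do not expect a real obstacle here: the whole argument is a routine bottleneck analysis. The one point worth double-checking is that the input-dependent edges cannot create a short detour --- in particular that the edge $(\ell_i,\ell_{k+1})$ present when $S_a[i]=1$ yields no length-$2$ path from $\ell_i$ to $r_k$ (it does not, since $\ell_{k+1}$ is adjacent to $r_{k+1}$ but not to $r_k$), so that $d(\ell_i,r_k)\geq 3$ holds for every $i$ irrespective of the input strings; once this is in place, $e(u)\geq 4$ follows in both cases.
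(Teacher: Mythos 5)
Your proof is correct and follows the same route as the paper: the same two-case split with the same witnesses ($x_3$ for nodes outside the gadget, $r_k$ for $x_1,x_2,x_3$), only spelled out in more detail --- in particular, you justify the paper's unexplained claim that $d(u,r_k)\geq 4$ for $u\in\{x_1,x_2,x_3\}$ by checking that $\ell_i$ and $r_k$ have no common neighbor even in the presence of the input edge $(\ell_i,\ell_{k+1})$. No issues.
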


	\begin{lemma}\label{ex-Ra-Dist}
		
		The radius of $G$ is 3 if and only if the two sets of Alice and Bob are not disjoint.
		
		\begin{proof}
			
			Consider the case in which the two sets are disjoint. Note that for every $0\leq i\leq k-1$, it holds that $d(\ell_i,r_i) = 4$, since either $\ell_i$ is not connected directly by an edge to $\ell_{k+1}$ or $r_i$ is not connected directly by an edge to $r_{k+1}$. Combining this with Observation~\ref{ex-Ra-min-Ecc-in-L} and Claim~\ref{ex-Ra-O-nodes} we get that for every $u\in V$, $e(u)\geq 4$. Thus, the radius of $G$ is at least 4 as well.
			In case the two sets are not disjoint, i.e., there is some $0\leq i\leq k-1$ such that $S_a[i] = 1$ and $S_b[i] = 1$, we show that $e(\ell_i) = 3$: From Claim~\ref{ex-Ra-O-nodes}, for every $0\leq j\leq k-1$ such that $i\neq j$, it holds that $d(\ell_i,r_j)=3$, and since $S_a[i] = 1$ and $S_b[i] = 1$, it holds that $d(\ell_i,r_i)=3$ as well (the path through the nodes $\ell_{k+1}$ and $r_{k+1}$). It is straightforward to see that for every $u\in V\setminus R$, it holds that $d(\ell_i,u)\leq 3$. Therefore, the radius of $G$ is $\min_{u\in V}e(u)$ which is $e(\ell_i) = 3$.
		\end{proof}
	\end{lemma}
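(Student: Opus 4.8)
The statement is a biconditional about the exact radius, so the plan is to prove the two implications separately, and in both directions the central reduction is that, by Observation~\ref{ex-Ra-min-Ecc-in-L}, every node outside $L$ has eccentricity at least $4$; hence the radius can only drop to $3$ through a node of the form $\ell_i$, and it suffices to control the eccentricities of the $L$-nodes. Throughout I would lean on the distance table of Claim~\ref{ex-Ra-O-nodes}, namely $d(\ell_i,r_j)=3$ for $i\ne j$ and $d(\ell_i,r_i)=4$ in the input-free construction, together with the convention that Alice (resp.\ Bob) inserts the shortcut edge $(\ell_i,\ell_{k+1})$ (resp.\ $(r_i,r_{k+1})$) exactly when $S_a[i]=1$ (resp.\ $S_b[i]=1$).

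For the disjoint direction I would argue the radius is at least $4$. Fix any index $i$; disjointness means $S_a[i]=0$ or $S_b[i]=0$, so at least one of the two shortcut edges $(\ell_i,\ell_{k+1})$, $(r_i,r_{k+1})$ is absent. I would check that this prevents the length-$3$ path through $\ell_{k+1}$ and $r_{k+1}$ from existing, so the shortest $\ell_i$-to-$r_i$ route is the bit-gadget route of length $4$ guaranteed by Claim~\ref{ex-Ra-O-nodes}; thus $e(\ell_i)\ge d(\ell_i,r_i)=4$. Combining this with Observation~\ref{ex-Ra-min-Ecc-in-L} for the remaining nodes shows every node has eccentricity at least $4$, so the radius is at least $4$ and in particular is not $3$.

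For the not-disjoint direction I would exhibit a center of eccentricity exactly $3$. Taking an index $i$ with $S_a[i]=S_b[i]=1$, both shortcut edges are present, so the length-$3$ path through $\ell_{k+1}$ and $r_{k+1}$ witnesses $d(\ell_i,r_i)=3$; together with $d(\ell_i,r_j)=3$ for $j\ne i$ from Claim~\ref{ex-Ra-O-nodes} and a direct check that $d(\ell_i,u)\le 3$ for every $u\in V\setminus R$, this gives $e(\ell_i)=3$ and hence radius $\le 3$. To get equality I would observe that the radius is at least $3$ unconditionally: nodes outside $L$ have eccentricity $\ge 4$ by Observation~\ref{ex-Ra-min-Ecc-in-L}, while each $\ell_i$ satisfies $d(\ell_i,x_3)=3$ through the length-$3$ tail $(x_1,x_2,x_3)$, so no node has eccentricity below $3$. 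Together these bounds pin the radius at exactly $3$.

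The main obstacle is the verification, in the not-disjoint case, that $\ell_i$ truly reaches every non-$R$ node within $3$ steps --- this requires a small case analysis over the bit-gadget nodes $F,T,F',T'$, the path endpoints $\ell_k,\ell_{k+1},r_k,r_{k+1}$, and the $x$-gadget, using the radius-specific edges $(f_h,t_h)$ and $(f'_h,t'_h)$ to keep cross-gadget distances at $2$. The delicate point is the asymmetry of the construction: $\ell_i$ is close to all of $L\cup F\cup T$ and, via the shortcut, to $r_{k+1}$ and all of $R$, but I must confirm that the nodes on Bob's side that are not in $R$ (the $F',T'$ nodes and $r_k$) are still within distance $3$, which is where the added intra-gadget edges do the work.
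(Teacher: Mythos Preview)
Your proposal is correct and follows essentially the same route as the paper's proof: both directions hinge on Observation~\ref{ex-Ra-min-Ecc-in-L} to reduce to the eccentricities of $L$-nodes, and then use Claim~\ref{ex-Ra-O-nodes} together with the presence or absence of the shortcut path $\ell_i\!-\!\ell_{k+1}\!-\!r_{k+1}\!-\!r_i$ to determine $d(\ell_i,r_i)$. Your write-up is in fact slightly more careful than the paper's in two places: you explicitly justify the lower bound radius $\ge 3$ (via $d(\ell_i,x_3)=3$), and you spell out where the radius-specific edges $(f_h,t_h)$ and $(f'_h,t'_h)$ are needed in the $V\setminus R$ case analysis, which the paper leaves as ``straightforward.''
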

	
	\paragraph{Proof of Theorem~\ref{ExactRad}} From Lemma~\ref{ex-Ra-Dist}, we get that any algorithm for computing the exact radius of the graph $G$ solves the Set-Disjointness problem. Note that as the construction described in Section~\ref{exDiamCons}, there are $O(\log(k))$ edges in the cut ($G_a,G_b$). Thus, in each round Alice and Bob exchange $O(\log(k)\cdot \log(n))$ bits. Since $k = \Omega(n)$ we deduce that any algorithm for computing the exact radius of a network must spend ${\Omega}(\frac{n}{\log^2(n)})$ rounds, and since $|E|=O(n\log(n))$ this lower bound holds even for sparse networks.

	\subsection{$(\frac{3}{2}-\varepsilon)$ approximation to the Radius}
	
	We now show how to extend the construction described in Section~\ref{appDiamCons} to achieve the same asymptotic lower bound for any $(\frac{3}{2}-\varepsilon)$-approximation algorithm to the radius as well.

	\begin{theorem-repeat}{thm:Radius}
		\ThmRadius
	\end{theorem-repeat}

	\subsubsection{Graph construction}\label{appRadCons}
	
	\begin{figure}[h]
		\begin{center}
			\includegraphics[scale=0.6]{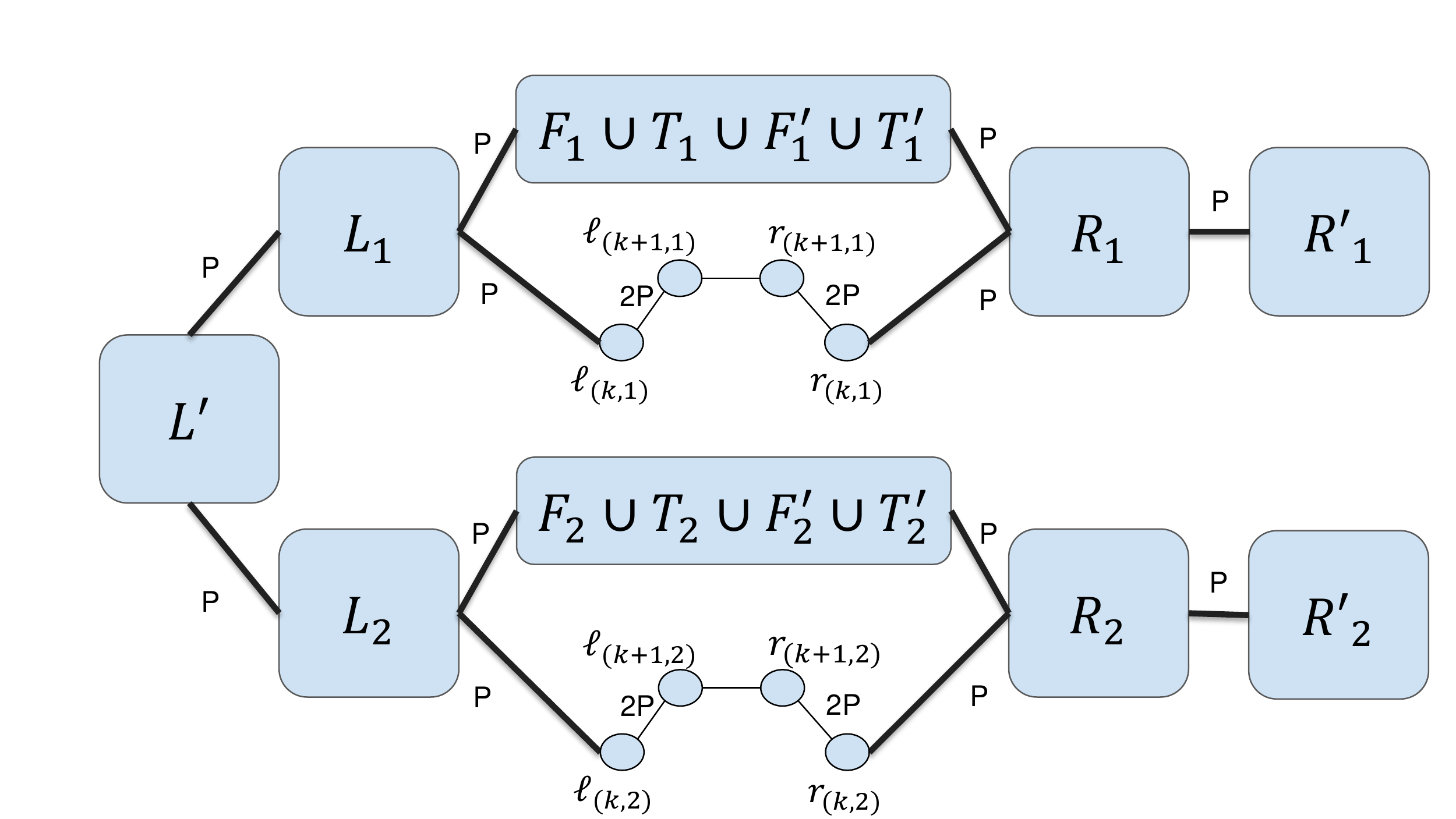}
		\end{center}
		\caption{Graph\ Construction (radius approximation). Each edge between two sets of nodes represents all the edges between the corresponding sets.}
	\end{figure}
	
	Let $G=(V,E)$ be the graph construction described in section~\ref{appDiamCons}. We describe the construction for this section in two steps. In the first step, we apply the following changes to $G$:
	\begin{enumerate}
		\item Replace the path of length $P$ connecting $\ell_{k}$ to $\ell_{k+1}$ by a path of length $2P$. Similarly, replace the path of length $P$ connecting $r_{k}$ to $r_{k+1}$ by a path of length $2P$.
		\item Remove all the paths of length $P$ connecting $\ell_{k+1}$ with some node $u \in (F\cup T)$. Similarly, remove all the paths of length $P$ connecting $r_{k+1}$ with some node $u' \in (F'\cup T')$.
	\end{enumerate}

	Let $V'$ denote the set of vertices $V\setminus (L'\bigcup_{i\in [k-1]} Y(\ell'_i,\ell_i))$, and let $G'$ be the graph induced by the vertices in $V'$. Now we can describe the second step, in which we extend the construction in the following manner: we have two instances of $G'$, denoted by $G'_1=(V'_1,E'_1)$ and $G'_2=(V'_2,E'_2)$. Each node in $G'_1\cup G'_2$ has an additional label, indicating whether it belongs to $G'_1$ or $G'_2$. For example, the node $\ell_i$ in $G'_1$ is denoted by $\ell_{(i,1)}$, and the node $r_i$ in $G'_2$ is denoted by $r_{(i,2)}$. To complete the construction, for each $0\leq i\leq k-1$ we add two paths, each of length $P$, connecting $\ell'_i$ to the nodes $\ell_{(i,1)},\ell_{(i,2)}$.
	
	\subsubsection{Reduction from Set-disjointness}
	The reduction is similar to the one used to prove Theorem~\ref{thm:DA}. First we define a partition $(G_a=(V_a,E_a),G_b=(V_b,E_b))$, let $V'_a$ be the set of vertices defined by:
	
	\begin{flalign}
	\nonumber
	&V'_a = L'\cup L_1\cup L_2\cup F_1\cup F_2\cup T_1\cup T_2\cup \{\ell_{(k,1)},\ell_{(k+1,1)},\ell_{(k,2)},\ell_{(k+1,2)}\}\\\nonumber
	\end{flalign}
	
	Similarly, let $V'_b$ be the set of vertices defined by:
	
	\begin{flalign}
	\nonumber
	&V'_b = R_1\cup R_2\cup R'_1\cup R'_2\cup F'_1\cup F'_2\cup T'_1\cup T'_2\cup \{r_{(k,1)},r_{(k+1,1)},r_{(k,2)},r_{(k+1,2)}\}\\\nonumber
	\end{flalign}
	
	Now we define the sets $V_a,E_a,V_b,E_b$:
	
	\begin{flalign}
	\nonumber
	&V_a = V'_a\bigcup_{\substack{u,v\in V'_a}} Y(u,v) \\\nonumber
	&E_a = \{(u,v)|u,v\in V_a \wedge (u,v)\in E \}\\\nonumber
	\nonumber
	&V_b = V'_b\bigcup_{\substack{u,v\in V'_b}} Y(u,v) \\\nonumber
	&E_b = \{(u,v)|u,v\in V_b \wedge (u,v)\in E \}\\\nonumber
	\end{flalign}
	
	Each player receives an input string ($S_a$ and $S_b$) of $k$ bits. If $S_a[i]=1$, Alice adds two paths, each of length $P$, the first connects the nodes $\ell_{(i,1)}$ and $\ell_{(k+1,1)}$, and the second connects the nodes $\ell_{(i,2)}$ and $\ell_{(k+1,2)}$. Similarly, if $S_b[i]=1$, Bob adds two paths of length $P$, the first connects the nodes $r_{(i,1)}$, and $r_{(k+1,1)}$ and the second connects the nodes $r_{(i,2)}$ and $r_{(k+1,2)}$.
	
	\begin{observation}\label{ObL'}
		The node with the minimum eccentricity is in L'.
	\end{observation}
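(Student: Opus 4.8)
The plan is to analyze the eccentricities of all node types in the approximate-radius construction and show that no node outside $L'$ can have eccentricity smaller than every node in $L'$. The intuition is that $L'$ is the ``central'' layer: each $\ell'_i$ sits between the two copies $G'_1$ and $G'_2$ via paths of length $P$, so it is within a bounded distance of both copies, whereas a node living strictly inside one copy is far from the mirrored copy. First I would pin down, for a generic $\ell'_i \in L'$, that its distance to anything in $G'_1 \cup G'_2$ is governed by going through one of the $\ell_{(i,\cdot)}$ ports; this is essentially the same bookkeeping as in Claim~\ref{OuterNodesApp} and Claim~\ref{disjoint}, now duplicated across the two copies. That gives an upper bound on $e(\ell'_i)$ of the form ``roughly $4P+O(P)$ into either copy,'' and crucially the two copies are symmetric, so a node in $L'$ pays this cost only once regardless of which copy the far endpoint lies in.

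The key steps, in order, are: (1) establish that for any $u \in G'_1$ (resp.\ $G'_2$) and the ``hub'' nodes $\ell_{(k+1,1)}, r_{(k+1,1)}$ etc., distances within a single copy $G'$ are bounded by something like $4P + O(1)$, reusing Claim~\ref{InnerNodesApp} applied inside $G'$; (2) show that a node $u$ in copy $G'_1$ must route through some $\ell'_i$ to reach copy $G'_2$, so $d(u, v) \ge d(u, L') + d(L', v)$ whenever $v \in G'_2$ — here the only ``bridge'' vertices between the copies are exactly the $\ell'_i$ and the paths $Y(\ell'_i, \ell_{(i,\cdot)})$ hanging off them; (3) conclude that any $u$ deep inside $G'_1$ has eccentricity at least the distance to the far side of $G'_2$, which is strictly larger than what an $\ell'_i$ achieves, because $u$ is already at distance $\ge P$ (through a $\ell'$-bridge) before it even enters $G'_2$, and then must traverse $G'_2$; (4) handle the remaining node classes — nodes on the internal paths $Y(\cdot,\cdot)$ inside a copy, the $\ell_{k},\ell_{k+1}$-type hub nodes in each copy, and nodes on the bridge paths $Y(\ell'_i,\ell_{(i,\cdot)})$ themselves — by the same ``you're committed to one copy, so the other copy is far'' argument, noting that a node partway down a bridge path is still closer to one copy than the other and hence strictly worse than $\ell'_i$ which is balanced. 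Finally, symmetry between $G'_1$ and $G'_2$ plus symmetry between the $L$-side and $R$-side within each copy is what forces the minimizer into $L'$ specifically rather than, say, into $R'_1 \cup R'_2$; I would point out that the asymmetric bridge placement (only $\ell'_i$ connects to both copies, $r'$ nodes live inside a single copy) breaks the tie in favor of $L'$.

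The main obstacle I anticipate is step (4): carefully ruling out the bridge-path interior nodes $Y(\ell'_i, \ell_{(i,1)})$ and $Y(\ell'_i, \ell_{(i,2)})$. A node $w$ on the path from $\ell'_i$ toward $\ell_{(i,1)}$ is \emph{closer} to copy $G'_1$ than $\ell'_i$ is, so one might worry its eccentricity drops; but it is correspondingly \emph{farther} from copy $G'_2$ by the same amount, and since $e(\ell'_i)$ is already determined by the larger of its two one-sided reaches (which are equal by symmetry), moving off-center can only increase the max. Making this ``moving off the balance point increases the maximum of two roughly-linear functions'' argument airtight — accounting for the additive constants, the input-dependent edges $S_a[i], S_b[i]$, and the case $i=j$ versus $i \ne j$ in the bit-gadget — is where the real care is needed. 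Everything else is a reprise of the distance computations already carried out in Section~\ref{appDiamCons}, just doubled.
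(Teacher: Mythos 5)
Your proposal is correct and follows essentially the same route as the paper, which justifies this observation in three sentences: every node outside $L'$ lies on one side (in $V'_1\cup\bigcup_i Y(\ell'_i,\ell_{(i,1)})$ or its mirror), its farthest node lies in the opposite copy, and $L'$ is the only separator between the copies, so leaving $L'$ can only increase the eccentricity. Your additional bookkeeping (the one-copy distance bounds, the treatment of bridge-path interiors as "moving off the balance point," and the remark that only the $\ell'$-nodes bridge the two copies) is a more detailed elaboration of the paper's argument rather than a different one.
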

	
	This is because for any node $u\in V'_1\bigcup_{i\in[k-1]}Y(\ell'_i,\ell_{(i,1)})$ it holds that the node with $\max_{v\in V} d(u,v)$ must be in $V'_2$. Similarly, for any node $u\in V'_2\bigcup_{i\in[k-1]}Y(\ell'_i,\ell_{(i,2)})$ it holds that the node with $\max_{v\in V} d(u,v)$ must be in $V'_1$. Thus, any node $u\in V\setminus L'$ must visit some node in $L'$ in order to reach the node with $\max_{v\in V} d(u,v)$.
	
	\begin{observation}\label{Ob2}
		For any $0\leq i\leq k-1$ and any $u\in V_a$, it holds that $d(\ell'_i,u) \leq 4P$.
	\end{observation}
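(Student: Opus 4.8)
The plan is to route every path starting at $\ell'_i$ through one of the two hubs $\ell_{(k,1)},\ell_{(k,2)}$, which are cheap to reach from $\ell'_i$ and from which the rest of $V_a$ is cheap to reach. First I would record the relevant short distances out of $\ell'_i$: the length-$P$ path added between $\ell'_i$ and $\ell_{(i,c)}$ gives $d(\ell'_i,\ell_{(i,c)})\le P$; composing it with the (subdivided) length-$P$ link from $\ell_{(i,c)}$ to $\ell_{(k,c)}$ gives $d(\ell'_i,\ell_{(k,c)})\le 2P$; and one more subdivided link at the hub gives $d(\ell'_i,\ell_{(j,c)})\le 3P$ for every $j$ and every $c\in\{1,2\}$.

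Next I would bound, within a single copy, the distance from the hub $\ell_{(k,c)}$ to the vertices of $V'_a$ lying in that copy, namely $L_c\cup F_c\cup T_c\cup\{\ell_{(k,c)},\ell_{(k+1,c)}\}$: we have $d(\ell_{(k,c)},\ell_{(j,c)})\le P$ for all $j$, $d(\ell_{(k,c)},\ell_{(k+1,c)})\le 2P$ via the doubled path, and $d(\ell_{(k,c)},z)\le 2P$ for every bit-gadget vertex $z\in F_c\cup T_c$ (walk to some $\ell_{(j,c)}$ whose $h$-th bit selects $z$, then along the length-$P$ path to $z$). Combined with $d(\ell'_i,\ell_{(k,c)})\le 2P$, this puts all of these vertices within $4P$ of $\ell'_i$. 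The only remaining vertices of $V'_a$ are those of $L'$, and for $\ell'_j$ with $j\neq i$ the explicit route $\ell'_i\to\ell_{(i,1)}\to\ell_{(k,1)}\to\ell_{(j,1)}\to\ell'_j$ has length exactly $4P$.

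It then remains to treat the path-internal vertices, namely those of $Y(w,w')$ for $w,w'\in V'_a$. Each such path has length $2P$ (only the paths $\ell_{(k,c)}$--$\ell_{(k+1,c)}$) or length $P$ (all others: the $\ell_{(j,c)}$--$\ell_{(k,c)}$ links, the $\ell_{(j,c)}$--$z$ links into the bit-gadget, and the $\ell'_j$--$\ell_{(j,c)}$ links). In the first case the vertex is within $2P-1$ of $\ell_{(k,c)}$ and hence within $2P+(2P-1)<4P$ of $\ell'_i$. In the second case I would verify, one path-type at a time, that at least one endpoint of the path is within $3P$ of $\ell'_i$ — this holds for every $\ell_{(j,c)}$ and every $\ell'_j$ by the bounds above, with the caveat that for the bit-gadget links one must use the $\ell_{(j,c)}$ endpoint rather than $z$ — so the vertex, being within $P-1$ of that endpoint, lies within $3P+(P-1)<4P$ of $\ell'_i$.

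The proof has no real difficulty beyond this bookkeeping; the one point to keep in mind is that all the routes above use only edges and subdivided links present in the fixed construction, never the input-dependent paths added by Alice, so the bound $d(\ell'_i,u)\le 4P$ holds for every input string $S_a$, as needed.
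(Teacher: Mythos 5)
Your proof is correct and follows essentially the same route as the paper's: the paper's entire justification is that every node of $V_a$ is within $2P$ of one of the hubs $\ell_{(k,1)},\ell_{(k,2)}$ while $d(\ell'_i,\ell_{(k,1)})=d(\ell'_i,\ell_{(k,2)})=2P$, which is exactly your hub-routing argument with the bookkeeping spelled out. The only (harmless) difference is that for some path-internal vertices you bound via an endpoint at distance $3P$ plus $P-1$ rather than via the uniform ``within $2P$ of a hub'' statement; both give $\le 4P$.
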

	
	This is because the distance from any $u\in V_a$ to one of the nodes $\{\ell_{(k,1)},\ell_{(k,2)}\}$ is at most $2P$, and $d(\ell'_i,\ell_{(k,1)})=d(\ell'_i,\ell_{(k,2)})=2P$ for any $0\leq i\leq k-1$.

	\begin{claim}\label{distAppRad}
		Let $0\leq i\leq k-1$ be such that $S_a[i]=0$ or $S_b[i]=0$. It holds that $d(\ell'_i,r'_{(i,1)}) \geq 6P+1$. Similarly, $d(\ell'_i,r'_{(i,2)}) \geq 6P+1$ as well.
		\begin{proof}
			Note that either $\ell_{(i,1)}$ is not connected to $\ell_{(k+1,1)}$ by a path of length $P$ or $r_{(i,1)}$ is not connected to $r_{(k+1,1)}$ by a path of length $P$. Thus, there are three options for any shortest path between $\ell'_i$ and $r'_{(i,1)}$:
			\begin{enumerate}
				\item Through the nodes ${\ell_{(k+1,1)},r_{(k+1,1)}}$: In this case, the shortest path must visit at least one of the nodes ${\ell_{(k,1)},r_{(k,1)}}$. Assume without loss of generality that the shortest path visits only $\ell_{(k,1)}$. Thus, the distance between $\ell'_i$ and $r'_{(i,1)}$ can be written as the sum of two distances: $d(\ell'_i,r'_{(i,1)}) = d(\ell'_i,\ell_{(k,1)}) + d(\ell_{(k,1)},r'_{(i,1)})$. Note that $d(\ell_{(k,1)},r'_{(i,1)})\geq 4P+1$(and equals $4P+1$ if $S_b[i]=1$), and $d(\ell'_i,\ell_{(k,1)})=2P$. Thus, we have $d(\ell'_i,r'_{(i,1)}) \geq 2P + 4P+1 = 6P+1$.
				\item Through some other node $\ell_{(j,1)}\in L_1$ such that $j\neq i$ in $3P$ steps, and then using the shortest path of length $3P+1$ between $\ell_{(j,1)}$ and $r'_{(i,1)}$.
				\item Through some other node $r_{(j,1)}\in L_1$ such that $j\neq i$ in $3P+1$ steps, and then using the shortest path of length $3P$ between $r_{(j,1)}$ and $r'_{(i,1)}$.
			\end{enumerate}
			Thus, $d(\ell'_i,r'_{(i,1)}) = 6P+1$, and similarly, $d(\ell'_i,r'{(i,2)}) = 6P+1$ as well.
		\end{proof}
	\end{claim}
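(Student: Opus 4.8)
The plan is to follow the same skeleton as Claim~\ref{OuterNodesApp}, but run inside one of the two copies, exploiting the two modifications of Section~\ref{appRadCons} (the paths $\ell_k$ to $\ell_{k+1}$ and $r_k$ to $r_{k+1}$ were lengthened to $2P$, and the length-$P$ paths from $\ell_{k+1}$ / $r_{k+1}$ into $F\cup T$ / $F'\cup T'$ were deleted). First I would pin down the shape of any path from $\ell'_i$ to $r'_{(i,1)}$. Since $\ell'_i$ is an endpoint of the two length-$P$ paths to $\ell_{(i,1)}$ and $\ell_{(i,2)}$ and has no other incident edge, the path must leave along one of them; and exactly as in Observation~\ref{ObL'}, an excursion into the second copy $G'_2$ cannot help, since to return to $G'_1$ the path must pass through some $\ell'_j$ node and re-enter, only adding length, so we may assume it enters $G'_1$ through $\ell_{(i,1)}$. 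Symmetrically, $r'_{(i,1)}$ is an endpoint of a length-$P$ path to $r_{(i,1)}$ with no other neighbour, so the path must reach it through $r_{(i,1)}$. Hence $d(\ell'_i,r'_{(i,1)}) = P + d_{G'_1}(\ell_{(i,1)},r_{(i,1)}) + P$, and it suffices to show $d_{G'_1}(\ell_{(i,1)},r_{(i,1)}) \ge 4P+1$ whenever $S_a[i]=0$ or $S_b[i]=0$.

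To bound $d_{G'_1}(\ell_{(i,1)},r_{(i,1)})$ I would enumerate the three possible shapes of a shortest path, as in Claim~\ref{OuterNodesApp}: it passes through $\ell_{(k+1,1)}$ and $r_{(k+1,1)}$, or through some $\ell_{(j,1)}$ with $j\neq i$, or through some $r_{(j,1)}$ with $j\neq i$. These are exhaustive, because the only ways to switch from the $L$-side to the $R$-side are the bit-gadget cross edges and the single edge $\ell_{k+1}$--$r_{k+1}$, and since every bit of $i$ agrees with itself, the bit gadget offers $\ell_{(i,1)}$ no length-$(2P{+}1)$ crossing to $r_{(i,1)}$ (the partner of $\ell_{(i,1)}$ at bit $h$ is cross-adjacent to the wrong one of $f'_h,t'_h$), so a gadget route must detour through an index $j\neq i$. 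In the second shape, $\ell_{(i,1)}$ reaches such an $\ell_{(j,1)}$ in at least $2P$ steps (via $\ell_{(k,1)}$ or a shared bit-gadget node) and then needs at least $2P+1$ more to reach $r_{(i,1)}$ (a differing bit gives a $P$-path, a cross edge, and another $P$-path), for a total $\ge 4P+1$; the third shape is symmetric. In the first shape, the hypothesis $S_a[i]=0$ or $S_b[i]=0$ forces at least one of the length-$P$ shortcuts $\ell_{(i,1)}$--$\ell_{(k+1,1)}$, $r_{(i,1)}$--$r_{(k+1,1)}$ to be absent, so the path must traverse $\ell_{(k,1)}$ or $r_{(k,1)}$; assuming without loss of generality that it visits $\ell_{(k,1)}$, I would split $d_{G'_1}(\ell_{(i,1)},r_{(i,1)}) = d(\ell_{(i,1)},\ell_{(k,1)}) + d(\ell_{(k,1)},r_{(i,1)}) = P + d(\ell_{(k,1)},r_{(i,1)})$ and argue $d(\ell_{(k,1)},r_{(i,1)}) \ge 3P+1$: from $\ell_{(k,1)}$ one needs $2P$ just to reach $\ell_{(k+1,1)}$ (the lengthened path), then $\ge 1$ to cross, then $\ge P$ to descend to $r_{(i,1)}$; any competing route that re-enters the bit gadget costs at least $P + (P+1+P) = 3P+1$ too. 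Adding back the two outer $P$'s yields $d(\ell'_i,r'_{(i,1)}) \ge 6P+1$. Finally, since $G'_1$ and $G'_2$ are isomorphic and $\ell'_i$ attaches to each by a single length-$P$ path in the same way, the identical argument gives $d(\ell'_i,r'_{(i,2)}) \ge 6P+1$.

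I expect the first path shape to be the only delicate point: one must be sure that, after the two modifications of Section~\ref{appRadCons}, there is genuinely no route out of $\ell_{(k,1)}$ toward $r_{(i,1)}$ cheaper than the $2P+1+P$ one — in particular that re-routing through the bit gadget or through some other $\ell_{(j,1)}$ or $r_{(j,1)}$ does not beat $3P+1$, and that the deleted $\ell_{k+1}$-to-$(F\cup T)$ paths really do close off what would otherwise be a shortcut. This is a short case check, but it is exactly where an off-by-$P$ slip would hide. The remaining ingredients — that second-copy excursions are useless, and that $\ell_{(i,1)}$ and $r_{(i,1)}$ are forced pass-through points — are routine, essentially a repeat of Observation~\ref{ObL'}.
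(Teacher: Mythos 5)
Your proposal is correct and follows essentially the same route as the paper: the same three-way case analysis on the shape of a shortest path (through $\ell_{(k+1,1)},r_{(k+1,1)}$, through some $\ell_{(j,1)}$ with $j\neq i$, or through some $r_{(j,1)}$ with $j\neq i$), with the key point in the first case being that the missing disjointness shortcut forces a $2P$ traversal via $\ell_{(k,1)}$ or $r_{(k,1)}$. Your factoring out of the two outer length-$P$ paths and your explicit justifications that $G'_2$-excursions and bit-gadget crossings at index $i$ cannot help are only a more careful accounting of steps the paper leaves implicit.
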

	
	Combining Claim~\ref{distAppRad} with Observation~\ref{ObL'}, we conclude the following.
	
	\begin{corollary}\label{coldist}
		The radius of $G \geq 6P+1$ if the two sets of Alice and Bob are disjoint.
	\end{corollary}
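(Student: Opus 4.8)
The plan is to deduce the corollary directly from the two facts just established, in the only way they fit together. First I would invoke Observation~\ref{ObL'}: the node of minimum eccentricity in $G$ lies in $L'$, so the radius of $G$ equals $\min_{0\le i\le k-1} e(\ell'_i)$. Thus it suffices to show that \emph{every} $\ell'_i$ has eccentricity at least $6P+1$ under the disjointness hypothesis.

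Now suppose the sets $S_a$ and $S_b$ are disjoint. Then for every index $i$ we have $S_a[i]=0$ or $S_b[i]=0$, which is exactly the hypothesis of Claim~\ref{distAppRad}; applying that claim coordinate by coordinate gives $d(\ell'_i, r'_{(i,1)}) \geq 6P+1$ for all $i$. Hence $e(\ell'_i) = \max_{v\in V} d(\ell'_i,v) \geq d(\ell'_i, r'_{(i,1)}) \geq 6P+1$ for every $i$, and combining this uniform bound with Observation~\ref{ObL'} yields $\mathrm{radius}(G) = \min_{0\le i\le k-1} e(\ell'_i) \geq 6P+1$, as claimed.

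I do not expect any real obstacle here: the argument is a one-line deduction, and all the work has already been done in Claim~\ref{distAppRad} (which pins down, in the disjoint case, the distance from each candidate center $\ell'_i$ to its ``antipodal'' vertex $r'_{(i,1)}$ in the copy $G'_1$) and in Observation~\ref{ObL'} (which lets us restrict attention to $L'$). The one point worth emphasizing is \emph{why} disjointness is the right assumption: it is precisely the condition guaranteeing the hypothesis of Claim~\ref{distAppRad} at \emph{every} coordinate simultaneously, so that the eccentricity lower bound holds for all potential centers $\ell'_i$ at once rather than for just some of them.
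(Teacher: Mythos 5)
Your proposal is correct and is exactly the argument the paper intends: the paper states Corollary~\ref{coldist} as following by ``combining Claim~\ref{distAppRad} with Observation~\ref{ObL'},'' which is precisely your deduction that disjointness makes the hypothesis of Claim~\ref{distAppRad} hold at every coordinate, so every candidate center $\ell'_i$ has eccentricity at least $6P+1$. You have merely written out the one-line combination that the paper leaves implicit.
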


	\begin{claim}\label{8} 
		Let $0\leq i\leq k-1$ be such that $S_a[i]=1$ and $S_b[i]=1$. Then the distance from $\ell'_i$ to any node in $V_b$ is at most $4P+1$.
		\begin{proof}
			Note that $\ell_{(i,1)}$ is connected to $\ell_{(k+1,1)}$ by a path of length $P$ and $r_{(i,1)}$ is connected to $r_{(k+1,1)}$ by a path of length $P$. Similarly, $\ell_{(i,2)}$ is connected to $\ell_{(k+1,2)}$ by a path of length $P$ and $r_{(i,2)}$ is connected to $r_{(k+1,2)}$ by a path of length $P$. Thus, the distance from $\ell'_i$ to each of the nodes $r_{(k+1,1)},r_{(k+1,2)}$ is $2P+1$, and $d(\ell'_i,v)\leq 4P+1$ for any $v\in (Y(r_{(k,1)},r_{(k+1,1)})\cup Y(r_{(k,2)},r_{(k+1,2)}))$. It remains to show that $d(\ell'_i,v)\leq 4P+1$ for any $v\in V_b\setminus (Y(r_{(k,1)},r_{(k+1,1)})\cup Y(r_{(k,2)},r_{(k+1,2)}))$. Note that $\ell'_i$ can reach any node in $R_1\cup R_2$ in $3P+1$ steps. Furthermore, by the definition of the construction, for any $v\in V_b\setminus (Y(r_{(k,1)},r_{(k+1,1)})\cup Y(r_{(k,2)},r_{(k+1,2)}))$ there is some node $u\in (R_1\cup R_2)$, such that $d(v,u)=P$. Thus, for any $v\in V_b\setminus (Y(r_{(k,1)},r_{(k+1,1)})\cup Y(r_{(k,2)},r_{(k+1,2)}))$ it holds that $d(\ell'_i,v) \leq 4P+1$.
		\end{proof}
	\end{claim}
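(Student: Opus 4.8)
\textbf{Proof plan for Claim~\ref{8}.} The plan is to build, starting from $\ell'_i$, a short ``backbone'' of nodes that dominates all of $V_b$ within distance $4P+1$. First I would walk from $\ell'_i$ into the two copies: since $S_a[i]=1$, Alice has added length-$P$ paths from $\ell_{(i,c)}$ to $\ell_{(k+1,c)}$ for $c\in\{1,2\}$, and there is a length-$P$ path from $\ell'_i$ to $\ell_{(i,c)}$, so $d(\ell'_i,\ell_{(k+1,c)})\le 2P$. Using the edge $\ell_{(k+1,c)}r_{(k+1,c)}$ inherited from the base construction, this gives $d(\ell'_i,r_{(k+1,c)})\le 2P+1$; and since $S_b[i]=1$, Bob's length-$P$ path from $r_{(i,c)}$ to $r_{(k+1,c)}$ yields $d(\ell'_i,r_{(i,c)})\le 3P+1$, hence $d(\ell'_i,r'_{(i,c)})\le 4P+1$ along the length-$P$ path $Y(r'_{(i,c)},r_{(i,c)})$.

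Next I would dispose of the two stretched spines $Y(r_{(k,c)},r_{(k+1,c)})$ (each of length $2P$ after Step~1): any internal node of such a spine is within $2P-1$ of $r_{(k+1,c)}$, so it lies at distance at most $(2P+1)+(2P-1)=4P$ from $\ell'_i$. For every other node $v\in V_b$, the structural fact I would isolate is that $v$ lies within $P$ of some node of $R_1\cup R_2$ --- this is read off directly from the construction, since after removing the two spines the remaining nodes of $V_b$ are the $R_c$ nodes, the length-$P$ bit-gadget paths $Y(r_{(j,c)},f'_{(h,c)})$ and $Y(r_{(j,c)},t'_{(h,c)})$, the length-$P$ paths $Y(r'_{(j,c)},r_{(j,c)})$ together with $R'_c$, the length-$P$ paths $Y(r_{(j,c)},r_{(k,c)})$ and the Bob-added $Y(r_{(j,c)},r_{(k+1,c)})$, and the endpoints $r_{(k,c)},r_{(k+1,c)}$, each of which is within $P$ of $R_c$. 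It then remains to check $d(\ell'_i,u)\le 3P+1$ for every $u\in R_1\cup R_2$: for $u=r_{(i,c)}$ this is the bound just derived, and for $u=r_{(j,c)}$ with $j\ne i$ it follows from the bit-gadget distance $d(\ell_{(i,c)},r_{(j,c)})=2P+1$ (argued exactly as in Claim~\ref{OuterNodesApp}, as the Step-1 surgery does not touch the bit gadget) composed with the length-$P$ path from $\ell'_i$ to $\ell_{(i,c)}$. Composing, $d(\ell'_i,v)\le(3P+1)+P=4P+1$, which finishes the claim.

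The part I expect to be delicate is not any single distance estimate but the exhaustiveness of the case analysis over $V_b$: one must be certain that every node inserted into $V_b$ is either an internal node of one of the two length-$2P$ spines or lies within $P$ of $R_1\cup R_2$, and in particular that the Step-1 modifications --- stretching the $\ell_k$--$\ell_{k+1}$ and $r_k$--$r_{k+1}$ paths and deleting the $\ell_{k+1}$--$(F\cup T)$ and $r_{k+1}$--$(F'\cup T')$ paths --- do not leave any node of $V_b$ stranded far from $R_1\cup R_2$; in fact those deletions are precisely what makes the spines the only long detours, so they work in our favor. A secondary thing to verify carefully is that $d(\ell_{(i,c)},r_{(j,c)})=2P+1$ for $i\ne j$ really is unaffected by all the surgery, since the whole estimate hinges on that value.
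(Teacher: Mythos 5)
Your proposal follows essentially the same route as the paper's proof: reach $r_{(k+1,c)}$ in $2P+1$ steps via the two Alice/Bob paths guaranteed by $S_a[i]=S_b[i]=1$, handle the two stretched $r_{(k,c)}$--$r_{(k+1,c)}$ spines separately, and cover everything else by showing $\ell'_i$ reaches $R_1\cup R_2$ within $3P+1$ steps and every remaining node of $V_b$ is within $P$ of $R_1\cup R_2$. Your write-up is slightly more explicit than the paper's (it spells out why $d(\ell'_i,u)\le 3P+1$ for $u=r_{(j,c)}$, $j\neq i$, via the bit-gadget), but the decomposition and all key estimates coincide.
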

	
	Combining Observation~\ref{ObL'}, Observation~\ref{Ob2}, Corollary~\ref{coldist} and Claim~\ref{8}, we conclude the following.
	
	\begin{lemma}
		The Radius of G is 4P+1 if the two sets of Alice and Bob are not disjoint, and at least 6P+1 otherwise.
	\end{lemma}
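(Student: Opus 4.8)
The plan is to split on whether Alice's and Bob's sets intersect and then just assemble the facts already established. If the two sets are disjoint, Corollary~\ref{coldist} immediately gives that the radius of $G$ is at least $6P+1$, so that direction requires nothing further.

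Now suppose there is an index $i$ with $S_a[i]=S_b[i]=1$; I would argue the radius equals $4P+1$. For the upper bound I would bound the eccentricity of the single node $\ell'_i$: since $(V_a,V_b)$ is a partition of $V$, every node lies in $V_a$ or in $V_b$, and Observation~\ref{Ob2} gives $d(\ell'_i,u)\le 4P$ for every $u\in V_a$ while Claim~\ref{8} gives $d(\ell'_i,v)\le 4P+1$ for every $v\in V_b$; hence $e(\ell'_i)\le 4P+1$, so the radius is at most $4P+1$. For the matching lower bound I would invoke Observation~\ref{ObL'}, which says the minimizer of the eccentricity lies in $L'$, so it suffices to prove $e(\ell'_j)\ge 4P+1$ for every $\ell'_j\in L'$, and I would use $r'_{(j,1)}$ as the witness. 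Since $\ell'_j$ is attached only by the two length-$P$ paths to $\ell_{(j,1)}$ and $\ell_{(j,2)}$, since $r'_{(j,1)}$ is attached only by a length-$P$ path to $r_{(j,1)}$, and since the two copies $G'_1,G'_2$ are joined only through the nodes of $L'$, a shortest $\ell'_j$--$r'_{(j,1)}$ path must pass through $\ell_{(j,1)}$ and then through $r_{(j,1)}$ (leaving $\ell'_j$ toward $\ell_{(j,2)}$ instead forces a detour back out through some $\ell'_m$ and costs strictly more), so $d(\ell'_j,r'_{(j,1)})=2P+d(\ell_{(j,1)},r_{(j,1)})$. It then remains to check $d(\ell_{(j,1)},r_{(j,1)})\ge 2P+1$ inside one copy of $G'$, which follows from a case analysis in the spirit of Claim~\ref{OuterNodesApp}: because step~2 of the construction removed the length-$P$ paths from $F\cup T$ to $\ell_{k+1}$ and the $\ell_k$--$\ell_{k+1}$ and $r_k$--$r_{k+1}$ connections are now length-$2P$ paths, even when both input paths of index $j$ are present the shortest $\ell_{(j,1)}$--$r_{(j,1)}$ path goes through $\ell_{(k+1,1)}$ and $r_{(k+1,1)}$ and has length exactly $2P+1$, and every other route (through the bit-gadget, or around via $\ell_{(k,1)},r_{(k,1)}$) is longer. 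Combining, $d(\ell'_j,r'_{(j,1)})\ge 4P+1$, so every node of $L'$ has eccentricity at least $4P+1$, and the radius is exactly $4P+1$.

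The one step that is not purely mechanical is this last lower bound in the non-disjoint case: one must rule out a ``shortcut'' from $\ell_{(j,1)}$ to $r_{(j,1)}$ through some bit-gadget edge, which is exactly why the bit-gadget is wired so that a path through coordinate $h$ connects an $\ell$-index and an $r$-index that \emph{differ} in bit $h$ --- hence it can never link $\ell_{(j,1)}$ to $r_{(j,1)}$ cheaply. A minor bookkeeping point to keep honest is that $(V_a,V_b)$ really covers all of $V$, including the internal nodes of every subdivided path and the input paths, so that bounding distances into $V_a$ and into $V_b$ suffices for the upper bound.
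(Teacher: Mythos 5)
Your proof is correct and follows the paper's own route: the paper proves this lemma by exactly the same combination of Corollary~\ref{coldist} for the disjoint case and Observation~\ref{ObL'}, Observation~\ref{Ob2} and Claim~\ref{8} for the non-disjoint case. The one place you go beyond the paper is the explicit verification that every $\ell'_j\in L'$ has eccentricity at least $4P+1$ (so that the radius is exactly, not merely at most, $4P+1$); the paper leaves this matching lower bound implicit, and your argument for it --- reducing to $d(\ell_{(j,1)},r_{(j,1)})\ge 2P+1$ and ruling out bit-gadget shortcuts between equal indices --- is sound and a worthwhile addition, even though only the gap between $4P+1$ and $6P+1$ is needed for Theorem~\ref{thm:Radius}.
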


	\paragraph{Proof of Theorem~\ref{thm:Radius}} To deduce Theorem~\ref{thm:Radius} we need to choose $P$ such that $(\frac{3}{2}-\varepsilon)\cdot (4P+1) < (6P+1)$, this holds for any $P>\frac{1}{8\varepsilon}-\frac{1}{4}$. Note that $k=\Omega(\frac{n}{\log(n)})$ for a constant $\varepsilon$. Thus, we deduce that any algorithm for computing a $(\frac{3}{2}-\varepsilon)$-approximation to the radius requires at least $\Omega(\frac{n}{\log^3(n)})$ rounds. Furthermore, the number of nodes and the number of edges are both equal to $\Theta(k\log(k)\cdot P)$. Thus, this lower bound holds even for graphs with linear number of edges.

	\subsection{Shaving an Extra Logarithmic Factor from the
		Denominator}
	\begin{figure}[h]
		\begin{center}
			\includegraphics[scale=0.7]{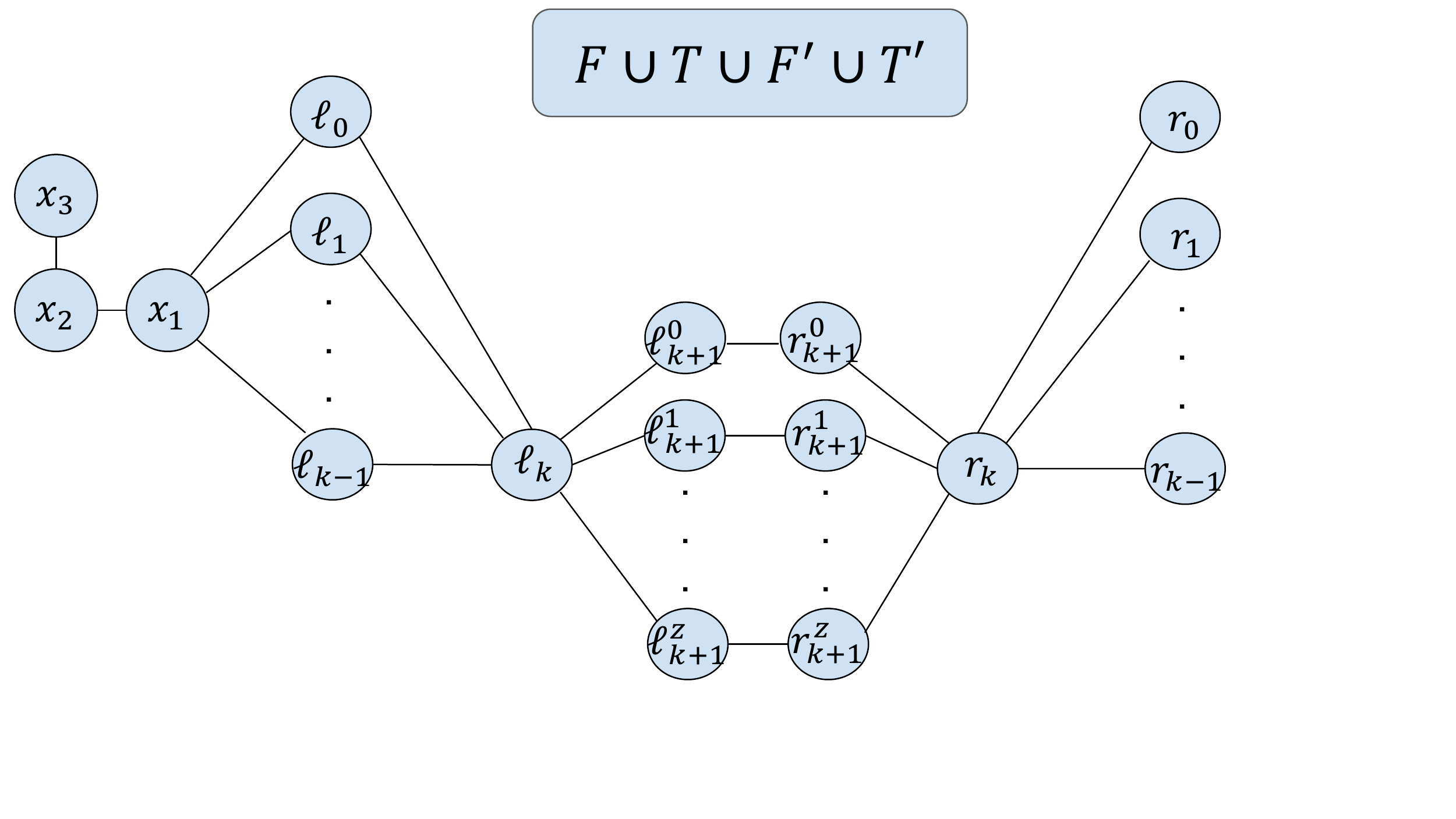}
		\end{center}
		\caption{Graph\ Construction (radius), $z=\log(k)-1$. Some edges are omitted, for clarity.}
	\end{figure}
	
	In this section we show how to modify the construction described in Section~\ref{exRadCons} to achieve higher lower bounds for computing the radius \emph{exactly} and \emph{approximately}. The general idea is to expand the input strings of Alice and Bob while preserving the (asymptotic) size of the cut. We apply the following changes to the construction described in Section~\ref{exRadCons} (see also Figure 5):
	\begin{enumerate}
		\item Replace the node $\ell_{k+1}$ by a set of nodes $\{\ell^j_{k+1} \mid 0\leq j \leq \log(k)-1\}$ of size $\log(k)$. Similarly, replace the node $r_{k+1}$ by a set of nodes $\{r^j_{k+1} \mid 0\leq j \leq \log(k)-1\}$ of size $\log(k)$.
		\item Connect each $\ell^j_{k+1}$ with $r^j_{k+1}$ by an edge.
	\end{enumerate}
	
	For the graph partition, we add to $V_a$ the nodes $\{\ell^j_{k+1} \mid 0\leq j \leq \log(k)-1\}$ and we add to $V_b$ the nodes $\{r^j_{k+1} \mid 0\leq j \leq \log(k)-1\}$. The input
	to the set-disjointness problem now differs from the previous
	construction, as follows. Each of the players Alice and Bob receives an input string $(S_a,S_b)$ of size $k\log(k)$, each bit is represented by two indices $i,j$ such that $i\in \{0,..,k-1\}$ and $j\in \{0,...,\log(k)-1\}$. If the bit $S_a(i,j)=1$ Alice adds an edge between the nodes $\ell_i$ and $\ell^j_{k+1}$. Similarly, If the bit $S_b(i,j)=1$ Bob adds an edge between the nodes $r_i$ and $r^j_{k+1}$. Note that if the two sets of Alice and Bob are disjoint, then for each $0\leq i\leq k-1$ it holds that $d(\ell_i,r_i)=4$. Otherwise, there is some $0\leq i\leq k-1$ such that $d(\ell_i,r_i)=3$. Thus, one can prove by case analysis, that the radius of $G$ is $3$ if and only if the two sets of Alice and Bob are not disjoint. Note that the size of the cut $(G_a,G_b)$ remains $\Theta(\log(k))$, $k$ remains $\Theta(n)$, and the size of the strings is $\Theta(n\log(n))$. Thus, we achieve the following:
	
	\begin{theorem}\label{radShaved}
		The number of rounds needed for any protocol to compute the radius of a sparse network of constant diameter in the $CONGEST$ model is $\Omega(n/\log{n})$.
	\end{theorem}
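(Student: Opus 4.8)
The strategy is to re-run the Set-Disjointness reduction of Theorem~\ref{ExactRad} on the enlarged graph sketched above, in which the single hub $\ell_{k+1}$ is replaced by an independent set $\{\ell^j_{k+1} \mid 0\leq j\leq \log(k)-1\}$, each node still adjacent to $\ell_k$ (and symmetrically on Bob's side), with $\ell^j_{k+1}$ joined to $r^j_{k+1}$. The gain is that each input coordinate $(i,j)$ can now toggle the edge $\{\ell_i,\ell^j_{k+1}\}$, so Alice and Bob can encode strings of length $k\log(k)$ rather than $k$, while the only edges crossing the $(G_a,G_b)$ cut are the $\log(k)$ edges $\{\ell^j_{k+1},r^j_{k+1}\}$, keeping the cut of size $\Theta(\log k)$.

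First I would re-prove the distance facts behind the earlier argument, now in the new graph. The bit-gadget is untouched, so Claim~\ref{ex-Ra-O-nodes} still yields $d(\ell_i,r_j)=3$ whenever $i\neq j$. For $i=j$ I would show $d(\ell_i,r_i)=3$ exactly when some coordinate $(i,j)$ is set in both strings --- the witness being the path $\ell_i-\ell^j_{k+1}-r^j_{k+1}-r_i$ --- and $d(\ell_i,r_i)=4$ otherwise; the crucial point is that splitting one hub into $\log(k)$ hubs creates no new path of length $\leq 3$ between $\ell_i$ and $r_i$, since every $\ell^j_{k+1}$ remains at distance exactly $2$ from $\ell_i$ (through $\ell_k$) unless the controlling bit is on. I would then redo Observation~\ref{ex-Ra-min-Ecc-in-L}: every node outside $L$ still has eccentricity $\geq 4$, by the same reasoning using the $x_1,x_2,x_3$ gadget (any path to $x_3$ passes through $x_1$, whose only neighbours are $L\cup\{x_2\}$) and the fact that $r_k$ stays at distance $\geq 4$ from each of $x_1,x_2,x_3$. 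Hence the center is some $\ell_i$.

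With these facts in hand, the characterization follows as in Lemma~\ref{ex-Ra-Dist}: if the strings are disjoint then $d(\ell_i,r_i)=4$ for every $i$, so together with Observation~\ref{ex-Ra-min-Ecc-in-L} and Claim~\ref{ex-Ra-O-nodes} every eccentricity is $\geq 4$ and the radius is $\geq 4$; if they intersect at a coordinate $(i,j)$ then $e(\ell_i)=3$, which I would check by bounding $d(\ell_i,u)\leq 3$ for every $u$ --- for $u\in R$ through the bit-gadget and the new hub path, and for $u\notin R$ (the nodes $\ell_k$, the $f_h,t_h$ and $f'_h,t'_h$, the $\ell^j_{k+1}$, the $r^j_{k+1}$ reached via $\ell_k$, the node $r_k$, and $x_3$) directly --- so the radius equals $3$. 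The accounting then closes the proof of Theorem~\ref{radShaved}: the cut has $\Theta(\log k)$ edges, so $O(\log^2 n)$ bits are exchanged per round; Set-Disjointness on $k\log k=\Theta(n\log n)$ bits requires $\Omega(n\log n)$ bits; dividing gives an $\Omega(n/\log n)$ lower bound, and since $|E|=O(n\log n)$ and all distances remain $O(1)$, this holds for sparse networks of constant diameter.

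The step I expect to be the main obstacle is the ``no new shortcuts'' verification inside the distance analysis: confirming that replacing the hub by $\log(k)$ hubs together with the matching edges $\{\ell^j_{k+1},r^j_{k+1}\}$ neither pulls any $d(\ell_i,r_i)$ below $4$ in the disjoint case nor lets a node other than some $\ell_i$ attain eccentricity $3$ in the non-disjoint case. This is a finite case analysis over the node types $L,R,F,T,F',T',\ell_k,r_k,\{\ell^j_{k+1}\},\{r^j_{k+1}\},\{x_1,x_2,x_3\}$, but it must be done carefully because several of the short paths used in the non-disjoint case now route through the newly added hubs.
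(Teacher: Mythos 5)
Your proposal matches the paper's own argument: the paper proves Theorem~\ref{radShaved} by exactly this modification — splitting $\ell_{k+1}$ and $r_{k+1}$ into $\log(k)$ matched hub pairs, letting bit $(i,j)$ of a length-$k\log(k)$ Disjointness instance control the edge $\{\ell_i,\ell^j_{k+1}\}$ (resp.\ $\{r_i,r^j_{k+1}\}$), and observing that the cut stays $\Theta(\log k)$ while the input grows to $\Theta(n\log n)$ bits. The distance case analysis you flag as the main obstacle is precisely what the paper leaves as ``one can prove by case analysis,'' and your sketch of it (radius $3$ iff intersecting, via $d(\ell_i,r_i)\in\{3,4\}$ and eccentricity $\geq 4$ off $L$) is the intended and correct completion.
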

	
	Similarly, we can apply the same idea to the construction described in Section~\ref{appRadCons} and achieve the following theorem:
	\begin{theorem}\label{radShavedApp}
		The number of rounds needed for any protocol to compute a ($3/2-\varepsilon$)-approximation to the radius of a sparse network is $\Omega(n/\log^2n$).
	\end{theorem}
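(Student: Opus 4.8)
The plan is to graft the bit-expansion (``shaving'') trick used above for the exact radius onto the two-copy $(\frac{3}{2}-\varepsilon)$-approximate radius construction of Section~\ref{appRadCons}. Concretely, starting from the graph of Section~\ref{appRadCons}, I would replace each of the four ``hub'' nodes $\ell_{(k+1,1)}, \ell_{(k+1,2)}, r_{(k+1,1)}, r_{(k+1,2)}$ by a set of $\log k$ nodes, $\{\ell^j_{(k+1,s)}\}_{j}$ and $\{r^j_{(k+1,s)}\}_{j}$ for $s\in\{1,2\}$, keeping a length-$2P$ path from each new hub node to $\ell_{(k,s)}$ (resp.\ $r_{(k,s)}$) and connecting $\ell^j_{(k+1,s)}$ to $r^j_{(k+1,s)}$ by a single edge for each $j$ and $s$. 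The inputs of Alice and Bob become strings of length $k\log k$ indexed by pairs $(i,j)$: if $S_a(i,j)=1$ Alice adds a length-$P$ path from $\ell_{(i,1)}$ to $\ell^j_{(k+1,1)}$ and one from $\ell_{(i,2)}$ to $\ell^j_{(k+1,2)}$, and symmetrically $S_b(i,j)=1$ makes Bob add the two length-$P$ paths on the $r$-side. All $\ell^j$-hubs (with their induced $Y$-paths) go to $V_a$ and all $r^j$-hubs to $V_b$, exactly as the $k{+}1$ nodes were split in Section~\ref{appRadCons}.

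Next I would re-establish the four structural facts of Section~\ref{appRadCons} in this modified graph. Since the shaving is entirely internal to each copy and $L'$ still separates copy $1$ from copy $2$, the two-copy symmetry argument behind Observation~\ref{ObL'} is unchanged, so the minimum-eccentricity vertex is again some $\ell'_i$; and Observation~\ref{Ob2} survives because $d(\ell'_i,\ell_{(k,s)})=2P$ and every node of $V_a$ is still within $2P$ of $\ell_{(k,1)}$ or $\ell_{(k,2)}$ (the new $2P$-paths and the bit-controlled $P$-paths between $\ell_{(i,s)}$ and $\ell^j_{(k+1,s)}$ introduce no node farther than $2P$ from $\ell_{(k,s)}$). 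For the disjoint case, mirroring Claim~\ref{distAppRad}, I would argue $d(\ell'_i,r'_{(i,s)})\ge 6P+1$: any path that uses some hub $\ell^j_{(k+1,s)}$ needs $S_a(i,j)=1$ to reach it cheaply, so by disjointness $S_b(i,j)=0$ and the matching edge $\ell^j_{(k+1,s)}\,r^j_{(k+1,s)}$ does not lead straight into $r_{(i,s)}$; hence the path must pass through $\ell_{(k,s)}$ or $r_{(k,s)}$, costing $2P$ out of $\ell'_i$ and at least $4P+1$ onward, while the routes through a different $L_s$- or $R_s$-vertex still cost $3P+(3P+1)$ as before. For the not-disjoint case, a witness pair $(i,j)$ with $S_a(i,j)=S_b(i,j)=1$ gives the route $\ell'_i\to\ell_{(i,s)}\to\ell^j_{(k+1,s)}\to r^j_{(k+1,s)}\to r_{(i,s)}\to r'_{(i,s)}$ of length $4P+1$, and the remaining distances from $\ell'_i$ stay within $4P+1$ as in Claim~\ref{8}, so the radius is exactly $4P+1$.

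The accounting then goes through just as in Section~\ref{appRadCons}. The cut $(G_a,G_b)$ consists of the $O(\log k)$ bit-gadget crossing edges of the two copies together with the $2\log k$ hub edges $\ell^j_{(k+1,s)}\,r^j_{(k+1,s)}$, hence still has size $\Theta(\log k)=\Theta(\log n)$; the number of nodes and edges remains $\Theta(k\log k\cdot P)$, dominated by the bit-controlled length-$P$ paths, so with $P$ a constant we have $k=\Theta(n/\log n)$ and the Set-Disjointness instance has length $k\log k=\Theta(n)$. Choosing $P$ so that $(\frac{3}{2}-\varepsilon)(4P+1)<6P+1$, any $(\frac{3}{2}-\varepsilon)$-approximate radius protocol running in $R$ rounds yields a Set-Disjointness protocol exchanging $O(R\log^2 n)$ bits, so $R=\Omega(n/\log^2 n)$; and $|E|=\Theta(n)$, so the bound is for a sparse (indeed linear-edge) network.

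I expect the disjoint direction to be the only real work: since each terminal pair $(\ell'_i,r'_{(i,s)})$ is now reachable through $\log k$ different hubs rather than a single one, the case analysis must rule out a short route through \emph{every} hub simultaneously, and one must also double-check that replacing a single hub by $\log k$ parallel hubs did not create any unintended shortcut between the two copies (which would break the separator property underlying Observation~\ref{ObL'}). Neither of these requires a new idea, only a more careful enumeration than in Section~\ref{appRadCons}.
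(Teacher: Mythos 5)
Your overall plan (graft the bit-expansion onto the two-copy construction of Section~\ref{appRadCons}, blow the input up to $k\log k=\Theta(n)$ bits while keeping the cut at $O(\log n)$) is exactly the route the paper intends -- the paper itself gives only a one-sentence pointer here. But the concrete construction you specify breaks the YES-case radius bound, and the place it breaks is not the one you flagged. You attach each of the $\log k$ new hubs $r^{j}_{(k+1,1)}$ to $r_{(k,1)}$ by its \emph{own} path of length $2P$. Now take a witness pair $(i,j)$ with $S_a(i,j)=S_b(i,j)=1$ and suppose no other input bit is set. For a hub index $j'\neq j$, the only entry points into the path $Y(r_{(k,1)},r^{j'}_{(k+1,1)})$ are its two endpoints, and both are far from $\ell'_i$: $d(\ell'_i,r^{j'}_{(k+1,1)})=4P+1$ (the cheapest access is $\ell'_i\to\ell_{(k,1)}\to\ell^{j'}_{(k+1,1)}\to r^{j'}_{(k+1,1)}$, costing $2P+2P+1$) and $d(\ell'_i,r_{(k,1)})=4P+1$. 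Hence the midpoint of that path is at distance $5P+1$ from $\ell'_i$, so $e(\ell'_i)\geq 5P+1$, not $4P+1$. Since $L'$ is still the set of candidate centers, the radius in the not-disjoint case is at least $5P+1$, and the gap you obtain is only $(6P+1)/(5P+1)\to 6/5$, which does not yield a $(\frac{3}{2}-\varepsilon)$ hardness result. Note that in the unmodified Section~\ref{appRadCons} this problem does not arise because the single hub $r_{(k+1,1)}$ is at distance $2P+1$ from $\ell'_i$ via the witness, so its $2P$-path stays within $4P+1$; your $\log k -1$ ``unused'' hubs lose exactly this anchor. The $2P$ length cannot simply be shortened either, since it is what blocks the $5P+1$ shortcut $\ell'_i\to\ell_{(i,1)}\to\ell^{j}_{(k+1,1)}\to r^{j}_{(k+1,1)}\to r_{(k,1)}\to r_{(i,1)}\to r'_{(i,1)}$ in the disjoint case when $S_a(i,j)=1$, $S_b(i,j)=0$.

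So the missing idea is how to attach $\log k$ parallel hubs without introducing new vertices that are $5P+1$ away from every $\ell'_i$ in the YES case while still keeping all disjoint-case distances at $6P+1$. One repair is to not give each hub its own long tail: e.g., keep a single node $c$ joined to $r_{(k,1)}$ by one path of length $2P$ (shared by the copy) and attach all hubs $r^{j}_{(k+1,1)}$ directly to $c$, rerouting the bit-controlled $P$-paths accordingly; then one must recheck both directions of the distance analysis, since the hubs become mutually close. Your disjoint-case analysis and the round-counting at the end are fine as written; it is this YES-case geometry that needs to be redone before the theorem follows.
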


	\section{Computing a $(\frac{5}{3}-\varepsilon)$-approximation to the Eccentricity}\label{ecc}
	
	In this section we show that any algorithm for computing a $(\frac{5}{3}-\varepsilon)$-approximation to all the eccentricities must spend $\Omega(\frac{n}{\log^3(n)})$ rounds. Similar to the previous sections, we define a graph construction, and then apply a reduction from the set-disjointness problem.
	
	\subsection{Graph Construction}
	
	\begin{figure}[h]
		\begin{center}
			\includegraphics[scale=0.6]{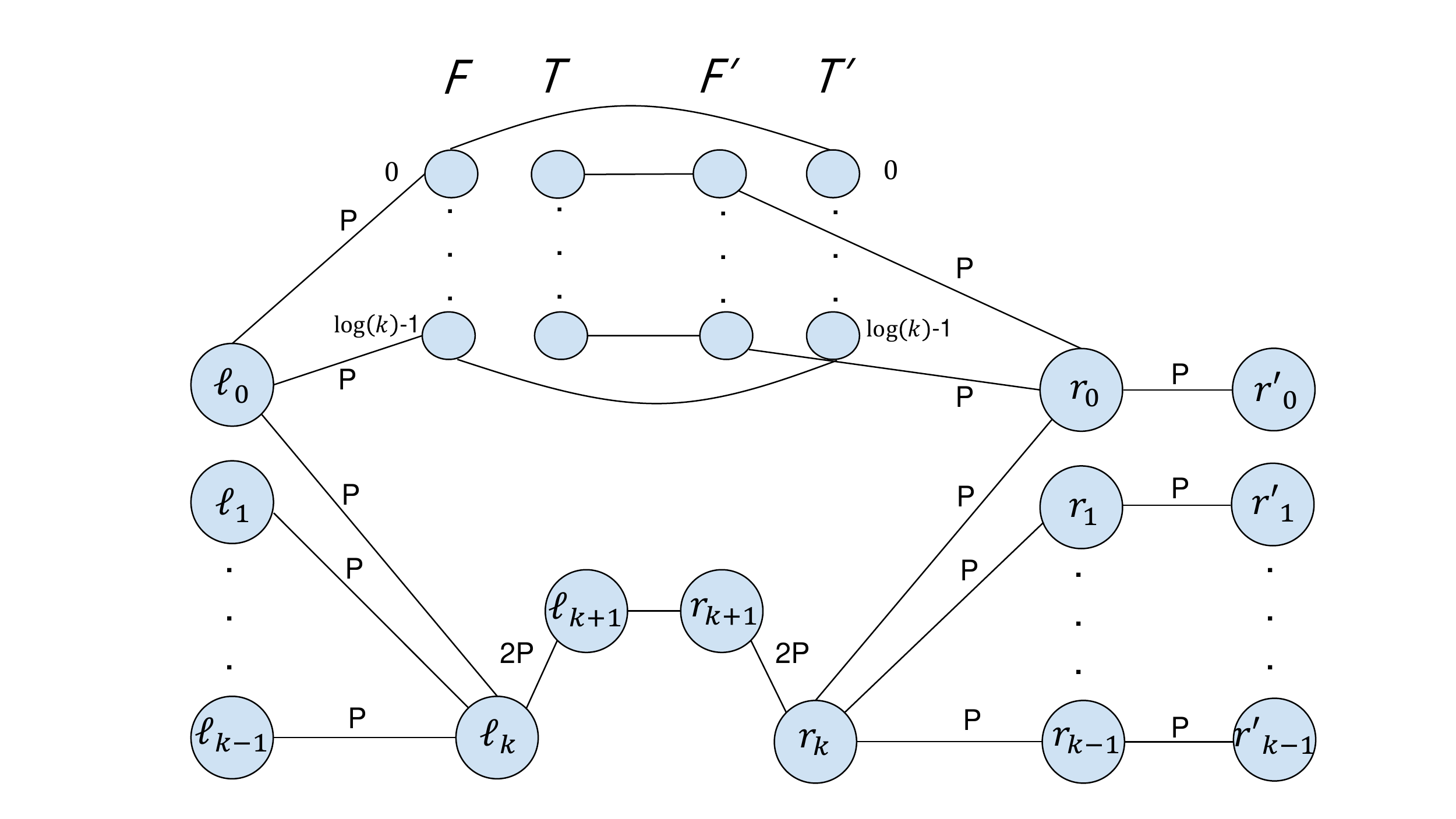}
		\end{center}
		\caption{Graph\ construction ($(\frac{5}{3}-\varepsilon)$-approximation to Eccentricity).}
	\end{figure}
	
	We simply apply the following changes on the construction described in Section~\ref{appDiamCons} (see also Figure 6):
	\begin{enumerate}
		\item Remove all the nodes in $L'\bigcup_{i\in [k-1]} Y(\ell'_i,\ell_i)$ and their incident edges.
		\item Remove all the paths of length $P$ connecting $\ell_{k+1}$ with some node $u \in (F\cup T)$. Similarly, remove all the paths of length $P$ connecting $r_{k+1}$ with some node $u' \in (F'\cup T')$.
		\item Replace the path of length $P$ connecting $\ell_{k}$ to $\ell_{k+1}$ by a path of length $2P$. Similarly, replace the path of length $P$ connecting $r_{k}$ to $r_{k+1}$ by a path of length $2P$.
	\end{enumerate}
	
	\subsection{Reduction from Set-disjointness}
	
	Each player receives an input string ($S_a$ and $S_b$) of $k$ bits. If $S_a[i]=1$, Alice adds a path of length $P$ connecting the nodes $\ell_i$ and $\ell_{k+1}$. Similarly, if $S_b[i]=1$, Bob adds a path of length $P$ connecting the nodes $r_i$ and $r_{k+1}$.
	
	\begin{lemma}
		There is some node $\ell_i\in L$ with eccentricity $3P+1$ if and only if the two sets of Alice and Bob are not disjoint, otherwise, the eccentricity of each node in $L$ is $5P+1$.
		\begin{proof}
			Consider the case that the sets are disjoint i,e., for every $0\leq i\leq k-1$ either $\ell_i$ is not connected to $\ell_{k+1}$ by a path of length $P$ or $r_i$ is not connected to $r_{k+1}$ by a path of length $P$. Thus, there are only 3 options for any shortest path from $\ell_i$ to $r'_i$:
			\begin{enumerate}
				\item ($S_a[i]=0$ and $S_b[i]=0$) Through some other node $\ell_j$ such that $i\neq j$ in $2P$ steps, and then using the shortest path of length $3P+1$ from $\ell_j$ to $r'_i$. Or Through some other node $r_j$ such that $i\neq j$ in $2P+1$ steps, and then using the shortest path of length $3P$ from $r_j$ to $r'_i$.
				\item ($S_a[i]=1$ and $S_b[i]=0$) Through the node $\ell_{k+1}$ in $P$ steps, and then using the shortest path of length $4P+1$ from $\ell_{k+1}$ to $r'_i$.
				\item ($S_a[i]=0$ and $S_b[i]=1$) Through the node $r_{k+1}$ in $3P+1$ steps, and then using the shortest path of length $2P$ from $r_{k+1}$ to $r'_i$.
			\end{enumerate}
			
			In case the two sets are not disjoint, there is some $0\leq i\leq k-1$ such that $\ell_i$ is connected to $\ell_{k+1}$ by a path of length $P$ and $r_i$ is connected to $r_{k+1}$ by a path of length $P$. Thus, the distance from $\ell_i$ to any of the nodes in $R$ is $2P+1$, and to any of the nodes in $R'$ is $3P+1$. It is straightforward to see, by the definition of the construction, that the distance from $\ell_i$ to any node in $V\setminus R'$ is at most $3P+1$ as well.
			
		\end{proof}
	\end{lemma}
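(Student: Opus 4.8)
The plan is to determine, for every $\ell_i\in L$, its distance to all vertices of the modified graph, and to show that $r'_i$ is always a farthest vertex. First I would record the bit-gadget distances, which are inherited from the diameter-approximation construction of Section~\ref{appDiamCons} (cf.\ Claim~\ref{OuterNodesApp}), up to the fact that here $L'$ is absent while $R'$ is kept. For $i\neq j$ there is a coordinate $h$ with $i^h\neq j^h$, so one of $\ell_i,r_j$ reaches $t_h$ and the other reaches $f'_h$ along $P$-paths joined by the single cut edge $t_hf'_h$ (or symmetrically through $f_h,t'_h$); hence $d(\ell_i,r_j)=2P+1$, and since $r'_j$ hangs off $r_j$ by a pendant path of length $P$, $d(\ell_i,r'_j)=3P+1$. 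For the critical pair $(\ell_i,r_i)$ all coordinates agree, so the bit gadget gives no shortcut, and the only short $\ell_i$--$r'_i$ walks fall into three families: one through $\ell_{k+1}$, one through a distinct $\ell_j$ (and then the bit gadget), and one through a distinct $r_j$. I would compute the shortest among them to be: $3P+1$ when $S_a[i]=S_b[i]=1$ (both pendant $P$-paths $\ell_i\ell_{k+1}$ and $r_ir_{k+1}$ are present, so $\ell_i\to\ell_{k+1}\to r_{k+1}\to r_i\to r'_i$ has length $P+1+P+P$); $5P+1$ when exactly one of the bits is $1$ (the side missing its pendant $P$-path must instead walk the doubled $\ell_k\ell_{k+1}$- or $r_kr_{k+1}$-path of length $2P$, as in the route $\ell_i\to\ell_{k+1}\to r_{k+1}\to r_k\to r_i\to r'_i$ of length $P+1+2P+P+P$); and $5P+1$ again when $S_a[i]=S_b[i]=0$ (detour through some $\ell_j$, $j\neq i$: $\ell_i\to\ell_k\to\ell_j$ in $2P$ and then $\ell_j\to r'_i$ in $3P+1$). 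Minimality of each route is checked by inspecting that all remaining families are at least as long.

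For the ``not disjoint'' direction I would fix an index $i$ with $S_a[i]=S_b[i]=1$ and prove $e(\ell_i)=3P+1$. The previous paragraph already yields $d(\ell_i,r_j)=2P+1$ and $d(\ell_i,r'_j)=3P+1$ for \emph{every} $j$, so all of $R'$ sits at distance exactly $3P+1$. It remains to bound every other vertex by $3P+1$: every vertex of $G_a$ is within $3P$ of $\ell_i$ (reach $\ell_k$ or a bit-gadget vertex along a $P$-path, and the doubled $\ell_k\ell_{k+1}$-path adds at most $2P$ more), and every vertex of $G_b$ outside $R'$ is within $3P+1$ (cross one cut edge from a bit-gadget vertex in $\le 3P+1$ steps, and note $r_k$, $r_{k+1}$, $F'\cup T'$, the doubled $r_kr_{k+1}$-path and all internal vertices of the attached $P$-paths are each within $P$ of $R$, using that $S_a[i]=1$ keeps $\ell_{k+1}$, hence $r_{k+1}$, close to $\ell_i$). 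Thus $e(\ell_i)=3P+1$, witnessed by $r'_i$.

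For the ``disjoint'' direction I would show that \emph{every} $\ell_i\in L$ has $e(\ell_i)=5P+1$, which also gives the reverse implication since $3P+1\neq 5P+1$ for $P\ge 1$. For each $i$ at least one of $S_a[i],S_b[i]$ is $0$, so by the case analysis above $d(\ell_i,r'_i)=5P+1$ in every sub-case; the matching upper bound $d(\ell_i,v)\le 5P+1$ for all $v$ is the same ``$\ell_i$ is nearly a center'' estimate as before, loosened by $2P$ because now $\ell_i$ may reach $\ell_{k+1}$ only in $3P$ steps, which propagates an extra $2P$ to $r_{k+1}$, $r_k$ and everything hanging off them. Combining the two directions proves the lemma. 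The main obstacle is the bookkeeping around the asymmetry introduced by the modifications: the $L'$-paths were deleted while the $R'$-paths were kept, and the simultaneous removal of the $\ell_{k+1}$--$(F\cup T)$ and $r_{k+1}$--$(F'\cup T')$ $P$-paths together with the doubling of the $\ell_k\ell_{k+1}$- and $r_kr_{k+1}$-paths is exactly what produces the $3P+1$ vs.\ $5P+1$ gap in $d(\ell_i,r'_i)$; one must keep the constants in the ``every other vertex is close'' bounds aligned with this gap so that the farthest vertex stays at precisely $3P+1$ (resp.\ $5P+1$).
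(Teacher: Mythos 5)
Your proposal is correct and follows essentially the same route as the paper: a case analysis on $(S_a[i],S_b[i])$ showing $d(\ell_i,r'_i)$ is $3P+1$ when both bits are $1$ and $5P+1$ otherwise, combined with a check that every other vertex is at least as close to $\ell_i$. If anything, you are slightly more complete than the paper's own proof, which leaves the upper bound $e(\ell_i)\leq 5P+1$ in the disjoint case implicit.
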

	
	\begin{theorem-repeat}{thm:ecc}
		\ThmEcc
	\end{theorem-repeat}
	
	\paragraph{Proof of Theorem~\ref{thm:ecc}} To complete the proof we need to choose $P$ such that $(\frac{5}{3}-\varepsilon)\cdot (3P+1) < (5P+1)$, this holds for any $P>\frac{2}{9\varepsilon}-\frac{1}{3}$. Note that $k=\Omega(\frac{n}{\log(n)})$ for a constant $\varepsilon$. Thus, we deduce that any algorithm for computing a $(\frac{5}{3}-\varepsilon)$-approximation to all the eccentricities requires at least $\Omega(\frac{n}{\log^3(n)})$ rounds.
	
	\section{Networks with $\Delta=5$}
	\label{sec:deg}	
	In this section we show how to modify our construction for radius computation to achieve almost the same lower bound for networks with $\Delta=5$, where $\Delta$ is the maximum degree of the network. Similar modification can be applied to the construction of the diameter as well. Consider the construction described in section ~\ref{ExactRad}. For each node $v\in V$ such that $v$ has a large degree, we replace a subset of the edges incident to $v$ by a binary tree (see also Figure 7). For example, consider the node $\ell_0$. Instead of connecting $\ell_0$ to each of its neighbors in $F\cup T$ directly by edges, we add a binary tree of size $O(\log(k))$, whose root is $\ell_0$ and whose leaves are the corresponding nodes in $F\cup T$. Note that there are $O(k)$ nodes with degree $O(\log(k))$ and $O(\log(k))$ nodes with degree $O(k)$ in the original graph.
	
	\begin{figure}[h]
		\begin{center}
			\includegraphics[scale=0.6]{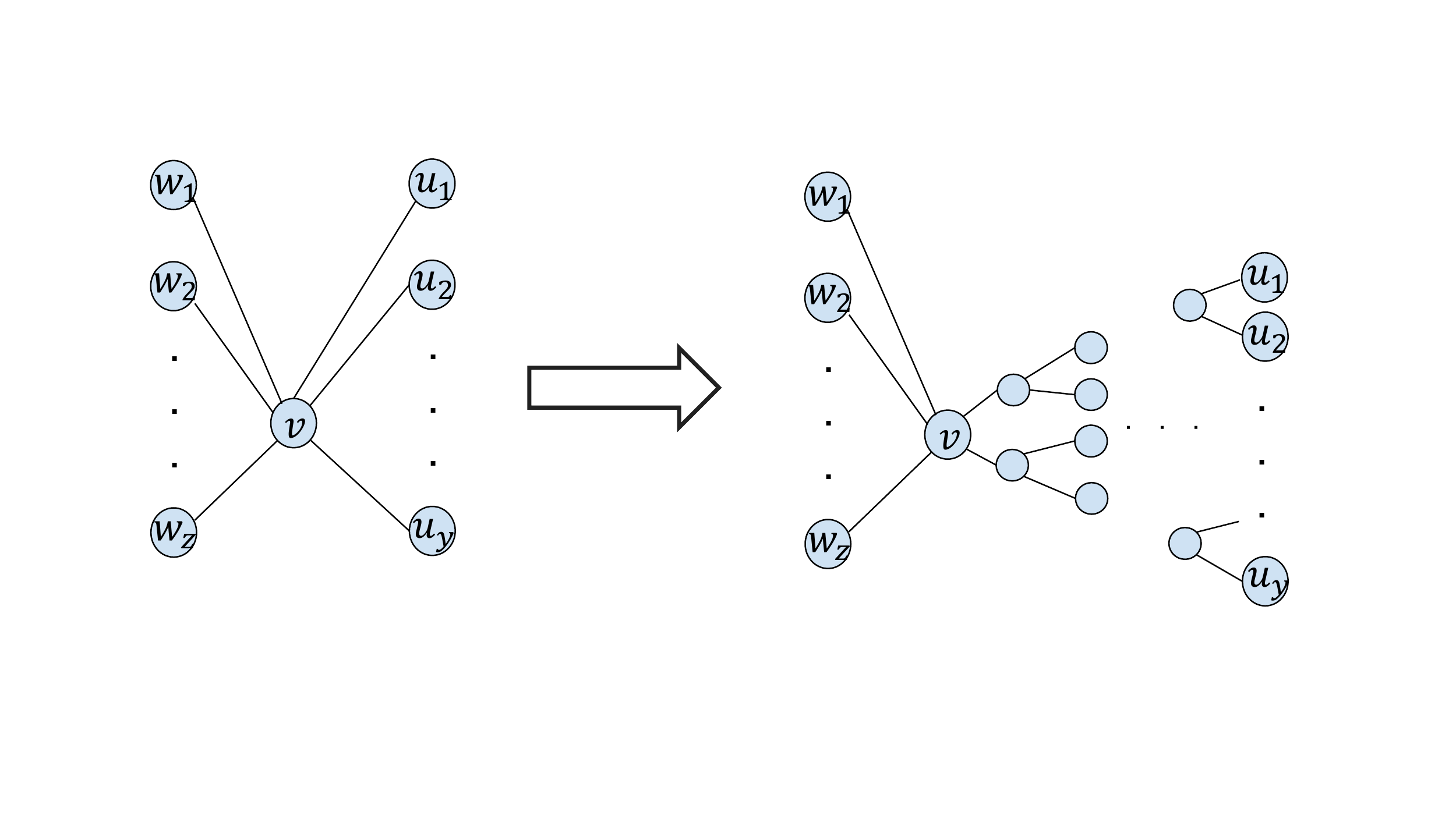}
		\end{center}
		\caption{Replacing $y$ edges connected to some node $v$ of degree $y+z$ by a binary tree of size $O(y)$. Thus, the degree of $v$ becomes $z+2$.}
	\end{figure}
	
	\begin{theorem-repeat}{ExactRadsmallDeg}
		\ThmDeg
	\end{theorem-repeat}
	
	Now we describe the changes we apply to the construction described in Section 3.1.1 formally by 10 steps:
	\begin{enumerate}
		\item Remove the node $x_3$ and its incident edge.
		\item Replace the edges between $x_1$ and $L$ by a binary tree of size $O(k)$ such that $x_1$ is the root and the leaves are the nodes in $L$. Note that the height of this tree is exactly $\log(k)$.
		\item Replace the edge $(x_1,x_2)$ by a path of length $\log(k)+2\log\log(k)-1$. Denote the set of all the nodes on this path, including $x_1$ and $x_2$, by $P(x_1,x_2)$.
		\item Remove all the edges connecting some node in $L\cup R$ with some node in the \emph{bit-gadget}.
		\item Connect each $\ell_i$ in $L$ to all the nodes that represent its binary value by a binary tree of size $O(\log(k))$, such that $\ell_i$ is the root and the leaves are the corresponding nodes in $F\cup T$. Similarly, connect each $r_i$ in $R$ to all the nodes that represent its binary value by a binary tree of size $O(\log(k))$, such that $r_i$ is the root and the leaves are the corresponding nodes in $F'\cup T'$. Note that the height of each such tree is exactly $\log\log(k)$.
		\item After the previous step, each of the nodes in the \emph{bit-gadget} is a leaf in $k$ trees, thus, we replace each $u$ in the \emph{bit-gadget} by a binary tree of size $O(k)$ such that the root is $u$ and the leaves are its $k$ parents in the $k$ binary trees. Note that the height of each such tree is exactly $\log(k)$. Thus, after this step, the distance between $\ell_0$ and $f_0$, for example, is $\log\log(k)+\log(k)-1$ and not $\log\log(k)+\log(k)$. This is because the parents of the leaves in the tree rooted at $\ell_0$ are leaves in the trees rooted at the corresponding nodes in $F\cup T$.
		\item Replace the edges in $\{(\ell_i,\ell_k) \mid 0\leq i\leq k-1\}\cup\{(r_i,r_k) \mid 0\leq i\leq k-1\}$ by paths of length $\log\log(k)-1$. Denote by $P(\ell_k)$ the set of all the nodes on the paths connecting $\ell_k$ to the nodes in $L$, including the nodes in $L$ and the node $\ell_k$. Similarly, denote by $P(r_k)$ the set of all the nodes on the paths connecting $r_k$ to the nodes in $R$, including the nodes in $R$ and the node $r_k$.
		\item Replace the set of the first edges on every path that connects $\ell_k$ to the nodes in $L$ (i.e., the $k$ edges that are incident to $\ell_k$) by a binary tree of size $O(k)$ and height exactly $\log(k)$. Similarly, replace set of the first edges on every path that connects $r_k$ to the nodes in $R$ (i.e., the $k$ edges that are incident to $r_k$) by a binary tree of size $O(k)$ and height exactly $\log(k)$. Note that the edges $(\ell_k,\ell_{k+1})$ and $(r_k,r_{k+1})$ remain the same. Thus, after this step, the distance between $\ell_0$ and $\ell_{k}$, for example, is $\log\log(k)+\log(k)-1$, and thus, the distance between $\ell_0$ and $\ell_{k+1}$ is $\log\log(k)+\log(k)$.
		\item Connect the node $\ell_{k+1}$ to each of the nodes in $L$ by a new path of length $\log\log(k)-1$, for each $\ell_i$. Denote by $q(\ell_i)$ the neighbor of $\ell_i$ in the corresponding path of size $\log\log(k)-1$ connecting $\ell_i$ to $\ell_{k+1}$. Similarly, connect the node $r_{k+1}$ to each of the nodes in $R$ by a new path of length $\log\log(k)-1$, and for each $r_i$, denote by $q(r_i)$ the neighbor of $r_i$ in the corresponding path of size $\log\log(k)-1$ connecting $r_i$ to $r_{k+1}$. Denote by $P(\ell_{k+1})$ the set of all the nodes on the paths connecting $\ell_{k+1}$ to the nodes in $L$, including the nodes in $L$ and the node $\ell_{k+1}$. Similarly, Denote by $P(r_{k+1})$ the set of all the nodes on the paths connecting $r_{k+1}$ to the nodes in $R$, including the nodes in $R$ and the node $r_{k+1}$.
		\item Replace the set of the first edges on every path that connects $\ell_{k+1}$ to the nodes in $L$(i.e., the $k$ edges that are incident to $\ell_{k+1}$) by a binary tree of size $O(k)$ and height exactly $\log(k)$. Similarly, replace the first edges on every path that connects $r_{k+1}$ to the nodes in $R$(i.e., the $k$ edges that are incident to $r_{k+1}$) by a binary tree of size $O(k)$ and height exactly $\log(k)$.
	\end{enumerate}
	
	We note that we keep the edges between the nodes $f_i\in F$ and $t_i\in T$ and the edges between the nodes $f_i'\in F'$ and $t_i'\in T'$ for each $0\leq i \leq \log(k)-1$ as in Section ~\ref{ExactRad}.
	For each $v\in V$ such that $v$ is a root of some binary tree in $G$, denote by $T(v)$ the set of all nodes in the binary tree rooted at $v$ not including $v$ itself.
	
	\begin{claim}
		The maximum degree in the graph is 5.
		\begin{proof}
			We show that for each $v\in V$ the degree of $v$ is at most 5. There are 7 cases:
			\begin{enumerate}
				\item $v\in \{\ell_{k+1},r_{k+1}\}$: Note that $\ell_{k+1}$ is a root of a binary tree, and other than nodes in $T(\ell_{k+1})$, it is connected only to the nodes $\ell_k$ and $r_{k+1}$, thus the degree of $\ell_{k+1}$ is 4. A similar argument holds for $r_{k+1}$.
				\item $v\in \{\ell_{k},r_{k}\}$: Note that $\ell_{k}$ is a root of a binary tree, and other than nodes in $T(\ell_{k})$, it is connected only to the node $\ell_{k+1}$, thus, the degree of $\ell_{k}$ is 3. A similar argument holds for $r_{k}$ as well.
				\item $v\in L\cup R$: Note that for each $\ell_i\in L$, it holds that $\ell_i$ is a root of a binary tree, and it is a leaf in the binary tree rooted at $x_1$. In addition, it is connected to $q(\ell_i)$ and to another node connecting it to a path of length $\log\log(k)-1$, which is connected to the binary tree rooted at $\ell_k$. Thus, its degree is 5. The same holds for each $r_i\in R$.
				\item $v\in P(x_1,x_2)$: If $v\in P(x_1,x_2)\setminus \{x_1\}$ then $v$ is of degree 2. Otherwise, the degree of $x_1$ is 3, since it is a root of a binary tree and it is connected to another node on the path $P(x_1,x_2)$.
				\item $v\in \emph{bit-gadget}$: Note that $v$ is a root of a binary tree and it is connected to two additional nodes in the \emph{bit-gadget}. Thus, its degree is 4.
				\item $v$ is an inner node in some binary tree: The degree of an inner node in a binary tree is at most 3.
				\item $v\in P(\ell_k)\cup P(r_k)\cup P(\ell_{k+1})\cup P(r_{k+1})\setminus (L\cup R)$: All the nodes on these paths are of degree 2.
			\end{enumerate}
		\end{proof}
	\end{claim}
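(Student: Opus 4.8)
The plan is to prove the degree bound by an exhaustive case analysis: I partition the vertex set $V$ of the modified construction into a handful of classes and verify that every vertex in each class has degree at most $5$, exhibiting the class $L\cup R$ on which the value $5$ is attained. The one accounting principle I would use throughout is that each degree-reduction step merely \emph{replaces} a bundle of edges incident to a node by a binary tree rooted at that node, lowering that node's degree to $2$ while re-routing each former neighbor's single incident edge through fresh internal tree nodes; in particular the operation never raises the degree of a preexisting node. Consequently I can compute each vertex's degree by summing its contributions from the structures it participates in, using three elementary facts: in a rooted binary tree the root contributes $2$ tree-edges, an internal node contributes $3$ (one parent, two children) and a leaf contributes $1$; a subdivided path contributes $2$ to each internal node and $1$ to each endpoint; and the intra-gadget edges among $F,T,F',T'$ amount to exactly two edges at each bit-gadget node (its same-side edge, e.g.\ $f_j$–$t_j$ from Section~\ref{exRadCons}, together with its single cross edge to the opposite side).

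The heart of the matter is the class $L\cup R$, where the maximum is reached, so I would treat it first. A node $\ell_i$ plays four structural roles after the ten steps: it is the root of the step-$5$ binary tree to its bit-gadget nodes (contributing $2$), it is a leaf of the binary tree rooted at $x_1$ from step $2$ (contributing $1$), it is the endpoint of the step-$7$ subdivided path toward the tree rooted at $\ell_k$ (contributing $1$), and it is incident to the edge to $q(\ell_i)$ on the step-$9$ path toward $\ell_{k+1}$ (contributing $1$). The key check is that every original edge of $\ell_i$ — to the bit-gadget, to $x_1$, to $\ell_k$, and toward $\ell_{k+1}$ — was either removed or routed through exactly one of these four structures, so that the four contributions account for all incident edges and sum to $5$; the count for $r_i$ is symmetric.

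I would then dispatch the remaining classes, each of which turns out to have strictly smaller degree. The nodes $\ell_{k+1},r_{k+1}$ are roots of the step-$10$ trees ($2$) plus their two backbone edges (to $\ell_k$ and to the opposite side), giving $4$; the nodes $\ell_k,r_k$ are roots of the step-$8$ trees ($2$) plus the single edge to $\ell_{k+1}$, giving $3$; each bit-gadget node is the root of its step-$6$ tree ($2$) plus its two intra-gadget edges, giving $4$; $x_1$ is a tree root ($2$) plus one path edge, giving $3$, while every other node of $P(x_1,x_2)$ and of $P(\ell_k)\cup P(r_k)\cup P(\ell_{k+1})\cup P(r_{k+1})\setminus(L\cup R)$ is an internal path node of degree $2$; and every internal node of a binary tree has degree $3$. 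Verifying that these classes are genuinely exhaustive then finishes the argument.

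The step I expect to be the main obstacle is precisely the bookkeeping for vertices that belong to several gadgets at once, together with confirming that step $6$ does not silently inflate any degree. Specifically, each former parent of a bit-gadget leaf becomes a leaf of the new tree rooted at that gadget node; I must check that this \emph{re-routes} an existing edge rather than adding one, so that such a vertex remains an internal node of its $\ell_i$-tree of degree $3$. The same care is needed to guarantee exhaustiveness of the partition — that every subdivision node, every internal tree node, and the path endpoint $x_2$ are each assigned to exactly one class — since an overlooked vertex or an uncounted re-routed edge is the only way the bound of $5$ could fail.
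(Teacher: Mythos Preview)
Your proposal is correct and follows essentially the same approach as the paper: an exhaustive case analysis over the same partition of $V$, arriving at the same degree counts ($5$ on $L\cup R$, $4$ on $\{\ell_{k+1},r_{k+1}\}$ and on the bit-gadget, $3$ on $\{\ell_k,r_k\}$, on $x_1$, and on internal tree nodes, $2$ on path interiors). Your write-up is in fact slightly more careful than the paper's, since you make explicit the accounting rule (root $=2$, internal $=3$, leaf $=1$) and flag the one subtle point---that step~6 re-routes an existing edge at each parent of a bit-gadget leaf rather than adding a new one---which the paper leaves implicit; one minor caveat is that the situation for $r_i$ is not literally symmetric to $\ell_i$ (there is no $x_1$-tree on the $R$ side), so $r_i$ actually has degree $4$, but this only strengthens the bound.
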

	
	\subsection{Reduction from Set Disjointness}
	
	Each player receives an input string ($S_a$ and $S_b$) of $k$ bits. If $S_a[i]=0$, Alice removes the edge connecting $\ell_i$ to $q(\ell_i)$. Similarly, if $S_b[i]=0$, Bob removes the edge connecting $r_i$ to $q(r_i)$.
	
	\begin{claim}\label{deg3FirstDirection}
		For every $0\leq i,j\leq k-1$ such that $i\neq j$, it holds that $d(\ell_i,r_j)\leq 2\log\log(k)+2\log(k) - 1$.
		\begin{proof}
			If $i\neq j$, there must be some bit $h$, such that $i^h\neq j^h$. Assume without loss of generality that $i^h=1$ and $j^h=0$. Then, the distance between $\ell_i$ and $t_h$ is $\log\log(k)+\log(k)-1$. Similarly, the distance between $r_i$ and $t'_h$ is $\log\log(k)+\log(k)-1$. Note that $f_h$ and $t'_h$ are connected by an edge. Thus, $d(\ell_i,r_j)\leq 2\log\log(k)+2\log(k) + 2(- 1)+1=2\log\log(k)+2\log(k) - 1$.
		\end{proof}
	\end{claim}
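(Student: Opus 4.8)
The plan is to establish the claim as an \emph{upper} bound by exhibiting one explicit path from $\ell_i$ to $r_j$ and counting its edges; since we only need $d(\ell_i,r_j)\le 2\log\log(k)+2\log(k)-1$, there is no need to argue that the path is shortest. The route will cross the bit-gadget at a coordinate on which $i$ and $j$ disagree, mirroring the off-diagonal argument of Claim~\ref{ex-Ra-O-nodes} but accounting for the binary trees introduced by the degree reduction.

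First I would fix a bit position $h$ with $i^h\neq j^h$, which exists since $i\neq j$, and assume without loss of generality that $i^h=1$ and $j^h=0$. By the connection rule inherited from Section~\ref{exDiamCons}, the binary value of $\ell_i$ routes it (through its Step~5 tree) towards $t_h\in T$, while that of $r_j$ routes it towards $f'_h\in F'$. The cross edge $t_h$--$f'_h$ of the bit-gadget belongs to the original construction of Section~\ref{exDiamCons} and is not deleted by any of the ten steps (Step~4 removes only the $L\cup R$-to-gadget edges), so it is available. The symmetric case $i^h=0,\,j^h=1$ is handled identically using the other cross edge $f_h$--$t'_h$.

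Next I would bound each leg. The leg from $\ell_i$ to $t_h$ runs down the height-$\log\log(k)$ tree rooted at $\ell_i$ (Step~5) and up the height-$\log(k)$ tree rooted at $t_h$ (Step~6); the essential point, already recorded in Step~6 for the pair $(\ell_0,f_0)$, is that the parent of the relevant leaf of $\ell_i$'s tree is \emph{identified} with a leaf of $t_h$'s tree, so the two descents overlap by one hop and the leg has length $\log\log(k)+\log(k)-1$ rather than the naive $\log\log(k)+\log(k)$. By the mirror-image structure on the $R$-side, the leg from $f'_h$ to $r_j$ has the same length $\log\log(k)+\log(k)-1$.

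Finally I would concatenate the three segments to obtain
\[
d(\ell_i,r_j)\;\le\;\bigl(\log\log(k)+\log(k)-1\bigr)+1+\bigl(\log\log(k)+\log(k)-1\bigr)\;=\;2\log\log(k)+2\log(k)-1 .
\]
The only step that is more than bookkeeping is justifying the $-1$ in each leg: one must verify that Step~6's re-rooting of the bit-gadget nodes makes the bottom of $\ell_i$'s tree coincide with a leaf of $t_h$'s tree, which is exactly what the parents-as-leaves identification guarantees. I expect this to be the main (and only) obstacle; once it is settled, the edge count is immediate.
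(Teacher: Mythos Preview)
Your proposal is correct and follows essentially the same approach as the paper: pick a bit $h$ where $i$ and $j$ differ, route $\ell_i$ through its Step~5/6 trees to the appropriate bit node, cross the single bit-gadget edge, and route symmetrically to $r_j$, using the Step~6 identification to justify the $-1$ in each leg. Your write-up is in fact tidier than the paper's, which contains typos (it writes $r_i$ and $t'_h$ where $r_j$ and $f'_h$ are meant), and you correctly name the cross edge $t_h$--$f'_h$ consistent with your WLOG choice.
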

	
	\begin{claim}\label{deg3SecondDirection}
		If $0\leq i\leq k-1$ is such that $S_a[i]=0$ or $S_b[i]=0$, then $d(\ell_i,r_i)\geq 2\log\log(k)+2\log(k)$. Otherwise, $d(\ell_i,r_i)=2\log\log(k)+2\log(k) - 1$.
		\begin{proof}
			If $0\leq i\leq k-1$ is such that $S_a[i]=0$ or $S_b[i]=0$, then either $\ell_i$ is not connect to $q(\ell_i)$ directly by an edge, or $r_i$ is not connect to $q(\ell_i)$ directly by an edge. Thus, similar to the previous constructions there are two options for any shortest path between $\ell_i$ and $r_i$. The first is through the \emph{bit-gadget} and the second is through the nodes $\ell_{k+1},r_{k+1}$, and both of them must have length at least $2\log\log(k)+2\log(k)$. Otherwise, in case $S_a[i]=1$ and $S_b[i]=1$ it holds that $d(\ell_i,\ell_{k+1})=\log\log(k)+\log(k)-1$. Similarly, $d(r_i,r_{k+1})=\log\log(k)+\log(k)-1$. Thus, $d(\ell_i,r_i)=2\log\log(k)+2\log(k)-1$.
		\end{proof}
	\end{claim}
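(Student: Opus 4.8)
The plan is to prove both directions by reducing every $\ell_i$--$r_i$ path to one of a small number of canonical route types and comparing their lengths. The two building-block distances I will rely on, both recorded in the construction, are: (a) from an outer node to a bit-gadget node, $d(\ell_i,f_h)=d(\ell_i,t_h)=\log\log(k)+\log(k)-1$ and symmetrically for $r_i$ --- this is exactly the distance computed in construction step 6, where the ``$-1$'' arises because the parents of the leaves in $\ell_i$'s tree are identified with the leaves of the gadget-node trees; and (b) from an outer node to its index hub, $d(\ell_i,\ell_{k+1})=\log\log(k)+\log(k)-1$ \emph{when the edge $(\ell_i,q(\ell_i))$ is present}, which follows from steps 9--10 by the same tree-identification argument that gives $d(\ell_i,\ell_k)=\log\log(k)+\log(k)-1$ in step 8.

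First I would handle the case $S_a[i]=S_b[i]=1$. Here both shortcut edges $(\ell_i,q(\ell_i))$ and $(r_i,q(r_i))$ are present, so using (b) together with the preserved bridge edge $(\ell_{k+1},r_{k+1})$ gives the path $\ell_i\rightsquigarrow\ell_{k+1}\to r_{k+1}\rightsquigarrow r_i$ of length $(\log\log(k)+\log(k)-1)+1+(\log\log(k)+\log(k)-1)=2\log\log(k)+2\log(k)-1$, an upper bound. For the matching lower bound I would argue that any $\ell_i$--$r_i$ path falls into one of a few route types and that none is shorter. The main competitor is the bit-gadget route: since the index is the same, every bit agrees, so $\ell_i$ and $r_i$ attach to \emph{corresponding} gadget nodes ($f_h/f'_h$ or $t_h/t'_h$), and because $f_h$ and $f'_h$ (resp.\ $t_h,t'_h$) are not adjacent but satisfy $d(\cdot,\cdot)=2$ inside the gadget, this route has length $2(\log\log(k)+\log(k)-1)+2=2\log\log(k)+2\log(k)$, one more than the shortcut. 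Every remaining route --- detouring through $x_1$'s tree, through some $\ell_j$ or $r_j$ with $j\neq i$, or reaching the hubs via $\ell_k$/$r_k$ --- is strictly longer, as a quick length estimate using (a), (b) and Claim~\ref{deg3FirstDirection} shows. Hence the minimum is $2\log\log(k)+2\log(k)-1$, giving equality.

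For the case $S_a[i]=0$ or $S_b[i]=0$, at least one shortcut edge is removed, so the cheap bridge route via $\ell_{k+1},r_{k+1}$ is no longer available; the only way to reach the relevant hub is then the $\ell_k$/$r_k$ detour, which costs an extra $+1$ and yields length $\geq 2\log\log(k)+2\log(k)+1$. The bit-gadget route still has length exactly $2\log\log(k)+2\log(k)$, and all other detours are longer by the same estimates as above. Taking the minimum over route types gives $d(\ell_i,r_i)\geq 2\log\log(k)+2\log(k)$, as claimed.

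The main obstacle I anticipate is the bookkeeping of the additive constants: the construction deliberately engineers a single-edge gap --- the ``$+1$'' from $d(f_h,f'_h)=2$ in the gadget versus the ``$-1$'' shortcut through the hubs --- so that the non-disjoint case is exactly one shorter than the disjoint case. Getting each tree height, each subdivided path length, and each leaf-identification to contribute the correct $\pm 1$, and in particular verifying that no alternative route (through $x_1$, through a different index $j$, or through $\ell_k/r_k$) slips in below the stated bound, is the delicate part. The high-level route enumeration itself mirrors the earlier claims (e.g.\ Claim~\ref{ex-Ra-O-nodes} and Claim~\ref{deg3FirstDirection}) and is routine once the canonical distances (a) and (b) are pinned down.
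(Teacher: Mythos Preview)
Your proposal is correct and follows essentially the same approach as the paper: enumerate the two canonical route types (through the bit-gadget versus through the hubs $\ell_{k+1},r_{k+1}$) and compare their lengths, exactly as the paper does when it says ``similar to the previous constructions there are two options.'' Your write-up is in fact more careful than the paper's terse argument, explicitly checking the alternative detours (through $x_1$, through a different index, through $\ell_k/r_k$).

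One small arithmetic slip to fix: in the case $S_a[i]=0$ or $S_b[i]=0$, if only \emph{one} shortcut edge is removed, the hub route costs $(\log\log(k)+\log(k))+1+(\log\log(k)+\log(k)-1)=2\log\log(k)+2\log(k)$, not $\geq 2\log\log(k)+2\log(k)+1$ as you wrote (one side still has its shortcut). This does not affect your conclusion, since the bit-gadget route already delivers the bound $2\log\log(k)+2\log(k)$ and you correctly take the minimum.
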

	
	\begin{claim}\label{e(li)}
		If $0\leq i\leq k-1$ is such that $S_a[i]=1$ and $S_b[i]=1$, then $e(\ell_i)= 2\log\log(k)+2\log(k)-1$.
		\begin{proof}
			We show that for any $v\in V$, it holds that $d(\ell_i,v)\leq 2\log\log(k)+2\log(k)-1$. There are 2 cases:
			\begin{enumerate}
				\item $v\in V_a$: Here, there are 7 sub-cases:
				\begin{enumerate}
					\item $v\in \{\ell_k,\ell_{k+1}\}$: By the construction, $d(\ell_i,\ell_k)=\log\log(k)+\log(k)-1$, and since $S_a[i]=1$, it holds that $d(\ell_i,\ell_{k+1})=\log\log(k)+\log(k)-1$ as well.
					\item $v\in L$: In this case, $\ell_i$ can use the node $\ell_k$ in order to reach any node in $L$ by a path of length $2\log\log(k)+2\log(k)-2$, since the distance between any $\ell_i$ and $\ell_k$ is $\log\log(k)+\log(k)-1$.
					\item $v\in P(x_1,x_2)\cup T(x_1)\setminus L$: Note that $d(\ell_i,x_1)=\log(k)$, and $x_1$ can reach any node in $T(x_1)$ in $\log(k)$ steps, and any node in $P(x_1,x_2)$ in $\log(k)+2\log\log(k)-1$ steps. Thus, $d(\ell_i,v)\leq 2\log\log(k)+2\log(k)-1$.
					\item $v\in F\cup T$: Note that for each $0\leq j \leq \log(k)-1$, it holds that the distance from $\ell_i$ to one of the nodes in $\{f_j,t_j\}$ is $\log\log(k)+\log(k)-1$, and since we keep the edges between the nodes $f_j\in F$ and $t_j\in T$ for each $0\leq j \leq \log(k)-1$, it holds that $d(\ell_i,v)\leq \log(k)+\log\log(k)$.
					\item $v\in \bigcup_{i\in [k-1]} T(\ell_i)\bigcup_{j\in [\log(k)-1]} T(f_j)\bigcup_{j\in [\log(k)-1]} T(t_j)$: Note that $v$ is at distance at most $\log(k)+\log\log(k)-2$ from some node in $F\cup T$. In this case $d(\ell_i,v)\leq 2\log(k)+2\log\log(k)-2$ by case 1(d).
					\item $v\in T(\ell_k)\cup P(\ell_k)\setminus L$: Note that $v$ is at distance at most $\log(k)+\log\log(k)-1$ from $\ell_k$. In this case, $d(\ell_i,v)\leq 2\log(k)+2\log\log(k)-2$ by case 1(a).
					\item $v\in T(\ell_{k+1})\cup P(\ell_{k+1})\setminus L$: Note that $v$ is at distance at most $\log(k)+\log\log(k)-1$ from $\ell_{k+1}$. In this case, $d(\ell_i,v)\leq 2\log(k)+2\log\log(k)-2$ by case 1(a).
				\end{enumerate}
				\item $v\in V_b$: Here, there are 6 cases:
				\begin{enumerate}
					\item $v\in \{r_k,r_{k+1}\}$: Since $S_a[i]=1$, $d(\ell_i,r_{k+1})=d(\ell_i,\ell_{k+1})+d(\ell_{k+1},r_{k+1})=\log\log(k)+\log(k)$, and $d(\ell_i,r_{k})=\log\log(k)+\log(k)+1$.
					\item $v\in R$: By Claims ~\ref{deg3FirstDirection} and ~\ref{deg3SecondDirection}, it holds that $d(\ell_i,v)=2\log\log(k)+2\log(k)-1$.
					\item $v\in F'\cup T'$: By case 1(d), it holds that $d(\ell_i,v)\leq \log(k)+\log\log(k)+1$.
					\item $v\in \bigcup_{i\in [k-1]} T(r_i)\bigcup_{j\in [\log(k)-1]} T(f'_j)\bigcup_{j\in [\log(k)-1]} T(t'_j)$: Note that $v$ is at distance at most $\log(k)+\log\log(k)-2$ from some node in $F'\cup T'$. In this case $d(\ell_i,v)\leq 2\log(k)+2\log\log(k)-1$ by case 2(c).
					\item $v\in T(\ell_k)\cup P(r_k)\setminus L$: Note that $v$ is at distance at most $\log(k)+\log\log(k)-1$ from $r_k$. In this case $d(\ell_i,v)\leq 2\log(k)+2\log\log(k)-1$ by case 2(a).
					\item $v\in T(\ell_{k+1})\cup P(r_{k+1})\setminus L$: Note that $v$ is at distance at most $\log(k)+\log\log(k)-1$ from $r_{k+1}$. In this case $d(\ell_i,v)\leq 2\log(k)+2\log\log(k)-1$ by case 2(a).
				\end{enumerate}
			\end{enumerate}
			
		\end{proof}
	\end{claim}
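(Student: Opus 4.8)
The plan is to reduce from $k$-bit Set-Disjointness so that the \emph{radius} of the degree-reduced graph already distinguishes the two instances. Writing $\lambda = 2\log\log(k)+2\log(k)-1$, I will show that the radius is at most $\lambda$ when $S_a$ and $S_b$ intersect and at least $\lambda+1$ when they are disjoint; since these ranges are separated by one, any exact-radius protocol decides Disjointness. The complementary route took the disjoint side and the reduction as its workhorse, so here I make the \emph{intersecting} side the heart of the argument: I establish Claim~\ref{e(li)}, namely that whenever $S_a[i]=S_b[i]=1$ the node $\ell_i$ reaches every vertex of $G$ within $\lambda$ steps, which yields $e(\ell_i)=\lambda$ and hence a radius of at most $\lambda$.

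The engine for this reachability bound is a small set of ``hub distances'' that the balanced tree heights are designed to equalize. First I would record that $d(\ell_i,\ell_k)=\log\log(k)+\log(k)-1$ (the length-$(\log\log(k)-1)$ attachment of Step 7 feeding into the height-$\log(k)$ tree of Step 8, with one step saved because the attachment layer is itself a tree leaf layer), that $d(\ell_i,\ell_{k+1})=\log\log(k)+\log(k)-1$ as well because $S_a[i]=1$ keeps the edge $\ell_i$--$q(\ell_i)$ present, that $d(\ell_i,x_1)=\log(k)$, and that $\ell_i$ lies at distance $\log\log(k)+\log(k)-1$ from the representatives of its bits in the bit-gadget. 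Every remaining vertex is within $\log\log(k)+\log(k)-1$ of one of these hubs, so routing $v$ through its nearest hub bounds $d(\ell_i,v)$ by a sum of two such legs. The crucial point is the shared-leaf-layer savings of Steps 6 and 8: each of the two legs saves one step against the naive $2(\log\log(k)+\log(k))$, for $-2$ total, while the single cross edge of the bit-gadget (or the single $\ell_{k+1}$--$r_{k+1}$ edge) adds one back, so every two-leg sum evaluates to at most $\lambda$. The eccentricity-realizing routes — to $r_i$ through $\ell_{k+1},r_{k+1}$, to each $r_j$ with $j\neq i$ through a differing-bit cross edge (Claim~\ref{deg3FirstDirection}), and to the far end $x_2$ of the $x$-gadget — all hit exactly $\lambda$, confirming $e(\ell_i)=\lambda$.

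For the disjoint direction I would invoke Claim~\ref{deg3SecondDirection}: disjointness means every index has a zeroed bit, so $d(\ell_i,r_i)\geq\lambda+1$ for all $i$; combining this with a location lemma in the spirit of Observation~\ref{ex-Ra-min-Ecc-in-L} (any vertex must pass through some $L$- or $R$-vertex to reach the opposite cluster, whose cross distances are all $\geq\lambda+1$) forces $e(v)\geq\lambda+1$ for every $v$, hence radius at least $\lambda+1$. It then remains to run the reduction: the maximum degree is $5$ (already verified), so the graph is sparse of constant degree, and the communication-cut is exactly the $\Theta(\log k)$ bit-gadget cross edges together with $\ell_{k+1}$--$r_{k+1}$, untouched by the tree expansions. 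Simulating a radius protocol therefore costs $O(\log k\cdot\log n)$ bits per round, so the $\Omega(k)$ lower bound for $k$-bit Disjointness gives $\Omega(k/\log^2 n)$ rounds. Since the vertex count is dominated by the $\Theta(\log k)$ bit-gadget trees of size $O(k)$ and the $\Theta(k)$ trees of size $O(\log k)$, we have $n=\Theta(k\log k)$, whence $k=\Theta(n/\log n)$ and the bound sharpens to $\Omega(n/\log^3 n)$.

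The main obstacle is the exact-$\lambda$ bookkeeping in the intersecting case: because the degree reduction expands different edges by different amounts ($\log(k)$ for the trees around the bit-gadget and the $\ell_k,\ell_{k+1}$ hubs, $\log\log(k)$ for the attachments of $L$ and $R$), one must check uniformly that every two-leg route sums to at most $\lambda$ and that the eccentricity-realizing routes equal $\lambda$ rather than $\lambda+1$. It is precisely the shared-leaf-layer savings that preserve the gap of exactly one between the intersecting and disjoint radii, so any miscounting of these heights or savings would either collapse the gap or inflate it; this balancing is where the care must go.
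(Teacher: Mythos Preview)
Your proposal is correct and follows essentially the same approach as the paper's proof. You organize the case analysis around a small set of hub nodes ($\ell_k$, $\ell_{k+1}$, $x_1$, and the bit-gadget representatives) with known distances from $\ell_i$, then bound $d(\ell_i,v)$ for every other $v$ by routing through its nearest hub; the paper does exactly this, just presented as an explicit enumeration of vertex types (the seven sub-cases for $V_a$ and six for $V_b$) rather than as a hub-and-leg abstraction. One small slip: your blanket statement that every remaining vertex lies within $\log\log(k)+\log(k)-1$ of a hub is not quite uniform---the far end $x_2$ of the $x$-gadget sits at distance $\log(k)+2\log\log(k)-1$ from $x_1$---but since $d(\ell_i,x_1)=\log(k)$ is correspondingly shorter, the two-leg sum still lands on $\lambda$, and you handle this case correctly when you list it among the eccentricity-realizing routes.
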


	\begin{lemma}\label{lemma:mainDeg}
		If the two sets of Alice and Bob are disjoint, then the radius of $G$ is at least $2\log\log(k)+2\log(k)$. Otherwise, there is some $0\leq i\leq k-1$ such that $S_a[i]=1$ and $S_b[i]=1$ and the radius of $G$ is at most $e(\ell_i)\leq 2\log\log(k)+2\log(k)-1$.
		\begin{proof}
			Consider the case in which the two sets are not disjoint. By Claim ~\ref{deg3SecondDirection}, for all the nodes in $L$, it holds that $d(\ell_i,r_i)\geq 2\log\log(k)+2\log(k)$. Note that for all the nodes $u\in V\setminus (L\cup T(x_1)\cup P(x_1,x_2))$, it holds that $d(u,x_2)\geq 2\log\log(k)+2\log(k)$, and for all the nodes $u\in T(x_1)\cup P(x_1,x_2)$ it holds that $d(u,r_i)\geq 2\log\log(k)+2\log(k)$ for each $0\leq i\leq k-1$. Thus, the radius of $G$ is at least $2\log\log(k)+2\log(k)$ as well. The second part of the claim follows directly from Claim ~\ref{e(li)}.
		\end{proof}
	\end{lemma}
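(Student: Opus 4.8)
The plan is to prove the two directions separately; write $\rho = 2\log(k)+2\log\log(k)$ for the threshold. The ``not disjoint'' direction is immediate from the machinery already in place: if some index $i$ satisfies $S_a[i]=S_b[i]=1$, then Claim~\ref{e(li)} gives $e(\ell_i)=\rho-1$, and since the radius is $\min_{u}e(u)$ we get radius $\leq \rho-1$. No further work is needed for that half.

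The substance is the ``disjoint'' direction, where I must show $e(u)\geq\rho$ for \emph{every} node $u$, i.e.\ exhibit for each $u$ a witness $v$ with $d(u,v)\geq\rho$. The single structural fact that drives the whole argument is that, after steps (1)--(3) of the construction, the subgraph induced on $T(x_1)\cup P(x_1,x_2)$ is attached to the rest of $G$ \emph{only} through the leaves $L$ of the tree $T(x_1)$: step (1) deletes $x_3$ so that $x_2$ is a dead end, and every internal tree node and every internal path node has all its neighbors inside the gadget. Moreover every leaf sits at distance exactly $\rho-1$ from $x_2$ (tree height $\log(k)$ plus path length $\log(k)+2\log\log(k)-1$).

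With this in hand I would split on the location of $u$. First, if $u=\ell_i\in L$, then under disjointness at least one of $S_a[i],S_b[i]$ is $0$, so Claim~\ref{deg3SecondDirection} gives $d(\ell_i,r_i)\geq\rho$ and the \emph{matching} partner $r_i$ serves as witness; the matching index is essential, since Claim~\ref{deg3FirstDirection} only yields $d(\ell_i,r_j)\leq\rho-1$ for $j\neq i$. Second, if $u\in(T(x_1)\cup P(x_1,x_2))\setminus L$, I take any $r_i$ as witness: a shortest $u$-to-$r_i$ path must leave the gadget through some leaf $\ell_{j}$, and splitting the path at that leaf gives $d(u,r_i)=d(u,\ell_{j})+d(\ell_{j},r_i)$; since $u\notin L$ we have $d(u,\ell_{j})\geq 1$, and combining Claims~\ref{deg3FirstDirection} and~\ref{deg3SecondDirection} gives $d(\ell_{j},r_i)\geq\rho-1$ for \emph{every} $j$ (equal to $\rho-1$ when $j\neq i$, and $\geq\rho$ when $j=i$ by disjointness), so $d(u,r_i)\geq\rho$. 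Third, if $u\in V\setminus(L\cup T(x_1)\cup P(x_1,x_2))$, I use $x_2$ as witness: by the portal observation any $u$-to-$x_2$ path enters the gadget through some leaf $\ell_{j}$, and since $u$ is strictly outside, $d(u,\ell_{j})\geq 1$ while $d(\ell_{j},x_2)=\rho-1$, whence $d(u,x_2)\geq\rho$. The three cases together yield $e(u)\geq\rho$ for all $u$, so radius $\geq\rho$.

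I expect the main obstacle to be the portal argument underlying the second and third cases: one must argue rigorously that any shortest path from $u$ to $x_2$ (respectively to $r_i$ from inside the gadget) is forced to traverse a full leaf-to-target segment of length $\rho-1$, so that the one unavoidable extra step out of $u$ (guaranteed precisely because $u\notin L$) tips the total to $\geq\rho$. This rests on the fiddly but routine verification that, after all the tree and path replacements of the ten-step construction, the only edges leaving $T(x_1)\cup P(x_1,x_2)$ are incident to $L$, and that all leaves are equidistant from $x_2$. Extracting the \emph{uniform} bound $d(\ell_j,r_i)\geq\rho-1$ for all $j$ (and not merely the matching index) from Claims~\ref{deg3FirstDirection} and~\ref{deg3SecondDirection} is the other point that needs care.
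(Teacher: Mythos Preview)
Your approach is essentially identical to the paper's: the same three-way case split on whether $u\in L$, $u\in T(x_1)\cup P(x_1,x_2)$, or $u$ lies outside the $x$-gadget, with the same witnesses ($r_i$, $r_i$, and $x_2$ respectively). The paper simply asserts the three distance lower bounds without justification, whereas you spell out the portal argument that underlies them; in that sense your write-up is more complete than the paper's own proof.

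One small point worth flagging: in your Case~2 you invoke Claim~\ref{deg3FirstDirection} to get $d(\ell_j,r_i)=\rho-1$ when $j\neq i$, but that claim only states the \emph{upper} bound $d(\ell_j,r_i)\le\rho-1$. The matching lower bound $d(\ell_j,r_i)\ge\rho-1$ is true (every $\ell_j$--$r_i$ path must reach a cut edge, and both the bit-gadget route and the $\ell_{k+1}$--$r_{k+1}$ route cost at least $\log\log(k)+\log(k)-1$ on each side), but it is not contained in either cited claim and needs its own short argument. You already identify this as the point ``that needs care'' in your obstacles paragraph, so you are aware of the gap; just note that it cannot literally be extracted from Claims~\ref{deg3FirstDirection} and~\ref{deg3SecondDirection} as stated.
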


	\paragraph{Proof of Theorem ~\ref{ExactRadsmallDeg}} As described before, we add $O(k)$ binary trees of size $O(\log(k))$, and $O(\log(k))$ binary trees of size $O(k)$, and $O(k)$ paths of size $O(\log\log(k))$. Thus the total number of nodes added to the construction described in Section ~\ref{ExactRad} in $O(k\log(k))$. Thus, $k=\Omega(\frac{n}{\log(n)})$ and by Lemma ~\ref{lemma:mainDeg} that any algorithm for computing the radius of $G$ requires at least $\Omega(\frac{n}{\log^3(n)})$ rounds, even in graphs with $\Delta=5$.
	
	\begin{remark}
		We remark that we believe that by a simple modification we can obtain the same result for graphs with maximum degree 3. This is done by replacing each node of degree 4 by 2 nodes, each of degree 3, and by replacing each node of degree 5 by 3 nodes, each with degree 3.
	\end{remark}

	\section{Verification of Spanners}\label{spanners}
	
	In this section we show that verifying an $(\alpha,\beta)$-spanner is also a hard task in the $CONGEST$ model.
	
	\begin{theorem-repeat}{thm:spanners}
		\ThmSpanners
	\end{theorem-repeat}

	In particular, this gives a near-linear lower bound when $\alpha,\beta$ are at most polylogarithmic in $n$.
	
	\subsection{Graph Construction}
	
	\begin{figure}[h]
		\begin{center}
			\includegraphics[scale=0.6]{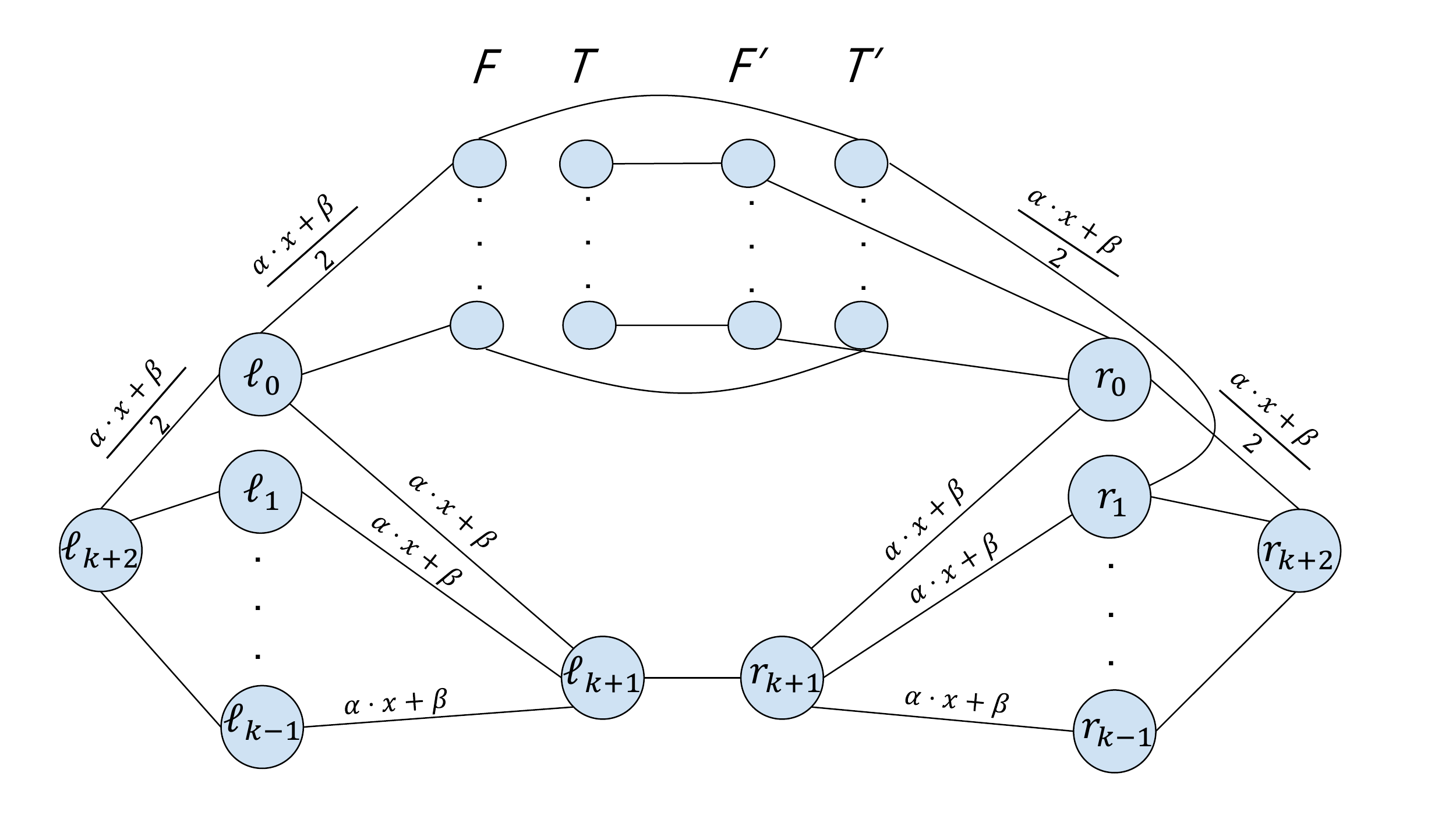}
		\end{center}
		\caption{Graph Construction (spanner verification).}
	\end{figure}

	Let $P=\alpha x +\beta$. For our proof for \emph{unweighted} graphs we need $x$ to be equal to 1. However, we keep the notation $x$ to prove the same lower bound for \emph{weighted} graphs as well.
	We apply the following changes to the construction described in Section~\ref{rad} (see also Figure 8):\footnote{Some of the edges are omitted, for clarity.}
	
	\begin{enumerate}
		\item Remove the nodes $\{x_1,x_2,x_3\}$ and their incident edges.
		\item Remove the nodes $\{\ell_k,r_k\}$ and their incident edges.
		\item Remove all the edges $(u,v)$ such that $u\in F$ and $v\in T$. Similarly, Remove all the edges $(u,v)$ such that $u\in F'$ and $v\in T'$.
		\item Replace all the edges $(u,v)$ such that $u\in L$ and $v\in (F\cup T)$ by paths of length $\frac{P}{2}$. Similarly, Replace all the edges $(u,v)$ such that $u\in R$ and $v\in (F'\cup T')$ by paths of length $\frac{P}{2}$.
		\item Connect each $\ell_i\in L$ to $\ell_{k+1}$ by a path of length $P$. Similarly, Connect each $r_i\in R$ to $r_{k+1}$ by a path of length $P$.
		\item Add two additional nodes $\{\ell_{k+2},r_{k+2}\}$. Connect $\ell_{k+2}$ to each of the nodes in $L$ by a path of length $\frac{P}{2}$. Similarly, connect $r_{k+2}$ to each of the nodes in $R$ by a path of length $\frac{P}{2}$.
	\end{enumerate}
	
	The following Observation follows directly from the construction and the discussions in previous sections.
	
	\begin{observation} \label{spannersOb}
		For every $0\leq i,j\leq k-1$ such that $i\neq j$, it holds that $d(\ell_i,r_j)=\alpha x +\beta + 1$. Otherwise, $d(\ell_i,r_i)=2(\alpha x +\beta) + 1$.
	\end{observation}
	
	\subsection{Reduction from Set-Disjointness}
	
	Each player receives an input string ($S_a$ and $S_b$) of $k$ bits. If $S_a[i]=1$, Alice adds a path of length $x$ between the nodes $\ell_i$ and $\ell_{k+1}$. Similarly, if $S_b[i]=1$, Bob adds a path of length $x$ between the nodes $r_i$ and $r_{k+1}$.
	Denote by $Dist$ the set of all the edges that are added according to the strings of Alice and Bob.
	Now we define the subgraph $H$ to be $E\setminus Dist$. Let $d_G(u,v)$ denote the distance between the nodes $u$ and $v$ in $G$. Similarly, let $d_H(u,v)$ denote the distance between the nodes $u$ and $v$ in $H$.
	
	\begin{lemma}\label{Firstdirection} If there is some $0\leq i\leq k-1$ such that $S_a[i]=1$ and $S_b[i]=1$, then $H$ is not an $(\alpha,\beta)$ spanner of $G$ for any $\alpha<\beta+1$.
		\begin{proof}
			According to Observation~\ref{spannersOb}, it holds that $d_H(\ell_i,r_i)=2(\alpha x +\beta) + 1$, while $d_G(\ell_i,r_i)=2x + 1$. And since $\alpha (2 x +1)+\beta < 2(\alpha x +\beta) + 1$ for any $\alpha<\beta+1$, it holds that $H$ is not an $(\alpha,\beta)$ spanner of $G$.
		\end{proof}
	\end{lemma}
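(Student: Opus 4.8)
The plan is to prove the stated implication by exhibiting a single explicit witness pair whose $H$-distance blows up relative to its $G$-distance. Fix the index $0\le i\le k-1$ guaranteed by the hypothesis, so that $S_a[i]=1$ and $S_b[i]=1$, and take the pair $(\ell_i,r_i)$. The whole argument then reduces to comparing $d_H(\ell_i,r_i)$ against $\alpha\cdot d_G(\ell_i,r_i)+\beta$ and showing that the spanner inequality is violated at this pair. Note that it suffices to produce \emph{one} violating pair, which is why no case analysis over all pairs is needed here.

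First I would bound $d_G(\ell_i,r_i)$ from above. Because both bits are set, both shortcut paths of $Dist$ are present in $G$: Alice's path of length $x$ from $\ell_i$ to $\ell_{k+1}$, the base edge $(\ell_{k+1},r_{k+1})$, and Bob's path of length $x$ from $r_{k+1}$ to $r_i$. Concatenating these yields $d_G(\ell_i,r_i)\le 2x+1$, which is all that is needed on the $G$ side, since a smaller value of $d_G$ only makes the spanner requirement harder to meet and thus strengthens the failure.

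Next I would pin down $d_H(\ell_i,r_i)$ from below. Since $H=E\setminus Dist$ deletes exactly the two input-dependent shortcut paths, the graph $H$ coincides with the base construction of the preceding subsection (before any input-dependent edges are added), so Observation~\ref{spannersOb} applies verbatim and gives $d_H(\ell_i,r_i)=2(\alpha x+\beta)+1$. The one point deserving care is that no alternative short route survives in $H$ once the shortcuts are removed, but this is precisely the content of Observation~\ref{spannersOb}: it already certifies that every $\ell_i$-to-$r_i$ path in the base graph (through the bit-gadget, through the $\ell_{k+2},r_{k+2}$ hubs, or through $\ell_{k+1},r_{k+1}$) has full length $2P+1$ with $P=\alpha x+\beta$.

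Finally I would close with the algebraic check. Substituting the two bounds, an $(\alpha,\beta)$-spanner would require $2(\alpha x+\beta)+1\le \alpha(2x+1)+\beta$, i.e.\ $\beta+1\le\alpha$. Since the hypothesis assumes $\alpha<\beta+1$, this is false, so the inequality $d_H(\ell_i,r_i)\le \alpha\,d_G(\ell_i,r_i)+\beta$ fails and $H$ cannot be an $(\alpha,\beta)$-spanner of $G$. The slack here is exactly $\beta+1-\alpha$, so the threshold $\alpha<\beta+1$ in the statement is tight and the hypothesis is matched precisely. I do not expect any genuine obstacle: the only nontrivial ingredient, that removing $Dist$ forces $d_H(\ell_i,r_i)$ back up to $2P+1$, is supplied directly by Observation~\ref{spannersOb}, and everything else is bookkeeping of path lengths together with this one-line inequality.
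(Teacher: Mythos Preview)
Your proof is correct and follows essentially the same approach as the paper: exhibit the pair $(\ell_i,r_i)$, use the shortcut paths through $\ell_{k+1},r_{k+1}$ to get $d_G(\ell_i,r_i)\le 2x+1$, invoke Observation~\ref{spannersOb} for $d_H(\ell_i,r_i)=2(\alpha x+\beta)+1$, and check the inequality $\alpha(2x+1)+\beta<2(\alpha x+\beta)+1\iff\alpha<\beta+1$. One minor wording quibble: $H=E\setminus Dist$ removes \emph{all} input-dependent shortcut paths (for every index, not just the two for $i$), which is precisely why $H$ coincides with the base construction and Observation~\ref{spannersOb} applies verbatim; your subsequent sentence says this correctly, so the argument is unaffected.
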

	
	\begin{lemma}\label{Seconddirection}
		If the two sets of Alice and Bob are disjoint, then $H$ is an $(\alpha,\beta)$ spanner of $G$ for any $\alpha\geq 1$.
		\begin{proof}
			We show that $d_H(u,v)\leq \alpha d_G(u,v) + \beta$ for any $u,v\in V$.
			\begin{enumerate}
				\item $u=\ell_i\in L$ and $v=r_j\in R$ such that $i\neq j$: Note that $d_G(\ell_i,r_j)= \min(2 x+1, \alpha x +\beta + 1)$, while $d_H(\ell_i,r_j)=\alpha x +\beta + 1$ by Observation~\ref{spannersOb}. Thus, the problematic case is when $d_G(\ell_i,r_j)=2x+1$, for which $\alpha(d_G(\ell_i,r_j))+\beta\geq\alpha (2x+1)+\beta=2 \alpha x + \alpha + \beta\geq \alpha x + \beta + 1=d_H(\ell_i,r_j)$ for any $\alpha\geq 1$.
				\item $u=\ell_i\in L$ and $v=r_j\in R$ such that $i=j$: Note that $d_G(\ell_i,r_i)\geq x+\alpha x + \beta + 1$, since either $\ell_i$ is not connected to $\ell_{k+1}$ by a path of length $x$, or $r_i$ is not connected to $r_{k+1}$ by a path of length $x$. While $d_H(\ell_i,r_i)=2(\alpha x +\beta) + 1$ by Observation~\ref{spannersOb}. Note that $\alpha(d_G(\ell_i,r_i))+\beta\geq\alpha(x+\alpha x + \beta + 1)+\beta\geq \alpha x+\alpha^2 x+\alpha \beta +\alpha+\beta\geq \alpha x+\alpha x+\beta+1+\beta\geq 2(\alpha x+\beta)+1=d_H(\ell_i,r_i)$, for any $\alpha\geq 1$.
				\item $u\in V_a$ and $v\in V_a$: Here, there are two cases:
				\begin{enumerate}
					\item The shortest path in $G$ does not visit the node $\ell_{k+1}$. In this case $d_G(u,v)=d_H(u,v)$.
					\item The shortest path in $G$ visits the node $\ell_{k+1}$, thus, it visits some node $\ell_i$. In this case $d_G(u,v)$ can be written as the sum of three distances, $d_G(u,v)=d_G(u,\ell_i)+d_G(\ell_i,\ell_{k+1})+d_G(\ell_{k+1},v)$, if none of the distances $d_G(u,\ell_i),d_G(\ell_{k+1},v)$ is using the disjointness edges, then the claim holds by the fact that $\alpha (d_G(\ell_i,\ell_{k+1}))+\beta\geq d_H(\ell_i,\ell_{k+1})$. Otherwise, $d_G(u,v)$ can be written as the sum of three distances, $d_G(u,v)=d_G(u,\ell_i)+d_G(\ell_i,\ell_j)+d_G(\ell_j,v)$, note that $d_G(\ell_i,\ell_j)\geq 2x$, while $d_H(\ell_i,\ell_j)=\alpha x+\beta$ (the path through the node $\ell_{k+2}$), and the claim holds by the fact that $\alpha 2x+\beta\geq \alpha x+\beta$.
				\end{enumerate}
				\item $u\in V_b$ and $v\in V_b$: This case is symmetric to the previous one.
				\item $u\in V_a$ and $v\in V_b$: Here, there are four cases:
				\begin{enumerate}
					\item The shortest path in $G$ does not visit the nodes $\ell_{k+1},r_{k+1}$. In this case $d_G(u,v)=d_H(u,v)$.
					\item The shortest path in $G$ visits only the node $\ell_{k+1}$ and does not visit $r_{k+1}$: This case is very similar to the case 3.b.
					\item The shortest path in $G$ visits only the node $r_{k+1}$ and does not visit $\ell_{k+1}$: This case is symmetric to the previous case.
					\item The shortest path visits the two nodes $\ell_{k+1},r_{k+1}$, thus, it visits some two nodes $\ell_i,r_j$. In this case $d_G(u,v)$ can be written as the sum of three distances, $d_G(u,v)=d_G(u,\ell_i)+d_G(\ell_i,r_j)+d_G(r_j,v)$, in case $i\neq j$, it holds that $\alpha(d_G(u,v))+\beta\geq d_H(u,v)$ by the case in which $u=\ell_i\in L$ and $v=r_j\in R$ such that $i\neq j$ which was proved in the first case of this proof. Otherwise, $\alpha(d_G(u,v))+\beta\geq d_H(u,v)$ by the case in which $u=\ell_i\in L$ and $v=r_j\in R$ such that $i=j$ which was proved in the second case of this proof.
				\end{enumerate}
			\end{enumerate}
		\end{proof}
	\end{lemma}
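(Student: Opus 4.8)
The plan is to verify the spanner condition $d_H(u,v) \le \alpha\, d_G(u,v) + \beta$ for every pair $u,v \in V$, exploiting that $H = E \setminus Dist$ differs from $G$ only in the disjointness paths, each of $G$-length $x$, joining some $\ell_i$ to $\ell_{k+1}$ or some $r_i$ to $r_{k+1}$. Since $H \subseteq G$ we always have $d_G \le d_H$, so the inequality is immediate (with room to spare, as $\alpha \ge 1$ and $\beta \ge 0$) whenever some shortest $u$--$v$ path in $G$ avoids $Dist$ entirely, because then $d_G(u,v) = d_H(u,v)$. Thus the whole burden falls on pairs whose every shortest $G$-path traverses at least one disjointness edge, i.e. passes through $\ell_{k+1}$ or $r_{k+1}$ via a shortcut.

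The single reusable estimate I would isolate first is a per-shortcut replacement bound. A disjointness edge from $\ell_i$ to $\ell_{k+1}$, of $G$-length $x$, is replaced in $H$ by the length-$P$ path guaranteed by the construction, and since $P = \alpha x + \beta$ we get $d_H(\ell_i,\ell_{k+1}) = \alpha\, d_G(\ell_i,\ell_{k+1}) + \beta$ exactly. The one danger is a shortest $G$-path that enters $\ell_{k+1}$ by one shortcut and leaves by another, i.e. a segment $\ell_i \to \ell_{k+1} \to \ell_j$ of $G$-length $2x$; here the auxiliary node $\ell_{k+2}$ does the work, since the two length-$\frac{P}{2}$ paths give $d_H(\ell_i,\ell_j) = P = \alpha x + \beta \le \alpha(2x) + \beta$. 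The symmetric statements hold on the $R$/$r_{k+1}$/$r_{k+2}$ side.

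With these building blocks, I would organize the remaining work around the pairs $\ell_i,r_j$ of Observation~\ref{spannersOb}. For $i \ne j$ we have $d_G(\ell_i,r_j) = \min(2x+1,\,P+1)$ while $d_H(\ell_i,r_j) = P+1$, so the only nontrivial case is $d_G = 2x+1$, where $\alpha(2x+1)+\beta \ge P+1$ for $\alpha \ge 1$. The case $i = j$ is where the disjointness hypothesis enters, and I expect it to be the main obstacle: because the sets are disjoint, at least one of $\ell_i,r_i$ lacks its shortcut, forcing $d_G(\ell_i,r_i) \ge x + P + 1$, whereas $d_H(\ell_i,r_i) = 2P+1$, and one must check $\alpha(x+P+1)+\beta \ge 2P+1$ for all $\alpha \ge 1$. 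This estimate is tight precisely because it is the same $i=j$ pair on which Lemma~\ref{Firstdirection} breaks the spanner property in the non-disjoint case, so the inequality cannot have any slack to discard.

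Finally I would reduce an arbitrary pair to these base pairs by splitting its shortest $G$-path at the shortcut endpoints it visits. If it visits only $\ell_{k+1}$ (the case $u,v \in V_a$), write $d_G(u,v) = d_G(u,\ell_i) + d_G(\ell_i,\ell_{k+1}) + d_G(\ell_{k+1},v)$ and apply the single-shortcut replacement when the flanking pieces are shortcut-free, or the $\ell_{k+2}$ bound when both flanking pieces themselves use shortcuts. The crucial point is that in either branch the flanking pieces are disjointness-free, so $d_H = d_G$ on them and only the middle segment incurs the additive $+\beta$; summing and using $\alpha \ge 1$ to absorb the flanking terms then yields $d_H(u,v) \le \alpha\, d_G(u,v)+\beta$ with a single additive error. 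The mixed case $u \in V_a$, $v \in V_b$ is the same argument carried out at both $\ell_{k+1}$ and $r_{k+1}$, bottoming out in the $\ell_i,r_j$ estimates above, which completes the verification.
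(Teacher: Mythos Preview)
Your proposal is correct and follows essentially the same approach as the paper's proof: both handle the base pairs $\ell_i,r_j$ (with $i\neq j$ and $i=j$) directly via Observation~\ref{spannersOb} and the disjointness hypothesis, then reduce an arbitrary pair $u,v$ to these by decomposing a shortest $G$-path at its entries into $\ell_{k+1}$ and $r_{k+1}$, using the $\ell_{k+2}$/$r_{k+2}$ detour for the double-shortcut segment. Your organization is somewhat cleaner in that you isolate the ``per-shortcut replacement'' and ``double-shortcut via $\ell_{k+2}$'' bounds up front as reusable lemmas, whereas the paper weaves them into the case split, but the content and the key inequalities (in particular the tight check $\alpha(x+P+1)+\beta\ge 2P+1$ for $i=j$) are identical.
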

	
	\begin{observation}
		\label{observation:SpannerSparsity}
		Note that the number of edges $E(G)$ is equal to $\Theta(E(H))$. However, it is straightforward to see that by adding a clique of size $k$ to $G$ and connecting it to some arbitrary node we can control the number of edges in $G$.  We add $k$ additional edges to $H$ in order to span this clique, giving that the number of edges in $G$ is $\Theta(n^2)$, while the number of edges in $H$ is $\Theta(n\log(n))$ (and if $\beta>2$ the number of edges in $H$ is equal to $\Theta(n)$). If we want $E(H)$ to match the known bounds for the possible sparsity of an $(\alpha,\beta)$-spanner for general graphs, we simply add more edges of the clique to $H$.
	\end{observation}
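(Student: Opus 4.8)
The plan is to split the observation into three parts: the edge-count identity $E(G)=\Theta(E(H))$ for the base construction, the clique augmentation that makes $G$ dense, and a check that neither step disturbs the reduction behind Lemmas~\ref{Firstdirection} and~\ref{Seconddirection}.

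\textbf{First}, I would bound $|Dist|$ directly. For every index $i$ with $S_a[i]=1$ Alice adds a single path of length $x$ (hence $x$ edges) between $\ell_i$ and $\ell_{k+1}$, and symmetrically for Bob, so $|Dist|\le 2kx$. On the other hand, the bulk of $E(G)$ comes from the $\Theta(k\log k)$ paths of length $\Theta(P)$ that attach $L,R$ to the \emph{bit-gadget} and to $\ell_{k+1},r_{k+1},\ell_{k+2},r_{k+2}$; a direct count, exactly as in the stretched construction of Section~\ref{appDiamCons}, gives $|E(G)|=\Theta(k\log k\cdot P)$. Since $P=\alpha x+\beta\ge x$ (as $\alpha\ge 1$), we have $|Dist|=O(kx)=O(kP)$, which is a lower-order term next to $\Theta(k\log k\cdot P)$. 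Hence $E(H)=|E(G)|-|Dist|=\Theta(k\log k\cdot P)=\Theta(E(G))$, which is the first sentence of the observation.

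\textbf{Next}, to inflate $E(G)$ without touching the reduction, I would attach a clique $K$ on $\Theta(k)$ fresh nodes to $G$ through a single node $w$ of the existing graph (either by one edge, or by identifying one clique vertex with $w$). This adds $\Theta(k^2)$ edges while adding only $\Theta(k)$ nodes, so the node count stays $\Theta(k\log k\cdot P)$ and $|E(G)|$ becomes $\Theta(k^2)$, i.e.\ dense, matching $\Theta(n^2)$ up to $\poly\log$ factors. The key structural point is that $K$ is a \emph{pendant block}: any walk between two original nodes that enters $K$ must also leave through $w$, so it is never a shortest path. Consequently all distances among the original nodes are unchanged, Observation~\ref{spannersOb} holds verbatim, and Lemmas~\ref{Firstdirection} and~\ref{Seconddirection} transfer without modification. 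For the round bound of Theorem~\ref{thm:spanners} it also matters that, whichever side $w$ lies on, the entire clique is placed on that side of the partition, so the cut stays $\Theta(\log k)$ and the lower bound is preserved.

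\textbf{Finally}, to keep $H$ sparse I would include in $H$ not all of $K$ but only a sparse $(\alpha,\beta)$-spanner of it; since $K$ has diameter $1$ in $G$ this is the easiest possible case. When $\alpha+\beta\ge 2$ (in particular when $\beta>2$) a star centered at $w$'s clique endpoint, contributing $\Theta(k)$ edges, already realizes $d_H(u,v)\le 2\le \alpha\cdot d_G(u,v)+\beta$ for every pair $u,v\in K$; combined with the preservation of original-node distances, and since $\alpha\ge 1$ absorbs the single extra step needed to reach $w$, this yields $d_H\le \alpha d_G+\beta$ for every pair meeting $K$, exactly along the lines of Lemma~\ref{Seconddirection}. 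Substituting the star's $\Theta(k)$ edges and the base count $\Theta(k\log k\cdot P)$, and translating $k$ and $P$ into the node count $n$, gives the totals stated in the observation ($\Theta(n\log n)$, dropping to $\Theta(n)$ under the stronger slack); and if one instead wants $E(H)$ to match the best-known sparsity of an $(\alpha,\beta)$-spanner of a general graph, one simply substitutes such a spanner of $K$ for the star. The main obstacle is precisely this last check---that the chosen sparse spanner of the pendant clique satisfies $d_H\le \alpha d_G+\beta$ for \emph{all} pairs, including clique-to-original pairs---whereas everything concerning $L$ and $R$ is inherited unchanged from the pendant-block property.
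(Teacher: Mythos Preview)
The proposal is correct and follows the same approach the paper sketches inside the observation itself (the paper gives no separate proof): count edges to see $E(G)=\Theta(E(H))$, attach a pendant clique of size $\Theta(k)$ through a single node to blow up $|E(G)|$, and span that clique in $H$ by the $\Theta(k)$ edges of a star. Your added checks---that the pendant block leaves all original distances (hence Lemmas~\ref{Firstdirection} and~\ref{Seconddirection}) intact, that the clique sits entirely on one side so the cut stays $\Theta(\log k)$, and that $\alpha\ge 1$ handles clique-to-original pairs---are exactly the routine verifications the paper leaves implicit under ``straightforward to see''; your ``up to $\poly\log$'' caveat on the $\Theta(n^2)$ count is also appropriate, since with a clique of size $k$ and $n=\Theta(k\log k\cdot P)$ the density claim is only meant up to such factors.
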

	
	\paragraph{Proof of Theorem~\ref{thm:spanners}} From the two lemmas~\ref{Firstdirection} and~\ref{Seconddirection} we deduce that $H$ is an $(\alpha,\beta)$-spanner of $G$ if and only if the two sets of Alice and Bob are disjoint. Note that for \emph{unweighted} graphs we need $x$ to be equal to 1, thus, $n=O(k\cdot (\alpha+\beta) \log(n))$, i.e., $k=\Omega(\frac{n}{(\alpha+\beta) \log(n)})$, and as in the previous constructions from the previous sections, the size of the cut is $O(\log(n))$. Thus, the number of rounds needed for any algorithm to decide whether $H$ is an $(\alpha,\beta)$-spanner of $G$ is $\Omega(\frac{n}{(\alpha+\beta)\cdot \log^3(n)})$.
	
	\paragraph{\emph{Weighted Graphs}}
	In order to achieve a higher lower bound for $\emph{weighted}$ graphs, we replace all the $P$ and $\frac{P}{2}$ paths by edges of weights $P$ and $\frac{P}{2}$ respectively. Note that in this case $k=\Omega(n)$ and we deduce that the number of rounds needed for any algorithm to decide whether $H$ is an $(\alpha,\beta)$-spanner of $G$ is $\Omega(\frac{n}{\log^2(n)})$.

\section{Discussion}

We introduce a new technique for reducing the Set-Disjointness communication problem to distributed computation problems, in a highly efficient way.
Our reductions encode an instance of Disjointness on $k$ bits into a graph on only $\widetilde{O}(k)$ nodes and edges with a small ``communication-cut" of size $O(\log{k})$.
All previous lower bound constructions had a cut of $poly(k)$ size (e.g.,~\cite{HolzerP14,HolzerW12,SarmaHKKNPPW12,FrischknechtHW12}).
This efficiency allows us to answer several central open questions regarding the round complexity of important distance computation problems in the $CONGEST$ model.

There are several interesting directions for future work.
First, there is still a $\log{n}$ factor gap between the upper and lower bounds on the round complexity of computing the diameter in the $CONGEST$ model.
Due to the fundamentality of the diameter, we believe that it will be interesting to close this small gap.

Second, we believe our degree-reduction technique can be used also for our bounds on approximations, albeit with more involved modifications. We also believe this technique can be useful in obtaining lower bounds on constant degree graphs in additional settings, beyond CONGEST.

Third, while our ideas greatly improve the state of the art lower bounds for shortest paths problems on \emph{unweighted} graphs, their potential in the regime of \emph{weighted} graphs is yet to be explored.

Finally, following our strong barriers for sparse graphs, it is important to seek further natural restrictions on the networks that would allow for much faster distance computation.
Planar graphs are an intriguing setting in this context.
A promising recent work of Ghaffari and Haeupler~\cite{GhaffariH16} showed that computing a minimum spanning tree can be done in $\widetilde{O}(D)$ rounds in planar graphs, despite the $\widetilde{\Omega}(\sqrt{n})$ lower bound for general graphs~\cite{SarmaHKKNPPW12}. Can the diameter of a planar network be computed in $\widetilde{O}(D)$ rounds?
While the graphs in our lower bounds are highly non-planar, it is interesting to note that they have a relatively small treewidth of $O(\log{n})$.

\paragraph{Acknowledgement:} \hspace{-0.3cm} We thank Ami Paz for pointing out Observation~\ref{observation:SpannerSparsity}, and for useful discussions.

\bibliographystyle{abbrv}
\bibliography{sigproc}

\end{document}